\documentclass{llncs}

\usepackage{latexsym,graphicx}
\usepackage{amsmath,amssymb,enumerate,amsfonts}
\usepackage{authblk}
\usepackage{subfigure}
\usepackage{tikz}
\usetikzlibrary{arrows}
\usepackage[top=1in, bottom=1in, left=1in, right=1in]{geometry}
\usepackage[usenames,dvipsnames]{pstricks}
\usepackage{epsfig}
\usepackage{pst-grad} % For gradients
\usepackage{pstricks} %for embedding pspicture.
\usepackage{graphicx} %for figure environment.

\newcommand{\OPT}{\textup{OPT}}
\newcommand{\job}{\mathcal{J}}
\newcommand{\machine}{\mathcal{M}}
\newcommand{\act}{\mathbb{A}}
\newcommand{\C}{\mathbb{C}}
\newcommand{\thres}{t + \lfloor \frac{W}{2} \rfloor}
\newcommand{\level}{\textsc{Level}}
\newcommand{\allorient}{\Psi}

\newcommand{\rock}{\mathbb{R}}
\newcommand{\pebble}{\mathbb{P}}
\newcommand{\father}{\mathcal{F}}
\newcommand{\child}{\mathcal{D}}

\newcommand{\inc}{\mathfrak{i}}
\newcommand{\out}{\mathfrak{o}}
\newcommand{\neu}{\mathfrak{n}}

\spnewtheorem{numberedClaim}{Claim}{\itshape}{\rmfamily}

\pagestyle{plain}

\begin{document}

\title{A Combinatorial Approximation Algorithm for Graph Balancing with Light Hyper Edges}

\author{Chien-Chung Huang\inst{1} \and Sebastian Ott\inst{2}}
\institute{
	Chalmers University, G\"{o}teborg, Sweden \email{villars@gmail.com}
	\and Max-Planck-Institut f\"ur Informatik, Saarbr\"{u}cken, Germany \email{ott@mpi-inf.mpg.de} 
}

\maketitle

\begin{abstract}
Makespan minimization in restricted assignment $(R|p_{ij}\in \{p_j, \infty\}|C_{\max})$ 
is a classical problem in the field of machine scheduling. In a landmark paper in 1990~\cite{LST}, 
Lenstra, Shmoys, and Tardos gave a 2-approximation algorithm and proved that the problem 
cannot be approximated within 1.5 unless P=NP. The upper and lower bounds of the problem have been 
essentially unimproved in the intervening 25 years, despite several remarkable successful attempts 
in some special cases of the problem~\cite{ckl2015,Sgall,Svensson2012} recently. 

In this paper, we consider a special case called \emph{graph-balancing with light hyper edges}, where heavy jobs 
can be assigned to at most two machines while light jobs can be assigned to any number of machines. 
For this case, we present algorithms with approximation ratios strictly better than 2. Specifically, 

\begin{itemize}

\item \textbf{Two job sizes}: Suppose that light jobs have weight $w$ and heavy jobs have weight $W$, and $w < W$. 
We give a $1.5$-approximation algorithm (note that the current 1.5 lower bound is established in an even more restrictive setting~\cite{Asahiro11,SgallSODA2008}). Indeed, depending on the specific values of $w$ and $W$, 
sometimes our algorithm guarantees sub-1.5 approximation ratios. 

\item \textbf{Arbitrary job sizes}: Suppose that $W$ is the largest given weight, heavy jobs have weights 
in the range of $(\beta W, W]$, where $4/7\leq \beta < 1$, and light jobs have weights in the range of $(0,\beta W]$. 
We present a $(5/3+\beta/3)$-approximation algorithm. 
\end{itemize}

Our algorithms are purely combinatorial, without the need of solving a linear program 
as required in most other known approaches. 
 
\end{abstract}

\section{Introduction} 

Let $\job$ be a set of $n$ jobs and $\machine$ a set of $m$ machines. Each job $j \in \job$ has a \emph{weight} 
$w_j$ and can be assigned to a specific subset of the machines. An assignment $\sigma: \job \rightarrow \machine$ 
is a mapping where each job is mapped to a machine to which it can be assigned. The objective is to minimize 
the \emph{makespan}, defined as $\max_{i \in \machine} \sum_{j: \sigma(j)=i} w_j$. This is the classical 
{\sc makespan minimization in restricted assignment} $(R|p_{ij}\in \{p_j, \infty\}|C_{\max})$, itself a special case of the 
{\sc makespan minimization in unrelated machines} $(R||C_{\max})$, where a job $j$ has possibly different weight $w_{ij}$ 
on different machines $i \in \machine$. In the following, we just call them \textsc{restricted assignment} and \textsc{unrelated 
machine problem} for short. 

The first constant approximation algorithm for both problems is given by Lenstra, Shmoys, and Tardos~\cite{LST} 
in 1990, where the ratio is 2. They also show that \textsc{restricted assignment} (hence also the \textsc{unrelated 
machine problem}) cannot be approximated within 1.5 unless P=NP, even if there are only two job weights. The upper bound 
of 2 and the lower bound of 1.5 have been essentially unimproved in the intervening 25 years. How to close the gap 
continues to be 
one of the central topics in approximation algorithms. The recent book of Williamson and Shmoys~\cite{shmoyWillamsonBook} lists 
this as one of the ten open problems.

\subsubsection*{Our Result} We consider a special case of \textsc{restricted assignment}, called 
{\sc graph balancing with light hyper edges}, which is a generalization of the {\sc graph balancing} problem 
introduced by Ebenlendr, Kr\v{c}\'{a}l  and Sgall~\cite{SgallSODA2008}. There the restriction is that every job can be assigned to only two machines, and hence the problem can be interpreted in a graph-theoretic way: each machine is represented by a node, and each job is represented by an edge. The goal is to find an orientation of the edges so that the maximum weight sum of the edges oriented towards a node is minimized.
In our problem, jobs are partitioned into \emph{heavy} and \emph{light}, and we assume that heavy jobs can go to only two machines while light jobs can go to any number of machines\footnote{If some jobs can be assigned to just one machine, then it is the same as saying 
a machine has some \emph{dedicated load}. All our algorithms can handle arbitrary dedicated loads on the machines.}. 
In the graph-theoretic interpretation, light jobs are represented by hyper edges, while heavy jobs 
are represented by regular edges.

We present approximation algorithms with performance 
guarantee \emph{strictly better than} 2 in the following settings. For simplicity of presentation, 
we assume that all job weights $w_j$ are integral (this assumption is just for ease of exposition and can be easily 
removed). \\

\noindent \textbf{Two job sizes}: Suppose that heavy jobs are of weight $W$ and light jobs are of weight $w$, and 
$w < W$. We give a $1.5$-approximation algorithm, matching the general lower bound of \textsc{restricted assignment} (it should be noted that this lower bound is established in an even more restrictive setting~\cite{Asahiro11,SgallSODA2008}, where 
all jobs can only go to two machines and there are only two different job weights). This is the first time the lower bound 
is matched in a nontrivial case of \textsc{restricted assignment} (without specific restrictions on the job weight values). In fact, sometimes our algorithm achieves an approximation 
ratio strictly better than 1.5. Supposing that $w \leq \frac{W}{2}$, the ratio we get is $1 + \frac{\lfloor W/2 \rfloor}{W}$. \\[-5pt]

\noindent \textbf{Arbitrary job sizes}: Suppose that $\beta \in [4/7, 1)$ and $W$ is the largest given weight. 
A heavy job has weight in $(\beta W, W]$ while a light job has weight in $(0,\beta W]$. 
We give a $(5/3+\beta/3)$-approximation algorithm. \\[-5pt] 

\noindent Both algorithms have the running time of $\mathcal{O}\big(n^2 m^3 \log{(\sum_{j \in \job}w_j)}\big)$.\footnote{For simplicity, here we upper bound $\sum_{j \in \job}a_j$, where $a_j$ is the number of the machines $j$ can be assigned to, by $n m$.}\\

The general message of our result 
is clear: as long as the heaviest jobs have only two choices, it is relatively easy to break the barrier of 2 in the upper bound 
of \textsc{restricted assignment}. This should coincide with our intuition. The heavy jobs are in a sense 
the ``trouble-makers''. A mistake on them causes bigger damage than a mistake on lighter jobs. Restricting 
the choices of the heavy jobs thus simplifies the task.  \\

The original \textsc{graph balancing} problem assumes that all jobs can be assigned to only two machines and the algorithm of Ebenlendr et al.~\cite{SgallSODA2008}
gives a 1.75-approximation. According to~\cite{sgallCommunication}, their algorithm can be extended to our setting:
given any $\beta \in [0.5, 1)$, they can obtain a $(3/2 + \beta/2)$-approximation. 
Although this ratio is superior to ours, let us emphasize two interesting aspects of our approach. 

(1) The algorithm of Ebenlendr et al. requires solving a linear program (in fact, almost all known algorithms 
for the problem are LP-based), while our algorithms are purely combinatorial. In addition to 
the advantage of faster running time, our approach introduces new proof techniques (which 
do not involve linear programming duality).  

(2) In \textsc{graph balancing}, Ebenlendr et al. showed that with only two job weights and 
dedicated loads on the machines, their strongest LP has the integrality gap of 1.75, while we can 
break the gap.
Our approach thus offers a possible angle to circumvent the barrier posed by the integrality gap, and has the potential of seeing further improvement. 

Before explaining our technique in more detail, we should point out another interesting connection with 
a result of Svensson~\cite{Svensson2012} for general \textsc{restricted assignment}. 
He gave two local search algorithms, which terminate (but 
it is unknown whether in polynomial time) and (1) with two job weights $\{\epsilon,1\}$, $0< \epsilon < 1$, 
the returned solution has an approximation ratio of $5/3+ \epsilon$, and (2) with arbitrary job weights, the returned 
solution has an approximation ratio of $\approx 1.94$. It is worth noting that his analysis is done via the 
primal-duality of the configuration-LP (thus integrality gaps smaller than two for the configuration-LP are implied). 
With two job weights, our algorithm 
has some striking similarity to his algorithm. We are able to prove our algorithm terminates 
in polynomial time---but our setting is more restrictive. A very interesting direction 
for future work is to investigate how the ideas in the two algorithms can be related and combined. 

\subsubsection*{Our Technique}

Our approach is inspired by that of Gairing et al.~\cite{gairing04} for general \textsc{restricted assignment}. 
So let us first review their ideas. Suppose that a certain optimal makespan $t$ is guessed. Their core algorithm 
either (1) correctly reports that $t$ is an underestimate of $\OPT$, or (2) returns an assignment with makespan at most 
$t+W-1$. By a binary search on the smallest $t$ for which an assignment with makespan $t+W-1$ is returned, and the simple 
fact that $\OPT \geq W$, they guarantee the approximation ratio of 
$\frac{t+W-1}{\OPT} \leq 1 + \frac{W-1}{\OPT} \leq 2 -\frac{1}{W}$ (the first inequality holds because $t$ is the smallest 
number an assignment is returned by the core algorithm). Their core algorithm is a preflow-push algorithm. Initially all jobs 
are arbitrarily assigned. Their algorithm tries to redistribute the jobs from 
overloaded machines, i.e., those with load more than $t+W-1$, to those that are not. The redistribution 
is done by pushing the jobs around while updating the height labels (as commonly done in preflow-push algorithms). 
The critical thing is that after a polynomial number of steps, if there are still some overloaded machines, they 
use the height labels to argue that $t$ is a wrong guess, i.e., $\OPT \geq t+1$. 
Our contribution is a refined core algorithm in the same framework. With a guess $t$ of the optimal makespan, our core algorithm 
either (1) correctly reports that $\OPT \geq t+1$, or (2) returns an assignment with makespan at most 
$(5/3+\beta/3)t$. 

We divide all jobs into two categories, the 
\emph{rock jobs} $\rock$, and the \emph{pebble jobs} $\pebble$ (not to be confused with heavy and light jobs). 
The former consists of those with weights in $(\beta t, t]$ while the latter 
includes all the rest. We use the rock jobs to form a graph $G_{\rock}=(V, \rock)$, and assign the pebbles arbitrarily to the nodes. Our core algorithm will push around the pebbles so as to redistribute them. Observe that as $t \geq W$, all 
rocks are heavy jobs. So the formed graph $G_{\rock}$ has only simple edges (no hyper edges). As $\beta \geq 4/7$, 
if $\OPT \leq t$, then every node can receive at most one rock job in the optimal solution. 
In fact, it is easy to see that we can simply assume that the formed graph $G_{\rock}$ is a disjoint set of trees and cycles. Our entire task 
boils down to the following:

\begin{quote} Redistribute the pebbles so that there exists an orientation of the edges in $G_{\rock}$ in which each node has total load (from both 
rocks and pebbles) at most $(5/3+\beta/3)t$; and if not possible, gather evidence that $t$ is an underestimate. 
\end{quote}

Intuitively speaking, our algorithm maintains a certain \emph{activated set} $\act$ of nodes. Initially, this set includes those nodes 
whose total loads of pebbles cause conflicts in the orientation of the edges in $G_{\rock}$. A node ``reachable'' from a node in the 
activated set is also included into the set. (Node $u$ is reachable from node $v$ if a pebble in $v$ can be assigned to $u$.) 
Our goal is to push the pebbles among nodes in $\act$, so as to remove all conflicts in the edge 
orientation. Either we are successful in doing so, or we argue that the total load of all pebbles currently owned 
by the activated set, together with the total load of the rock jobs assigned to $\act$ in any \emph{feasible orientation} of the edges in $G_{\rock}$ (an orientation in $G_{\rock}$ is \emph{feasible} if every node receives at most one rock), 
is strictly larger than $t\cdot |\act|.$ The progress of our algorithm (hence its running time) is monitored by 
a potential function, which we show to be monotonically decreasing. 

The most sophisticated part of our algorithm is the ``activation strategy''. We initially add nodes into $\act$ if they 
cause conflicts in the orientation or can be (transitively) reached from such. However, sometimes we also include nodes 
that do not fall into the two categories. This is purposely done for two reasons: pushing pebbles from these nodes 
may help alleviate the conflict in edge orientation indirectly; and their presence in $\act$ strengthens the contradiction 
proof. 

Due to the intricacy of our main algorithm, 
we first present the algorithm for the two job weights case in Section~\ref{sec:twoWeights} and then present the main algorithm 
for the arbitrary weights in Section~\ref{sec:generalWeights}. The former algorithm is significantly simpler (with a straightforward 
activation strategy) and contains many ingredients of the ideas behind the main algorithm. 

\subsubsection*{Related Work}

For \textsc{restricted assignment}, besides the several recent advances mentioned earlier, see the survey of Leung and Li 
for other special cases~\cite{Leung2008}. For two job weights, Chakrabarti, Khanna and Li~\cite{ckl2015} showed that using 
the configuration-LP, they can obtain a $(2-\delta)$-approximation for a fixed $\delta>0$ (and note that there is no restriction 
on the number of machines a job can go to). 
Kolliopoulos and Moysoglou~\cite{km13} also considered the two job weights case. In the
\textsc{graph balancing} setting (with two job weights), they gave a 1.652-approximation algorithm using a flow technique 
(thus they also break the integrality gap in~\cite{Sgall}). 
They also show that the configuration-LP for \textsc{restricted assignment} with two job weights has an integrality gap of at most 1.883 (and this is further improved 
to 1.833 in~\cite{ckl2015}). 
%We note that so far there is no known instance for which the integrality gap of the configuration-LP 
%is higher than 1.5. 

For \textsc{unrelated machines}, Shchepin and Vakhania~\cite{shchepin2005} improved the approximation ratio to $2-1/m$. 
A combinatorial 2-approximation algorithm was given by Gairing, Monien, and Woclaw~\cite{gairingTCS}. 
Verschae and Wiese~\cite{Verschae2014} showed that the configuration-LP has integrality gap of 2, even if every job 
can be assigned to only two machines. They also showed that it is possible to achieve approximation ratios strictly better than 2 
if the job weights $w_{ij}$ respect some constraints.

\section{Preliminary}
\label{sec:pre}

Let $t$ be a guess of $\OPT$. Given $t$, our two core algorithms either report that $\OPT \geq t+1$, or return an assignment 
with makespan at most $1.5t$ or $(5/3+\beta/3)t$, respectively. We 
conduct a binary search on the smallest $t \in [W, \sum_{j \in \job}w_j]$ for which an assignment is returned 
by the core algorithms. This particular assignment is then the desired solution. 

We now explain the initial setup of the core algorithms. In our discussion, we will not distinguish 
a machine and a node. Let $dl(v)$ be the dedicated load of $v$, i.e., the sum of the weights of jobs 
that can only be assigned to $v$. We can assume that $dl(v) \leq t$ for all nodes $v$. 
Let $\job' \subseteq \job$ be the jobs that can be assigned to at least two machines. We divide 
$\job'$ into rocks $\rock$ and pebbles $\pebble$. A job $j \in \job'$ is a rock, 

\begin{itemize}

\item in the 2 job weights case (Section~\ref{sec:twoWeights}), if $w_j > t/2$ and $w_j = W$;

\item in the general job weights case (Section~\ref{sec:generalWeights}), if $w_j > \beta t$. 

\end{itemize}

A job $j \in \job'$ that is not a rock is a pebble. Define the graph $G_{\rock}=(V, \rock)$ as a graph with 
machines $\machine$ as node set and rocks $\rock$ as edge set. By our definition, a rock can be assigned 
to exactly two machines. So $G_{\rock}$ has only simple edges (no hyper edges). For the sake of convenience, we call the rocks just ``edges'', avoiding ambiguity by exclusively using the term ``pebble'' for the pebbles.

Suppose that $\OPT \leq t$. Then a machine can receive at most one rock in the optimal solution. If 
any connected component in $G_{\rock}$ has more than one cycle, we can immediately declare that 
$\OPT \geq t+1$. If a connected component in $G_{\rock}$ has exactly one cycle, we can direct all edges away from the cycle
and remove these edges, 
i.e., assign the rock to the node $v$ to which it is directed. W.L.O.G, we can assume that this rock 
is part of $v$'s dedicated load. (Also observe that then node $v$ must become an isolated node). 
Finally, we can eliminate cycles of length 2 in $G_{\rock}$ with the following simple reduction. If a pair of nodes $u$ and $v$ is connected by two distinct rocks $r1$ and $r2$, remove the two rocks, add $\min(w_{r1},w_{r2})$ to both $u$'s and $v$'s dedicated load, and introduce a new pebble of weight $|w_{r1}-w_{r2}|$ between $u$ and $v$. 
Let $\allorient$ denote the set of orientations in $G_{\rock}$ where each node has at most one incoming edge. 
We use a proposition to summarize the above discussion. 

\begin{proposition}\label{prop:preprocess} We can assume that
\begin{itemize}

\item the rocks in $\rock$ correspond to the edge set of the graph $G_{\rock}$, and all pebbles can be assigned 
to at least two machines;

\item the graph $G_{\rock}$ consists of disjoint trees, cycles (of length more than 2), and isolated nodes;

\item for each node $v \in V$, $dl(v) \leq t$;

\item if $\OPT \leq t$, then the orientation of the edges in $G_{\rock}$ in the optimal assignment must be one of those in $\allorient$. 

\end{itemize}

\end{proposition}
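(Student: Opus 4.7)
The plan is to verify each of the four items of Proposition~\ref{prop:preprocess} in turn, showing that any apparent violation either can be removed without loss of generality or directly certifies $\OPT\geq t+1$. Item~1 is nearly a definition: every job in $\job'$ is assignable to at least two machines by construction, while a rock has weight exceeding $t/2$ (resp.\ $\beta t\geq t/2$), so it is a heavy job in the sense of the problem and hence assignable to at most two machines. Combined with the ``at least two'' property this forces each rock to have exactly two assignable machines, so $\rock$ is a set of simple edges on $V$. Item~3 is handled upfront: jobs pinned to a single machine contribute to $dl(\cdot)$, and if some $v$ has $dl(v)>t$ we immediately report $\OPT\geq t+1$.

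For item~2, the key observation is that $\OPT\leq t$ forces every node to host at most one rock, since two rocks together weigh more than $2\beta t\geq 8t/7>t$ (resp.\ more than $t$ in the two-weights case, as both rocks have weight $W>t/2$). Equivalently, the optimal orientation of $\rock$ has in-degree at most $1$ everywhere, so each connected component of $G_{\rock}$ contains at most one cycle; two or more cycles in a component immediately certify $\OPT\geq t+1$. In a unicyclic component the edge count equals the node count, forcing in-degree exactly $1$ at every node, so every tree edge hanging off the cycle must point away from the cycle. We iteratively strip these edges from the leaves inward: for a current leaf $v$ of the tree part we assign its unique rock $r$ to $v$ by incrementing $dl(v)$ by $w_r$ and deleting $r$, reporting $\OPT\geq t+1$ if the updated $dl(v)$ exceeds $t$. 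When the iteration terminates, only the cycle remains and the former tree nodes are isolated. Finally, 2-cycles are eliminated by the stated reduction: given parallel rocks $r_1, r_2$ with $w_{r_1}\leq w_{r_2}$ joining $u$ and $v$, any makespan-$\leq t$ assignment sends one rock to each endpoint, so adding $w_{r_1}$ to both $dl(u)$ and $dl(v)$ and inserting a pebble of weight $w_{r_2}-w_{r_1}$ between $u$ and $v$ faithfully models the remaining choice of which endpoint absorbs the heavier rock.

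Item~4 is then immediate: if $\OPT\leq t$, the argument above shows that every node receives at most one rock in the optimum, so the induced orientation of $\rock$ lies in $\allorient$. The most delicate step is verifying the 2-cycle reduction, where one must check both directions of the equivalence---a feasible solution to the original instance projects to one of the reduced instance by placing the pebble at the endpoint that received the heavier rock, and conversely any reduced solution lifts by the symmetric orientation choice. All preprocessing steps touch each edge or node a bounded number of times, so they add only lower-order cost to the overall running time of the core algorithm.
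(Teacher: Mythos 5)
Your proposal is correct and follows essentially the same route as the paper's own preprocessing discussion: bound $dl(v)$ by $t$, use that two rocks exceed $t$ to force at most one rock per node, reject components with two or more cycles, absorb the forced tree edges of unicyclic components into dedicated loads, and eliminate 2-cycles via the $\min$/difference-pebble reduction. You even supply slightly more detail than the paper (the leaf-stripping induction with the feasibility check on the updated dedicated load, and both directions of the 2-cycle equivalence), so nothing further is needed.
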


\section{The 2-Valued Case}
\label{sec:twoWeights}

In this section, we describe the core algorithm for the two job weights case, with the guessed makespan $t\geq W$. 
Observe that when $t\in[W,2w)$, if $\OPT\leq t$, then every node can receive at most one job (pebble or rock) in the optimal assignment. 
Hence, we can solve the problem exactly using the standard max-flow technique. 
So in the following, assume that $t \geq 2w$. 
Furthermore, let us first assume that $t < 2W$ (the case of $t\geq 2W$ will be discussed at the end of the section). 
Then the rocks have weight $W$ and the pebbles have weight $w$. 
Initially, the pebbles are arbitrarily assigned to the nodes. Let $pl(v)$ be the total weight of the pebbles assigned to node $v$. 

\begin{definition} A node $v$ is 

\begin{itemize}
\item \emph{uncritical}, if $dl(v) + pl(v) \leq 1.5t -W - w$;
\item \emph{critical}, if $dl(v) + pl(v) > 1.5t -W$;
\item \emph{hypercritical}, if $dl(v) + pl(v) > 1.5t$.
\end{itemize}
\end{definition}

(Notice that it is possible that a node is neither uncritical nor critical.) 

\begin{definition} Each tree, cycle, or isolated node in $G_{\rock}$ is a \emph{system}. A system is \emph{bad} if any of the following conditions holds. 

\begin{itemize}
\item It is a tree and has at least two critical nodes, or 
\item It is a cycle and has at least one critical node, or
\item It contains a hypercritical node. 
\end{itemize}

A system that is not bad is \emph{good}. 
\end{definition}

If all systems are good, then orienting the edges in each system such that every node has at most one incoming edge gives us a solution with makespan at most $1.5t$. So let assume that there is at least one bad system.

We next define the \emph{activated set} $\act$ of nodes constructively. Roughly speaking, we will move pebbles around 
the nodes in $\act$ so that either there is no more bad system left, or we argue that, in every feasible assignment, 
\emph{some} nodes in $\act$ cannot handle their total loads, thereby arriving at a contradiction. 

In the following, if a pebble in $u$ can be assigned to node $v$, we say $v$ is reachable from $u$. Node $v$ is reachable 
from $\act$ if $v$ is reachable from any node $u \in \act$. A node added into $\act$ is \emph{activated}. 

Informally, all nodes that cause a system to be bad are activated. 
A node reachable from $\act$ is also activated. Furthermore, suppose that a system is good and it has 
a critical node $v$ (thus the system cannot be a cycle). If any other node $u$ in the same system is activated, then 
so is $v$. We now give the formal procedure \textsc{Explore1} in Figure~\ref{fig:ExploreOne}. 
Notice that in the process of activating the 
nodes, we also define their \emph{levels}, which will be used later for the algorithm and the potential function.

\begin{figure}[!htb]
\centering
\fbox{ \begin{minipage}{0.85\linewidth}
\textsc{Explore1}\\[0.5em]
\textbf{Initialize} $\act := \{v| \mbox{$v$ is hypercritical, or $v$ is critical in a bad system}\}$. \\
\mbox{\hspace*{1em}} Set $\level(v):=0$ for all nodes in $\act$; $i:=0$.\\[0.6em]
\textbf{While} $\exists v \not \in \act$ reachable from $\act$ \textbf{do:}\\[0.2em]
\mbox{\hspace*{1em}} $i:= i+1$. \\
\mbox{\hspace*{1em}} $\act_i:= \{v \not \in \act| v \mbox{ reachable from $\act$}$\}. \\
\mbox{\hspace*{1em}} $\act_i':= \{v \not \in \act| \mbox{ $v$ is critical in a good system and $\exists u \in \act_i$ in the same system}\}$. \\
\mbox{\hspace*{1em}} Set $\level(v):=i$ for all nodes in $\act_i$ and $\act_i'$. \\
\mbox{\hspace*{1em}} $\act := \act \cup \act_i \cup \act_i'$. \\
For each node $v \not \in \act$, set $\level(v) = \infty$. 
\end{minipage} }
\caption{The procedure \textsc{Explore1}.}
\label{fig:ExploreOne}
\end{figure}

The next proposition follows straightforwardly from \textsc{Explore1}. 

\begin{proposition} The following holds. 
\begin{enumerate} 
\item All nodes reachable from $\act$ are in $\act$. 
\item Suppose that $v$ is reachable from $u \in \act$. Then $\level(v) \leq \level(u)+1$. 
\item If a node $v$ is critical and there exists another node $v' \in \act$ in the same system, then 
$\level(v) \leq \level(v')$. 

\item Suppose that node $v \in \act$ has $\level(v)=i > 0$. Then there exists another node $u \in \act$ 
with $\level(u)=i-1$ so that either $v$ is reachable from $u$, or there exists another node $v' \in \act$ reachable from $u$ 
with $\level(v')=i$ in the same system as $v$ and $v$ is critical. 
\end{enumerate}

\label{pro:first}
\end{proposition}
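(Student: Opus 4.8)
The plan is to read each item directly off the structure of \textsc{Explore1}, relying on two bookkeeping facts: the set $\act$ only grows as the procedure runs, and $\level(v)$ is exactly the iteration in which $v$ is added to $\act$ ($0$ for the initial set, $\infty$ if $v$ is never added). The loop terminates because every executed iteration puts the node witnessing the loop condition into $\act_i$ and $V$ is finite, so the final $\act$ is well defined; Item~1 is then immediate, since the loop exits exactly when no node outside $\act$ is reachable from $\act$, and $\act$ does not change afterwards.

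For Item~2, let $v$ be reachable from $u\in\act$ and set $j:=\level(u)$, so that $u$ already lies in $\act$ at the start of iteration $j+1$. If that iteration is executed, then either $v\in\act$ at its start (so $\level(v)\le j$) or $v$ enters $\act_{j+1}$ as a node reachable from $\act$ (so $\level(v)=j+1$); if the loop stopped before iteration $j+1$, its guard has failed and $v\in\act$ with $\level(v)\le j$. Hence $\level(v)\le\level(u)+1$ in every case. This reachability bookkeeping is the workhorse for the last two items, so I would prove Items~1 and~2 first.

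For Item~3, I would split on whether $v$'s system is good. If it is bad, then $v$ is critical in a bad system --- or, when the system is a cycle, critical at all, which already makes it bad --- so $v$ is in the initial $\act$ and $\level(v)=0\le\level(v')$. Otherwise the system is a good tree and we may assume $v'\ne v$; since a node of a good system is neither hypercritical nor critical in a bad system, $v'$ is not in the initial $\act$, hence $j:=\level(v')\ge 1$ and $v'\in\act_j\cup\act_j'$. In either case $v'$'s system --- which is $v$'s --- contains a node of $\act_j$: either $v'$ itself, or the witness $u\in\act_j$ used to place $v'$ into $\act_j'$. So when $\act_j'$ is formed during iteration $j$, the node $v$ is critical in a good system with a same-system node in $\act_j$, so $v$ is in $\act$ by then --- either already, or newly placed into $\act_j'$ --- giving $\level(v)\le j=\level(v')$.

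For Item~4, let $\level(v)=i>0$, so $v\in\act_i\cup\act_i'$. The key sub-observation is that any $x\in\act_i$ is reachable from some node $u$ in the $\act$ present at the start of iteration $i$, hence $\level(u)\le i-1$, and Item~2 forces $i=\level(x)\le\level(u)+1$, so $\level(u)=i-1$. If $v\in\act_i$, apply this with $x:=v$ to obtain case~(a). If $v\in\act_i'$, then by definition there is $v'\in\act_i$ in $v$'s system and $v$ is critical; applying the sub-observation with $x:=v'$ yields $u$ with $\level(u)=i-1$ from which $v'$ is reachable, which is exactly case~(b). I expect no genuine obstacle: the only points requiring care are being precise about \emph{at the start of which iteration} a node already belongs to $\act$ (since $\act_i$ and $\act_i'$ are defined against the current $\act$), and, in Item~3, cleanly separating the bad-system case (handled by the initialization) from the good-tree case (handled by the formation of $\act_j'$).
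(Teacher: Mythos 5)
Your proof is correct, and it takes the only natural route: a direct verification from the structure of \textsc{Explore1}, which is precisely what the paper intends when it states that the proposition ``follows straightforwardly'' from the procedure (the paper gives no explicit proof). Your case analyses for Items~3 and~4 --- in particular, noting that a critical node in a bad system lies in the initial $\act$, and routing the good-tree case through the witness in $\act_j$ used to form $\act_j'$ --- correctly fill in the details the paper omits.
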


After \textsc{Explore1}, we apply the \textsc{Push} operation (if possible), defined as follows. 

\begin{definition} \textsc{Push} operation: push a pebble from $u^*$ to $v^*$ if the following conditions hold. 
\label{def:simplePush}
\begin{enumerate}

\item The pebble is at $u^*$ and it can be assigned to $v^*$. 
\item $\level(v^*)=\level(u^*)+1$. 
\item $v^*$ is uncritical, or $v^*$ is in a good system that remains good with an additional weight of $w$ at $v^*$.
\item Subject to the above three conditions, choose a node $u^*$ so that $\level(u^*)$ is minimized 
(if there are multiple candidates, pick any). 

\end{enumerate}

\end{definition}

Our algorithm can be simply described as follows. 

\begin{quote} \textbf{Algorithm~1}: As long as there is a bad system, apply \textsc{Explore1} and \textsc{Push} operation repeatedly. When there is no bad system left, return a solution with makespan at most $1.5t$.
If at some point, \textsc{push} is no longer possible, declare that $\OPT \geq t+1$.
\end{quote}

\begin{lemma} When there is at least one bad system and the \textsc{Push} operation is no longer possible, $\OPT \geq t+1$. 
\label{lem:firstContradiction}
\end{lemma}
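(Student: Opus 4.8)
Assume for contradiction that $\OPT \le t$, fix an optimal assignment $\sigma$, and let $\psi \in \allorient$ be the orientation of the edges of $G_\rock$ that $\sigma$ induces. The plan is to show that $\sigma$ must place total weight strictly more than $t\,|\act|$ on the nodes of $\act$, which is impossible.

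The first step is to read off, from the two hypotheses (no \textsc{Push} is applicable, and at least one bad system exists), the structural properties of $\act$ we will charge against. (a) By Proposition~\ref{pro:first}(1), $\act$ is closed under reachability, so every pebble now sitting on a node of $\act$ has all its admissible machines in $\act$; hence $\sigma$ still places it on $\act$, and the pebble weight $\sigma$ puts on $\act$ is at least $\sum_{v\in\act} pl(v)$. (b) $\act$ contains no uncritical node: a node of positive level is reachable from a node one level below (Proposition~\ref{pro:first}(4)), and if it were uncritical a push onto it would satisfy Definition~\ref{def:simplePush}; a level-$0$ node is critical or hypercritical by definition. (c) Every good tree meeting $\act$ contains exactly one critical node, and that node is in $\act$: a reached node of a critical-node-free good tree is either uncritical (excluded) or in-between, and in the latter case a push onto it would still be legal (adding $w$ keeps such a tree good), so such a tree is not met by $\act$; and once any node of a good tree is reached, its critical node joins $\act$ via the $\act_i'$ rule. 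Finally, in any orientation of $\allorient$ every node of a cycle receives an edge while all but one node of a tree does, so the number of edges $\psi$ orients into $A_S := \act\cap S$ is $|A_S|$ for a cycle $S$ and at least $|A_S|-1$ for a tree $S$.

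The second step is to split $\act$ over the systems it meets and to bound below, for each system $S$, the weight $\sigma$ puts on $A_S$, namely $\sum_{v\in A_S}(dl(v)+pl(v))$ plus $W$ times the number of edges $\psi$ orients into $A_S$, using the thresholds in the definitions of uncritical/critical/hypercritical together with the edge counts above. An isolated activated node is hypercritical, so it alone carries weight $>1.5t>t$. For a cycle every node of $A_S$ receives an edge and, by (b) together with $t\ge 2w$, carries $dl+pl>t-W$, so $A_S$ receives more than $|A_S|\,t$. For a bad tree the two critical nodes (or the hypercritical node) it contributes to $\act$ each exceed the $t-W$ baseline by $\tfrac{t}{2}$, and since $t\ge W$ this surplus already compensates the one edge the orientation may withhold from $A_S$; so again $A_S$ gets more than $|A_S|\,t$. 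For a good tree one invokes (c): its unique critical node lies in $A_S$, the remaining activated nodes of $A_S$ are in-between, and a computation with the thresholds $1.5t-W$ and $1.5t-W-w$ — sharpened, when the critical node is itself reached, by the fact that a push onto it is legal unless it becomes hypercritical, forcing $dl+pl>1.5t-w$ — shows that $A_S$ receives more than $|A_S|\,t$. Summing over all systems of $G_\rock$ gives total weight $>t\,|\act|$ on $\act$, contradicting $\OPT\le t$.

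I expect the good-tree case of the per-system bound to be the main obstacle: there the accounting is tightest, because the orientation is free to place the tree's root inside $A_S$ (losing a full $W$ of edge weight) while the in-between nodes sit only barely above $t-W$, so in the extreme regime $t=2w$ the crude counting above leaves no slack at all. Recovering the missing $W$ appears to need the activation structure — that the critical node has the minimum level within its $A_S$ (Proposition~\ref{pro:first}(3)) and that every in-between node is reached from a strictly lower level (Proposition~\ref{pro:first}(4)), so the shortfall can be charged down the activation chain — rather than the baseline bound that suffices for cycles and bad trees. Cleanly establishing (c), in particular handling nodes that are neither uncritical nor critical, is the other point that needs care.
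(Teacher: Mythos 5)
Your overall plan, the structural facts (a)--(c), and the bad-system and cycle cases all track the paper's proof of Claim~\ref{cla:firstMain}. The genuine gap is exactly where you suspect it: the per-system bound $>|\act(S)|\,t$ is \emph{false} for good trees, and it cannot be rescued by any within-system or down-the-activation-chain charging, because the deficit is realized by an actual configuration. Take a good tree that is a single edge $cv$ with $t=W$ and $w=t/2$, where $c$ is critical with $dl(c)+pl(c)=1.5t-W+1$ and $v$ is in-between with $dl(v)+pl(v)=1.5t-W-w+1$; both can lie in $\act(S)$ ($v$ reached from a bad system elsewhere --- the push onto $v$ is blocked because the added $w$ would make $v$ a second critical node --- and $c$ activated only via the $\act_i'$ rule). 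The total load on $\act(S)$ is then $3t-W-w+2$, e.g.\ $152<200=2t$ for $t=W=100$, $w=50$. Your proposed sharpening $dl+pl>1.5t-w$ for the critical node applies only when that node is itself reachable from a lower level; when it enters $\act$ solely through $\act_i'$, no push onto it is ever attempted and the bound $1.5t-W$ cannot be improved.

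The paper therefore proves only the weaker inequality~(\ref{equ:2valAlmostTooMuch}), namely $>|\act(S)|\,t-w$, for good systems, reserving the full $>|\act(S)|\,t$ (inequality~(\ref{equ:2valTooMuch})) for bad ones, and closes the remaining gap \emph{globally} rather than system by system: since pebbles on $\act$ can only be reassigned within $\act$, the total pebble weight on $\act$ cannot decrease in the optimal assignment, so either some bad system retains at least its current pebble weight (and its own inequality already gives the contradiction), or --- using the hypothesis that at least one bad system exists --- some good system must gain at least one extra pebble of weight $w$, which exactly cancels its deficit. This cross-system pigeonhole is the missing idea in your argument, and it is the only place the assumption ``there is at least one bad system'' is actually used. (A minor further slip: an isolated activated node need not be hypercritical; it is only guaranteed to satisfy $dl(v)+pl(v)>1.5t-w\geq t$, which still suffices for your purposes.)
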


\begin{proof} Let $\act(S)$ denote the set of activated nodes in system $S$. Recall that $\allorient$ denotes 
the set of all orientations in $G_{\rock}$ in which each node has at most one incoming edge. We prove 
the lemma via the following claim. 

\begin{numberedClaim} Let $S$ be a system. 
\label{cla:firstMain}
\begin{itemize}
\item Suppose that $S$ is bad. Then 
\begin{equation} W\cdot(\min_{\psi \in \allorient} \mbox{number of rocks to $\act(S)$ according to $\psi$}) + \sum_{v \in \act(S)} pl(v) + dl(v) 
> |\act(S)|t.
\label{equ:2valTooMuch}
\end{equation}
\item Suppose that $S$ is good. Then 
\begin{equation} W\cdot(\min_{\psi \in \allorient} \mbox{number of rocks to $\act(S)$ according to $\psi$}) + \sum_{v \in \act(S)} pl(v) + dl(v) 
> |\act(S)|t -w.
\label{equ:2valAlmostTooMuch}
\end{equation}
\end{itemize}
\end{numberedClaim}

Observe that the term $|\act(S)|t$ is the maximum total weight that all nodes in $\act(S)$ can handle if $\OPT \leq t$. 
As pebbles owned by nodes in $\act$ can only be assigned to the nodes 
in $\act$, by the pigeonhole principle, in all orientations $\psi \in \allorient$, and all possible assignments 
of the pebbles, at least one bad system $S$ has at least the same number of pebbles in $\act(S)$ as the current 
assignment, or a good system $S$ has at least one more pebble 
than it currently has in $\act(S)$. In both cases, we reach 
a contradiction. 

\qed\end{proof}

\noindent\emph{Proof of Claim~\ref{cla:firstMain}:}
First observe that in all orientations in $\allorient$, the nodes 
in $\act(S)$ have to receive at least $|\act(S)|-1$ rocks. If $S$ is a cycle, then 
the nodes in $\act(S)$ have to receive exactly $|\act(S)|$ rocks. 

Next observe that none of the nodes in $\act(S)$ is uncritical, since otherwise, by Proposition~\ref{pro:first}.4 and 
Definition~\ref{def:simplePush}.3, the \textsc{Push} 
operation would still be possible. By the same reasoning, if $S$ is a tree and $\act(S) \neq \emptyset$, at least one node $v\in \act(S)$ is critical; furthermore, if $|\act(S)| = 1$, this node $v$ satisfies $dl(v) + pl(v) > 1.5t-w$, as an additional weight of $w$ would make $v$ hypercritical. Similarly, if $S$ is an isolated node $v \in \act$, then $dl(v) + pl(v) > 1.5t-w$. 

We now prove the claim by the following case analysis. 

\begin{enumerate}

\item Suppose that $S$ is a good system and $\act(S) \neq \emptyset$. Then either $S$ is a tree and $\act(S)$ contains exactly one critical (but not hypercritical) node, or $S$ is an isolated node, or $S$ is a cycle and has no critical node. In the first case, if $|\act(S)| \geq 2$, 
the LHS of (\ref{equ:2valAlmostTooMuch}) is at least
\begin{eqnarray*}
(1.5t - W+1) + (|\act(S)|-1)(1.5t -W-w+1) + (|\act(S)|-1)W = \\
|\act(S)|t + (|\act(S)|-2)(0.5t - w +1) + t - W -w +2 > |\act(S)|t -w, 
\end{eqnarray*}
\noindent using the fact that $0.5t \geq w$, $t \geq W$, and $|\act(S)| \geq 2$. 
If, on the other hand, $|\act(S)| = 1$, then the LHS of (\ref{equ:2valAlmostTooMuch}) is strictly more than 
\begin{eqnarray*}
1.5t-w \geq t = |\act(S)|t, 
\end{eqnarray*}
and the same also holds for the case when $S$ is an isolated node.
Finally, in the third case, the LHS of (\ref{equ:2valAlmostTooMuch}) is at least 
\begin{eqnarray*}
|\act(S)|(1.5t -W-w+1) + |\act(S)|W > |\act(S)|t. 
\end{eqnarray*}

\item Suppose that $\act(S)$ contains at least two critical nodes, or that $S$ is a cycle and $\act(S)$ has 
at least one critical node. In both cases, $S$ is a bad system. Furthermore, the LHS of (\ref{equ:2valTooMuch}) 
can be lower-bounded by the same calculation as in the previous case with an extra term of $w$. 

\item Suppose that $\act(S)$ contains a hypercritical node. Then the system 
$S$ is bad, and the LHS of (\ref{equ:2valTooMuch}) is at least
\begin{eqnarray*}
(1.5t + 1) + (|\act(S)|-1)(1.5t -W-w+1) + (|\act(S)|-1)W =\\
|\act(S)|t + (|\act(S)|-1)(0.5t-w+1)+0.5t+1 > |\act(S)|t,
\end{eqnarray*}
\noindent where the inequality holds because $0.5t \geq w$. 
\qed
\end{enumerate} 

We argue that Algorithm~1 terminates in polynomial time by the aid of a potential function, defined as 

$$ \Phi = \sum_{v \in \act} (|V|-\level(v)) \cdot (\mbox{number of pebbles at $v$)}.$$ 

Trivially, $0 \leq \Phi \leq |V|\cdot|\pebble|$. The next lemma implies that $\Phi$ is monotonically decreasing after each \textsc{Push} 
operation. 

\begin{lemma} For each node $v \in V$, let $\level(v)$ and $\level'(v)$ denote the levels before and after a 
\textsc{Push} operation, respectively. Then $\level'(v) \geq \level(v)$.

\label{lem:simplePush}
\end{lemma}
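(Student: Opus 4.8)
The statement to prove is Lemma~\ref{lem:simplePush}: a single \textsc{Push} operation never decreases any node's level. The plan is to argue that the level assignment produced by \textsc{Explore1} is, in an appropriate sense, monotone with respect to the pebble configuration, and that a \textsc{Push} only moves a pebble ``downhill'' (from level $i$ to level $i+1$), so it cannot create new low-level reasons for activation. Concretely, I would fix a node $v$ and show $\level'(v) \ge \level(v)$ by strong induction on $\level'(v)$. The base case is $\level'(v)=0$: here I must check that every node activated at level $0$ after the push was already activated at level $0$ before, i.e.\ that pushing one pebble of weight $w$ from $u^*$ to $v^*$ does not turn a previously-good system bad in a way that creates a new hypercritical node or a new critical node in a bad system at the source side. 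The source $u^*$ only loses load, so its system can only get ``better''; the only danger is at $v^*$, but Condition~3 of Definition~\ref{def:simplePush} explicitly forbids the push unless $v^*$ is uncritical (in which case $+w$ keeps it below the critical threshold, using $1.5t - W - w + w = 1.5t - W$) or $v^*$ sits in a good system that stays good after $+w$. Hence no new bad system and no new hypercritical node is created, so the level-$0$ set does not grow; this is where the exact slack in the definitions of uncritical/critical/hypercritical gets used.

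For the inductive step, suppose $\level'(v)=i>0$ and the claim holds for all smaller new levels. By Proposition~\ref{pro:first}.4 applied to the post-push configuration, either $v$ is reachable from some $u$ with $\level'(u)=i-1$, or $v$ is critical and there is $v'$ in the same system with $\level'(v')=i$ reachable from such a $u$. In the first case, ``reachable'' depends only on which pebble is where: if $v$ is reachable from $u$ after the push via a pebble $p$, then either $p$ was already at $u$ before the push (so $v$ was reachable from $u$ before), or $p$ is the pushed pebble, meaning $u=v^*$; but then $v$ was reachable from $u^*$ before the push (that same pebble $p$ lived at $u^*$), and $\level(u^*) \le \level(v^*)-1$... here I need to be careful and instead invoke the induction hypothesis $\level(u)\ge$ — wait, the induction is on $\level'$, so I get $\level(u)\le\level'(u)=i-1$, which is the wrong direction. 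The correct move is the reverse: I should show that if $v$ had level $<i$ \emph{before}, it would have level $<i$ \emph{after}, which is the contrapositive phrased the other way. So the cleaner formulation is: prove by induction on $k$ that $\{v : \level'(v) \le k\} \subseteq \{v : \level(v) \le k\}$ is false in general — rather, prove $\{v:\level(v)\le k\}\supseteq$ is also not what we want. Let me restate: we want $\level'(v)\ge\level(v)$, i.e.\ $\level(v)\le k \Rightarrow \level'(v)\le k$ would be the wrong direction too. The right target is $\level'(v)\le k \Rightarrow \level(v)\le k$, proved by induction on $k$; the base case $k=0$ is handled above, and the step uses that reachability after the push from a level-$(k-1)$ node (in the new labelling, hence by IH a level-$\le(k-1)$ node in the old labelling) was already present before the push \emph{except} possibly along the pushed pebble, for which the source was $v^*$ with old level exactly one less than what it had, so reachability from $u^*$ (old level $\le \level'(v^*)-1 \le$ that node's old level) still gives $v$ an old level at most $k$.

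\textbf{Main obstacle.} The genuinely delicate point is the interaction of the pushed pebble with reachability and with the $\act_i'$ (the ``critical node dragged in by a system-mate'') part of \textsc{Explore1}. When the pushed pebble leaves $u^*$ and lands at $v^*$, a node $v$ that used to be reachable from $u^*$ through exactly that pebble may no longer be reachable from $u^*$ at all — but this can only \emph{raise} $v$'s level (or make it $\infty$), which is consistent with the lemma; the worry is the opposite, that some node gets a \emph{lower} level, and that can only happen through a newly created short path or a newly created bad/hypercritical system, both of which are ruled out by Condition~3 of the \textsc{Push} definition (for badness at $v^*$) and by monotonicity of load at $u^*$ (for the source side). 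I expect the bulk of the write-up to be a careful case analysis verifying that $\act_i$ and $\act_i'$ can only shrink or shift upward in level, leaning on Proposition~\ref{pro:first}.4 to reduce every activation to a chain of reachabilities and system-mate critical nodes, each step of which is shown to have existed (at no larger level) before the push.
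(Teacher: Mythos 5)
Your overall strategy coincides with the paper's: the paper's ``minimal counterexample with smallest $\level'$'' is exactly your induction on the new level, the base case is discharged the same way (Definition~\ref{def:simplePush}.3 guarantees the push creates no new hypercritical node and turns no system bad, so $\act_0$ cannot grow), and your inductive step for plain reachability correctly reduces to the pre-push configuration by tracing the pushed pebble back from $v^*$ to $u^*$, whose old level is one smaller.

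There is, however, a genuine gap in the one case you flag as delicate and then only gesture at. When $v$ enters the activated set through the $\act_i'$ mechanism ($v$ is critical and a system-mate $v'$ is activated at the same level), your plan is to show that ``each step existed (at no larger level) before the push.'' That is false precisely when $v=v^*$ and the pushed pebble is what made $v^*$ critical: Definition~\ref{def:simplePush}.3 does \emph{not} rule this out, since its second alternative permits $v^*$ to become critical as long as its good system stays good, which a tree with no previously critical node does. So a brand-new, low-level activation reason for $v^*$ can legitimately appear after the push, and no pre-push chain exists to compare against. The paper closes this case not by showing it cannot occur, but by showing that if it produced a level drop then the push itself was chosen illegally: $v'$ is then not critical before the push, so Proposition~\ref{pro:first}.4 supplies a node $u''$ with $\level(u'')=\level(v')-1<\level(u^*)$ from which $v'$ satisfies conditions 1--3 of Definition~\ref{def:simplePush}, contradicting the minimality requirement of Definition~\ref{def:simplePush}.4 (choose $u^*$ of minimum level). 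Your proposal never invokes condition 4 of the \textsc{Push} definition, and without it this case cannot be closed.
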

\begin{proof}
We prove by contradiction. Suppose that there exist nodes $x$ with $\level'(x) < \level(x)$. Choose 
$v$ to be one among them with minimum $\level'(v)$. By the choice of $v$, and Definition~\ref{def:simplePush}.3, $\level'(v)>0$ and $v\in \act$ after the \textsc{Push} operation. Thus, by Proposition~\ref{pro:first}.4, there exists a node $u$ with 
$\level'(u) = \level'(v) -1$, so that after \textsc{Push}, 

\begin{itemize}
\item Case 1: $v$ is reachable from $u\in\act$, or 
\item Case 2: there exists another node $v'\in\act$ reachable from $u\in\act$ with $\level'(v') = \level'(v)$
in the same system as $v$, and $v$ is critical.
\end{itemize}

Notice that by the choice of $v$, in both cases, $\level'(u) \geq \level(u)$, and $u\in\act$ also before the \textsc{Push} operation. Let $p$ 
be the pebble by which $u$ reaches $v$ (Case 1), or $v'$ (Case 2), after \textsc{Push}. 
Before the \textsc{Push} operation, $p$ was at some node $u'\in\act$ ($u'$ may be $u$, or $p$ is the pebble pushed: from $u'$ to $u$).

By Proposition~\ref{pro:first}.2, in Case 1, $\level(v) \leq \level(u')+1$ (as $v$ is reachable from $u'$ via $p$ before \textsc{Push}), and $\level(v') \leq \level(u')+1$ in Case 2. Furthermore, if in Case 2 $v$ was already critical before \textsc{push}, then $\level(v) \leq \level(v')$ by Proposition~\ref{pro:first}.3 (note that $v'\in\act$ as it is reachable from $u'\in\act$). Hence, in both cases we would have 

$$ \level(v) \leq \level(u')+1 \leq \level(u)+1 \leq \level'(u)+1 = \level'(v),$$ 

\noindent a contradiction. Note that the second inequality holds no matter $u=u'$ or not. 

Finally consider Case 2 where $v$ was not critical before the \textsc{Push} operation. Then a pebble $p'\neq p$ is pushed into $v$ 
in the operation. Note that in this situation, $v$'s system is a tree and contains no critical nodes before \textsc{Push} (by Definition~\ref{def:simplePush}.3); in particular $v'$ is not critical. Furthermore, the presence of $p$ in $u$ implies that $\level(v') \leq \level(u)+1$ by Proposition~\ref{pro:first}.2, and that $v'\in\act$ by Proposition~\ref{pro:first}.1. As $v'$ is not critical, $\level(v') > 0$, and by Proposition~\ref{pro:first}.4 there
exists a node $u''$ with $\level(u'')= \level(v')-1$ so that $u''$ can reach $v'$ by a pebble 
$p''$ ($u''$ may be $u$ and $p''$ may be $p$). As 

$$ \level(v') \leq \level(u)+1 \leq \level'(u)+1 = \level'(v) < \level(v),$$ 

\noindent the \textsc{Push} operation should have pushed $p''$ into $v'$ instead of $p'$ into $v$ (see Definition~\ref{def:simplePush}.4), since $u''$ and $v'$ satisfy all the first three conditions 
of Definition~\ref{def:simplePush}. 
\qed\end{proof}

By Lemma~\ref{lem:simplePush} and the fact that a pebble is pushed to a node with higher level, the potential $\Phi$ strictly decreases after each \textsc{Push} operation, implying that Algorithm~1 finishes 
in polynomial time.

\textbf{Approximation Ratio}: When $t < 2W$, we apply Algorithm~1. 
In the case of $t\geq 2W$, we apply the algorithm of Gairing et al.~\cite{gairing04}, which either correctly reports that 
$\OPT \geq t+1$, or returns an assignment with makespan at most $t+W-1<1.5t$. 

Suppose that $t$ is the smallest number for which an assignment is returned. Then $\OPT \geq t$, and our approximation ratio is bounded by $\frac{1.5t}{\OPT}\leq 1.5$.
We use a theorem to conclude this section. 

\begin{theorem} With arbitrary dedicated loads on the machines, jobs of weight $W$ 
that can be assigned to two machines, and jobs of weight $w$ that can be assigned to any number of machines, 
we can find a $1.5$ approximate solution in polynomial time. 
\label{thm:firstTheorem}
\end{theorem}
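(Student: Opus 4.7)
My plan is to assemble the pieces already developed in Section~\ref{sec:twoWeights} into a clean statement about the whole procedure, and then close the approximation argument by binary search. The outer loop is a binary search on $t\in[W,\sum_{j}w_j]$ for the smallest value such that the core procedure returns an assignment; since $\OPT\ge W$ and $\OPT\le\sum_j w_j$, the search range has polynomial bit-length and the smallest successful $t$ satisfies $\OPT\ge t$, giving ratio $1.5t/\OPT\le 1.5$ as claimed.

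For a fixed guess $t$, I would split on three regimes. If $t\in[W,2w)$, then every node can accommodate at most one job under an assignment of makespan $\le t$, and so the feasibility of makespan $t$ can be decided exactly using a standard bipartite matching / max-flow computation; if infeasible, declare $\OPT\ge t+1$, else return the matching. If $t\ge 2W$, I would invoke the algorithm of Gairing et al.~\cite{gairing04}, which either reports $\OPT\ge t+1$ or returns an assignment of makespan $\le t+W-1\le 1.5t$. For the intermediate regime $t\in[2w,2W)$, the rocks all have weight $W$, the pebbles all have weight $w$, and I would run Algorithm~1 after applying the preprocessing guaranteed by Proposition~\ref{prop:preprocess}.

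The correctness of Algorithm~1 for this regime reduces to two facts already in hand. If the algorithm halts with no bad system, every system can be oriented so that each node has at most one incoming edge, and the definitions of \emph{critical}/\emph{hypercritical} together with the absence of bad systems imply total load at most $1.5t$ at every node, producing the desired assignment. If instead the \textsc{Push} operation becomes unavailable while some bad system remains, Lemma~\ref{lem:firstContradiction} gives $\OPT\ge t+1$ via the pigeonhole argument on $\act$. For termination, I would appeal to the potential $\Phi=\sum_{v\in\act}(|V|-\level(v))\cdot(\text{pebbles at }v)$: Lemma~\ref{lem:simplePush} ensures levels never decrease, and a \textsc{Push} moves one pebble from $u^*$ to $v^*$ with $\level(v^*)=\level(u^*)+1$, strictly lowering $\Phi$ by at least one. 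Since $0\le\Phi\le|V|\cdot|\pebble|$, there are at most $O(nm)$ pushes, and each \textsc{Explore1} call plus a single \textsc{Push} runs in polynomial time, yielding the advertised polynomial running time overall.

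The main subtlety I expect is the regime boundary $t\in[2w,2W)$ together with the ``uncritical/critical'' thresholds being chosen so that Claim~\ref{cla:firstMain}'s case analysis actually closes; in particular, the calculations in that claim rely on $0.5t\ge w$ and $t\ge W$, so I must make sure the case split above routes every $t$ to an algorithm whose hypotheses hold. Once that bookkeeping is confirmed, combining the three regimes with the outer binary search delivers a $1.5$-approximate feasible assignment in polynomial time, proving Theorem~\ref{thm:firstTheorem}.
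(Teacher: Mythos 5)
Your proposal is correct and follows essentially the same route as the paper: the same three-regime split on $t$ (max-flow for $t<2w$, Algorithm~1 for $2w\le t<2W$, Gairing et al.\ for $t\ge 2W$), the same appeals to Lemma~\ref{lem:firstContradiction} and Lemma~\ref{lem:simplePush} with the potential $\Phi$, and the same outer binary search yielding $1.5t/\OPT\le 1.5$. No gaps.
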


In the appendix, we show that a slight modification of our algorithm yields an improved approximation ratio of $1 + \frac{\lfloor \frac{W}{2} \rfloor}{W}$ if $W \geq 2w$.

\section{The General Case}

\label{sec:generalWeights} 

In this section, we describe the core algorithm for the case of arbitrary job weights. This algorithm inherits some basic ideas 
from the previous section, but has several significantly new ingredients---mainly due to the fact that the rocks now have 
different weights. Before formally presenting the algorithm, let us build up intuition by looking at some examples. 

For simplicity, we rescale the numbers and assume that $t=W=1$ and $\beta=0.7$. We aim for an assignment with 
makespan of at most $5/3 + 0.7/3 = 1.9$ or decide that $\OPT > 1$. Consider the example in Figure~\ref{fig:example1}. 
Note that there are $2k+1$ (for some large $k$) nodes (the pattern of the last two nodes repeats). 
Due to node 1 (which can be regarded as the analog of a critical node in the previous section), all edges are to be directed toward the right if we shoot 
for the makespan of 1.9. Suppose that there is an isolated node with the pebble load of $2+\epsilon$ (this node can be regarded as a bad system by itself) 
and it has a pebble of weight 0.7 that can be assigned to node 3, 5, 7 and so on up to $2k+1$. 
Clearly, we do not want to push the pebble into any of them, as it would cause the makespan to be larger than 1.9 by whatever 
orientation. Rather, we should activate node 1 
and send its pebbles away with the aim of relieving the ``congestion'' in the current system (later we will see that 
this is activation rule 1). 
In this example, all odd-numbered nodes are 
activated, and the entire set of nodes (including even-numbered nodes) 
form a \emph{conflict set} (which will be defined formally later). Roughly speaking, the conflict sets
contain activated nodes and the nodes that can be reached by ``backtracking'' the directed 
edges from them. These conflict sets embody the ``congestion'' in the systems. 

\begin{figure}[h]
\centerline{\includegraphics[width=.55\linewidth]{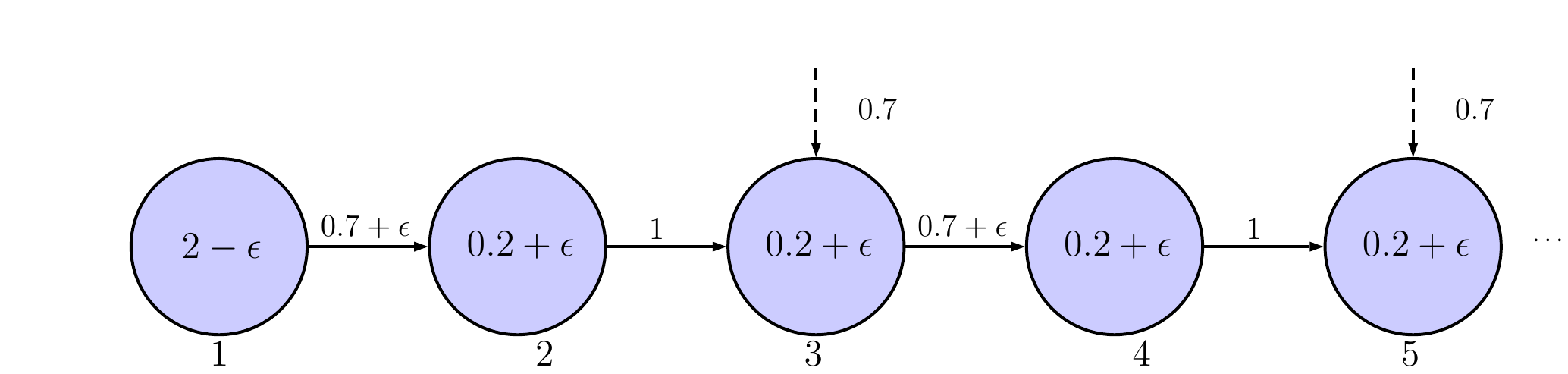}}
\caption{\small There are $2k+1$ nodes (the rest is repeating the same pattern). Numbers inside the shaded circles (nodes) are their pebble load.}
\label{fig:example1}
\end{figure}
\begin{figure}
\centering
\begin{minipage}{.5\textwidth}
\centering
\includegraphics[width=.5\linewidth]{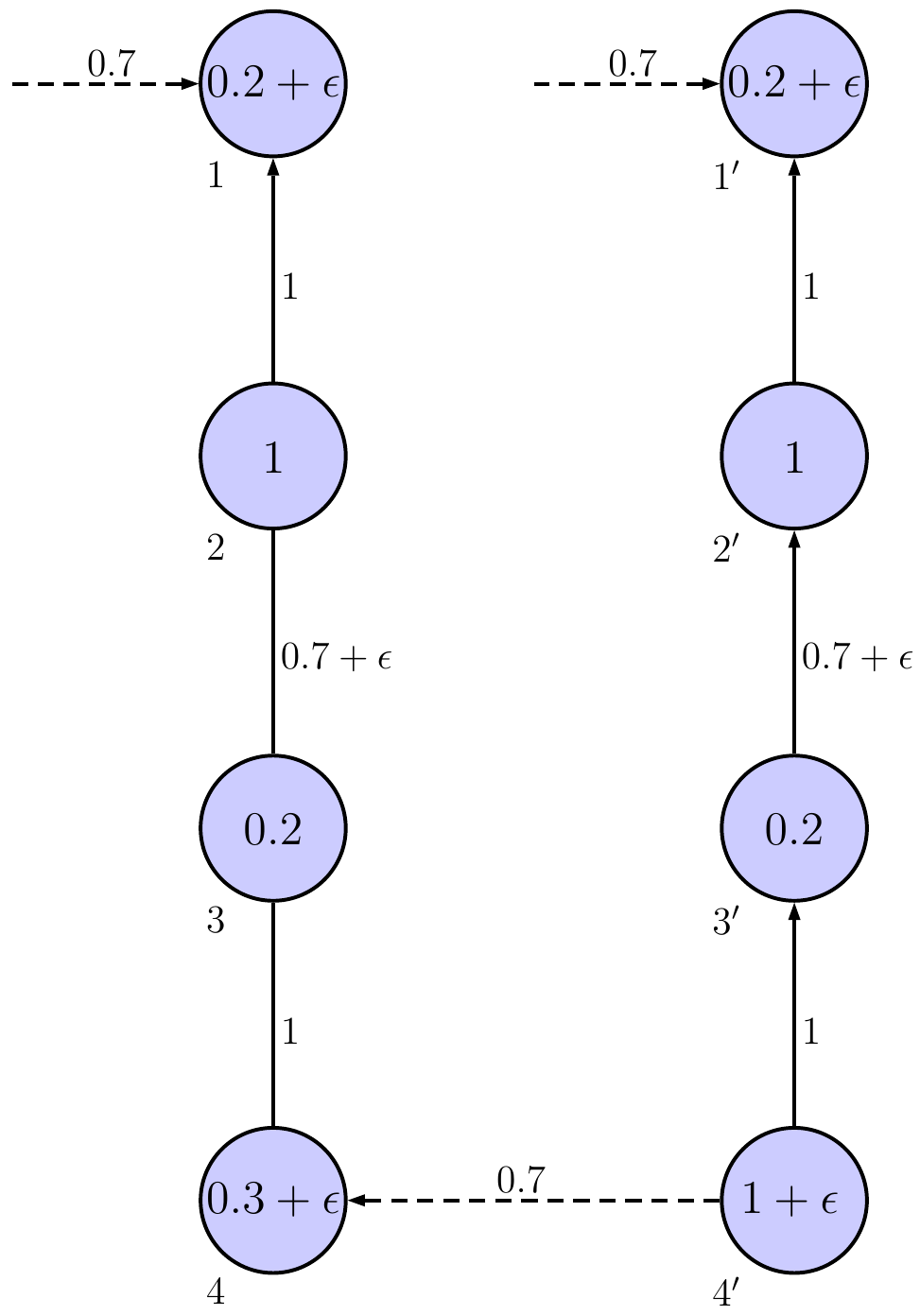}
\caption{A naive \textsc{Push} will oscillate the pebble between nodes $4$ and $4'$.}
\label{fig:example2}
\end{minipage}%
\begin{minipage}{.5\textwidth}
\centering
\includegraphics[width=.5\linewidth]{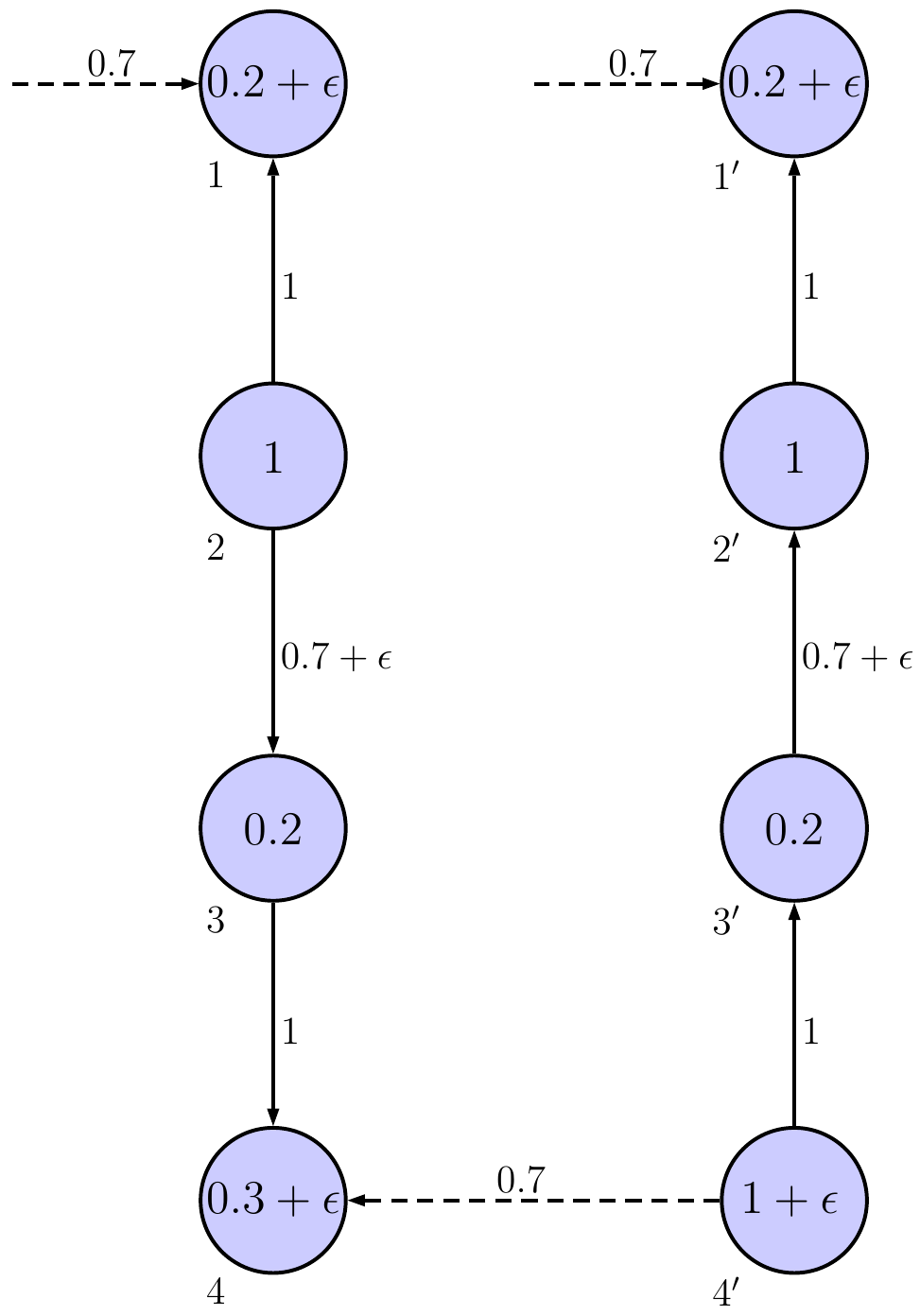}
\caption{A fake orientation from node 2 to 3 causes node 4 to have an incoming edge, thus informing node $4'$ not to push the pebble.}
\label{fig:example3}
\end{minipage}
\end{figure}

Recall that in the previous section, if the \textsc{Push} operation was no longer possible, we argued that the total load 
is too much 
(see the proof of Lemma~\ref{lem:firstContradiction}) for the activated nodes \emph{system by system}. 
Analogously, in this example, we need to argue that in all feasible orientations, the activated set of nodes (totally $k+1$ of them) in this conflict set
cannot handle the total load. However, if all edges are directed toward the left, 
their total load is only $(0.2+\epsilon)k + (2-\epsilon) + (0.7+\epsilon)k = 2+ 0.9k + \epsilon(2k-1)$, which 
is less than what they can handle (which is $k+1$) when $k$ is large. As a result, 
we are unable to arrive at a contradiction. 

To overcome this issue, we introduce another activation rule to strengthen our contradiction argument. 
If all edges are directed to the left, \emph{on the average}, each activated node has a total load of about $0.2+ 0.7$. However, 
each inactivated node has, \emph{on the average}, a total load of about $0.2 + 1$. This motivates 
our activation rule 2 : if an activated node is connected by a ``relatively light'' edge to some 
other node in the conflict set, the latter should be activated as well. The intuition behind is that the two nodes \emph{together}
will receive a relatively heavy load. We remark that it is easy to modify this example to show that 
if we do not apply activation rule 2, then we cannot hope for a $2-\delta$ approximation for any small $\delta>0$.
\footnote{Looking at this particular example, one is tempted to use the idea of activating all nodes in the conflict set. However, such an activation rule will not work. Consider the following example: There are $k+2$ nodes forming a path, and the $k+1$ edges connecting them all have weight $0.95+\epsilon$. The first node has a pebble load of $1$ and thus ``forces'' an orientation of the entire path (for a makespan of at most $1.9$). The next $k$ nodes have a pebble load of $0$, and the last node has a pebble load of $0.25$ and is reachable from a bad system via a pebble of weight $0.7$. The conflict set is the entire path, and activating all nodes leads to a total load of $(k+1)\cdot(0.95+\epsilon)+1+0.25$, which is less than $k+2$ for large $k$.}

Next consider the example in Figure~\ref{fig:example2}. Here nodes $2$, $2'$, and $4'$ can be regarded as the critical nodes, and 
$\{1,2\}$, $\{1',2',3',4'\}$ are the two conflict sets. Both nodes $1$ and $1'$ can be reached by an isolated node with heavy load 
(the bad system) with a pebble of weight 0.7. Suppose further that node $4'$ can reach node $4$ by another pebble of weight $0.7$. 
It is easy to see that a naive \textsc{Push} definition will simply ``oscillate'' the pebble between nodes $4$ and $4'$, causing the algorithm to cycle. 

Intuitively, it is not right to push the pebble from $4'$ into $4$, as it causes the conflict set in the left system to become bigger. Our principle of pushing a pebble 
should be to relieve the congestion in one system, while not worsening the congestion in another. To cope with this problematic case, we use 
\emph{fake orientations}, i.e., we direct edges away from a conflict set, as shown in Figure~\ref{fig:example3}. Node 2 directs the edge toward node 3, which 
in turn causes the next edge to be directed toward node 4. 
With the new incoming edge, node 4 now has a total load of $1+0.3+\epsilon$ to handle, and the pebble thus will not be pushed from node $4'$ to node $4$. 
%We remark that our idea of forced orientation is purposely
%designed to guarantee the algorithm finishes in polynomial time. However, it causes the algorithm to be significantly more involved. 

\subsection{Formal description of the algorithm}
\label{sec:formal}

We inherit some terminology from the previous section. 
We say that $v$ is \emph{reachable} from $u$ if a pebble in $u$ can be assigned to $v$, and that $v$ is reachable from $\act$ if $v$ is reachable from any node $u \in \act$. 
Each tree, cycle, isolated node in $G_{\rock}$ is a system. Note that there is exactly one edge between two adjacent nodes in $G_{\rock}$ (see Proposition~\ref{prop:preprocess}). 
For ease of presentation, we use the short hand $vu$ to refer to the edge $\{v,u\}$ in $G_{\rock}$ and $w_{vu}$ is its weight. 

The orientation of the edges in $G_{\rock}$ will be decided dynamically. 
If $uv$ is directed toward $v$, we call $v$ a \emph{father} of $u$, and $u$ a \emph{child} of $v$ (notice that a node can have several fathers and children). 
We write $rl(v)$ to denote total weight 
of the rocks that are (currently) oriented towards $v$, and $pl(v)$ still denotes the total weight of the pebbles at $v$. An edge that is currently un-oriented is \emph{neutral}. 
In the beginning, all edges in $G_{\rock}$ are neutral. 

A set $\C$ of nodes, called the \emph{conflict set}, will be collected in the course of the algorithm. 
Let $\child(v):=\{u\in\C: \mbox{$u$ is child of $v$}\}$ and $\father(v):=\{u\in\C: \mbox{$u$ is father of $v$}\}$ for any $v\in \C$.
A node $v\in\C$ is a \emph{leaf} if $\child(v)=\emptyset$, and a \emph{root} if $\father(v)=\emptyset$. 
Furthermore, a node $v$ is \emph{overloaded} if $dl(v)+pl(v)+rl(v) > (5/3+\beta/3)t$, and a node $v \in \C$ is \emph{critical} if there exists $u \in \mathcal{F}(v)$ such that $dl(v) + pl(v) + w_{vu} > (5/3 + \beta/3)t$. In other words, a node in the conflict set is critical if it has enough load by itself (without considering incoming rocks) to ``force'' an incident edge to be directed toward a father in the conflict set. 

Initially, the pebbles are arbitrarily assigned to the nodes. The orientation of a subset of the edges 
in $G_{\rock}$ is determined by the procedure \textsc{Forced Orientations} in Figure~\ref{fig:ForcedOrientations}. 

\begin{figure}[!htb]
\centering
\fbox{ \begin{minipage}{0.88\linewidth}
\textsc{Forced Orientations}\\[0.5em]
\textbf{While} $\exists$ neutral edge $vu$ in $G_{\rock}$, s.t. $dl(v)+pl(v)+rl(v)+w_{vu} > (5/3+\beta/3)t$:\\[0.2em] 
\mbox{\hspace*{1em}} \textbf{Direct} $vu$ towards $u$; $\textsc{Marked} := \{u\}$.\\[0.2em] 
\mbox{\hspace*{1em}} \textbf{While} $\exists$ neutral edge $v'u'$ in $G_{\rock}$, s.t. $dl(v')+pl(v')+rl(v')+w_{v'u'} > (5/3+\beta/3)t$ \\[0.2em]
\mbox{\hspace*{1em}} and $v'\in\textsc{Marked}$:\\[0.2em] 
\mbox{\hspace*{2em}} \textbf{Direct} $v'u'$ towards $u'$; $\textsc{Marked} := \textsc{Marked} \cup \{u'\}$.
\end{minipage} }
\caption{The procedure \textsc{Forced Orientations}.}
\label{fig:ForcedOrientations}
\end{figure}

Intuitively, the procedure first finds a ``source node'' $v$, whose dedicated, pebble, and rock load is so high that it ``forces'' an incident edge $vu$ to be oriented away from $v$. The orientation of this edge then propagates through the graph, i.e. edge-orientations induced by the direction of $vu$ are established. Then the next ``source'' is found, and so on. To simplify our proofs, we assume that ties are broken according to a fixed total order if several pairs $(v,u)$ satisfy the conditions of the while-loops. 

The following lemma describes a basic property of the procedure \textsc{Forced Orientations}, that will be used in the subsequent discussion.
\begin{lemma}
Suppose that a node $v$ becomes overloaded during \textsc{Forced Orientations}. Then there exists a path $u_0 u_1 \dots u_k v$ of neutral edges, such that $dl(u_0)+pl(u_0)+rl(u_0)+w_{u_0 u_1} > (5/3+\beta/3)t$ before the procedure, that becomes directed from $u_0$ towards $v$ during the procedure (note that $u_0$ could be $v$). Furthermore, other than $u_k v$, no edge becomes directed toward $v$ in the procedure.
\label{lem:forcedOrientationLemma}
\end{lemma}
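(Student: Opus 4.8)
The plan is to isolate two invariants of \textsc{Forced Orientations} and deduce everything from them. First, $rl(\cdot)$ is non-decreasing during the procedure: an edge only ever passes from neutral to directed, never back. Second --- call it the \emph{slack invariant} --- at the moment one pass of the outer while-loop terminates (so its inner while-loop has just exited), every node $x$ in that pass's set \textsc{Marked} satisfies $dl(x)+pl(x)+rl(x)+w_{xz}\le (5/3+\beta/3)t$ for every edge $xz$ that is still neutral; this is exactly the negation of the inner loop's guard. I will also use that the orientations produced \emph{within a single pass} form, inside the ambient component of $G_{\rock}$ (a tree or a cycle, by Proposition~\ref{prop:preprocess}; an isolated node never becomes overloaded), a directed tree rooted at the pass's source, or a directed path or cycle emanating from the source; in particular, within one pass every node receives at most one incoming edge.

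I would establish the uniqueness part (``$u_k v$ is the only edge ever directed towards $v$'') first. Suppose $v$ received incoming edges in passes $T_1<T_2<\dots<T_m$ with $m\ge 2$ (by the previous paragraph it gets at most one per pass). For each $1\le i<m$, apply the slack invariant at the end of pass $T_i$ to $v$ and to the edge $v$ is about to receive in $T_{i+1}$, which is still neutral then: this gives $dl(v)+pl(v)+rl(v)+w\le (5/3+\beta/3)t$, where $rl(v)$ is $v$'s rock load at the end of $T_i$ and $w>0$ the weight of that edge. Dropping the positive term $w$ bounds $dl(v)+pl(v)+rl(v)$ on the interval after pass $T_i$, and keeping $w$ bounds it right after $v$ receives that edge in $T_{i+1}$; together with $rl(v)=0$ before $T_1$ and monotonicity of $rl(v)$, this shows $dl(v)+pl(v)+rl(v)\le (5/3+\beta/3)t$ at \emph{every} moment of the procedure --- so $v$ never becomes overloaded, contradicting the hypothesis. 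Hence $v$ has exactly one incoming edge, $u_k v$, directed during some pass $T$.

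For the existence of the forcing path I would backtrack the edge $u_k v$ inside pass $T$. The edge $u_k v$ was directed towards $v$ either because $u_k$ was chosen as the source of $T$ (then the path is just $u_0 v$, $u_0=u_k$), or in the inner loop, because $u_k\in\textsc{Marked}$; in the second case $u_k$ received some incoming edge $u_{k-1}u_k$ earlier in $T$, and I recurse on $u_{k-1}$. Following the (unique) predecessors along the directed tree / path / cycle of pass $T$, this reaches the source $u_0$ of $T$ after finitely many steps (in the fully-oriented-cycle case it runs once around the cycle, so $u_0=v$, which the statement allows). It remains to verify the load condition: $u_0$ was selected as source for the edge $u_0u_1$ (read $u_0v$ if $k=0$), hence $dl(u_0)+pl(u_0)+rl(u_0)+w_{u_0u_1}>(5/3+\beta/3)t$ at that instant. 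Were $rl(u_0)>0$, then $u_0$ would have received an incoming edge in an earlier pass; taking the last such pass $T'$, the edge $u_0u_1$ was still neutral at the end of $T'$, so the slack invariant after $T'$ gives the reverse inequality --- a contradiction. Therefore $rl(u_0)=0$ when $u_0$ is selected, hence $rl(u_0)=0$ before the procedure by monotonicity, and $dl(u_0)+pl(u_0)+w_{u_0u_1}>(5/3+\beta/3)t$ there; all edges of the path $u_0u_1\dots u_kv$ were neutral at the start and become directed from $u_0$ towards $v$ during $T$, so it is the path required.

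The step I expect to do the real work is the cross-pass uniqueness argument: one must see that the local slack an iteration leaves behind is precisely what keeps a node below the overload threshold across many iterations, and the bookkeeping of $rl(v)$ and of which edge is neutral when has to be carried out with some care. A smaller technical point is the cycle case --- both in the within-pass structure claim and in the backtracking --- where the source of a pass can itself acquire an incoming edge once the orientation propagates all the way around the cycle; one must check that the backtracking still terminates and that identifying $u_0=v$ in that case is legitimate.
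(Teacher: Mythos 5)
Your proof is correct and rests on the same two ingredients as the paper's: the negation of the inner while-loop's guard at the moment a pass terminates, and the impossibility of two distinct directed paths reaching one node inside a tree or a cycle. The only organizational difference is in the uniqueness part, where the paper argues directly that the last two edges directed toward an overloaded $v$ must fall into the same outer iteration, while you prove the contrapositive (at most one incoming edge per pass, and a node collecting incoming edges across several passes stays below the threshold by chaining the slack invariant); both routes are sound, and your handling of the source node's unchanged rock load is exactly the paper's ``easy to see'' observation made explicit.
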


\begin{proof}
We start with a simple observation. Let $ab$ be the first edge directed in some iteration of the procedure's outer while-loop; suppose from $a$ to $b$. It is easy to see that up to this moment, no edge has been directed toward $a$ in course of the procedure. Furthermore, if another edge $a'b'$ is directed in the same iteration of the outer while-loop, then there exists a path of neutral edges, starting with $ab$ and ending with $a'b'$, that becomes directed during this iteration. This proves the first part of the lemma. 

Now suppose that some node $v$ becomes overloaded and has more than one edge directed towards it during the procedure. Let $vx$ and $vy$ be the last two edges directed toward $v$, and note that both, $vx$ and $vy$, become directed in the same iteration of the outer while-loop (because as soon as one of the two is directed toward $v$, the other edge satisfies the conditions of the inner while-loop). Hence, there are two different paths directed towards $v$ (with final edges $vx$ and $vy$, respectively), both of which start with the first edge that becomes directed in this iteration of the outer while-loop. This is not possible, since every system is a tree or a cycle, a contradiction.
\qed\end{proof}

Clearly, if after the procedure \textsc{Forced Orientations} a node $v$ still has a neutral incident edge $vu$, then $dl(v)+pl(v)+rl(v)+w_{vu} \leq (5/3+\beta/3)t$. Now suppose that after the procedure, none of the nodes is overloaded. 
Then orienting the neutral edges in each system in such a way that every node has 
at most one more incoming edge gives us a solution with makespan at most $(5/3+\beta/3)t$. 
So assume the procedure ends with a non-empty set of overloaded nodes.
We then apply the procedure \textsc{Explore2} in Figure~\ref{fig:Explore2}.

\begin{figure}[!htb]
\centering
\fbox{ \begin{minipage}{0.85\linewidth}
\textsc{Explore2}\\[0.5em]
\textbf{Initialize} $\act := \emptyset$; $\C:=\emptyset$; $i:=0$. \textbf{Call} \textsc{Forced Orientations}. \\
\textbf{Repeat:} \\
\mbox{\hspace*{1em}} \textbf{If} $i=0$: $\act_i := \{v|v \mbox{ is overloaded}\}$. \\ 
\mbox{\hspace*{1em}} \textbf{Else} $\act_i := \{v| v \not \in \act, \mbox{$v$ is reachable from $\act_{i-1}$}\}$. \\
\mbox{\hspace*{1em}} \textbf{If} $\act_i = \emptyset$: \textbf{stop}.\\ 
\mbox{\hspace*{1em}} $\C_i := \act_i$; $\act := \act \cup \act_i$; $\C := \C \cup \C_i$.\\

\vspace{-0.6em}\mbox{\hspace*{1em}} (\textit{Conflict set construction})\\
\mbox{\hspace*{1em}} \textbf{While} $\exists v \not\in \C$ with a father $u \in \C$ \textbf{or} $\exists$ neutral $vu$ with $v \in \C$ \textbf{do:}\\
\mbox{\hspace*{2em}} \textbf{While} $\exists v \not\in \C$ with a father $u \in \C$: \\
\mbox{\hspace*{3em}} $\C_i := \C_i \cup \{v\}$; $\C := \C \cup \C_i$.\\
\mbox{\hspace*{2em}} \textbf{If} $\exists$ neutral $vu$ with $v \in \C$: \\
\mbox{\hspace*{3em}} \textbf{Direct} $vu$ towards $u$; \textbf{Call} \textsc{Forced Orientations}.\\

\vspace{-0.6em}\mbox{\hspace*{1em}} (\textit{Activation of nodes})\\
\mbox{\hspace*{1em}} \textbf{While} $\exists v \in \C \setminus \act$ satisfying one of the following conditions:\\[0.2em] 
\mbox{\hspace*{2em}} \textit{Rule 1}: $\exists u \in \mathcal{F}(v)$, such that $dl(v)+pl(v) + w_{vu} > (5/3+\beta/3)t$ \\
\mbox{\hspace*{2em}} \textit{Rule 2}: $\exists u \in \act \cap (\child(v)\cup\father(v))$, such that $w_{vu}< (2/3+\beta/3)t$\\
\mbox{\hspace*{1em}} \textbf{Do:} $\act_i := \act_i \cup \{v\}$; $\act := \act \cup \act_i$.\\

\vspace{-0.6em}\mbox{\hspace*{1em}} $i:=i+1$. 
\end{minipage} }
\caption{The procedure \textsc{Explore2}.}
\label{fig:Explore2}
\end{figure}

Let us elaborate the procedure. In each round, we perform the following three tasks.

\begin{enumerate}
\item Add those nodes reachable from the nodes in $\act_{i-1}$ into $\act_i$ in case of $i>1$; or the 
overloaded nodes into $\act_i$ in case of $i=0$. These nodes will be referred to as Type A nodes. 

\item In the sub-procedure \emph{Conflict set construction}, nodes not in the conflict set and having a directed path to those Type A nodes in $\act_i$ 
are continuously added into the 
conflict set $\C_i$. Furthermore, the earlier mentioned \emph{fake orientations}
are applied: each node $v \in \C_i$, if having an incident neutral edge $vu$, 
direct it toward $u$ and call the procedure \textsc{Forced Orientations}. It may happen that 
in this process, two disjoint nodes in $\C_i$ are now connected by a directed path $P$, then all nodes in $P$ along with all nodes 
having a path leading to $P$ 
are added into $\C_i$ (observe that all these nodes have a directed path to some Type A node in $\act_i$). 
We note that the order of fake orientations does not materially affect the outcome of the algorithm (see Lemma~\ref{lem:nondeterminism}). 

\item In the next sub-procedure \emph{Activation of nodes}, we use two rules to activate extra nodes in $\C \backslash \act$. Rule 1 activates the critical nodes; Rule 2 
activates those nodes whose father or child are already activated and they are connected by an edge of weight less than $(2/3+ \beta/3)t$. 
We will refer to the former as Type B nodes and the latter as Type C nodes. 

\end{enumerate}

Observe that except in the initial call of \textsc{Forced Orientations}, no node ever becomes overloaded in \textsc{Explore2} (by Lemma~\ref{lem:forcedOrientationLemma} and the fact that every system is a tree or a cycle). Let us define 
$\level(v)=i$ if $v \in \act_{i}$. In case $v \not \in \act$, let $\level(v) =\infty$. 
The next proposition summarizes some important properties of the procedure \textsc{Explore2}. 

\begin{proposition}After the procedure \textsc{Explore2}, the following holds. 
\begin{enumerate}
\item\label{item:Aclosed} All nodes reachable from $\act$ are in $\act$. 
\item Suppose that $v \in \act$ is reachable from $u \in \act$. Then $\level(v) \leq \level(u)+1$. 

\end{enumerate}

Furthermore, at the end of each round $i$, the following holds. 

\begin{enumerate}\setcounter{enumi}{2}

\item Every node $v$ that can follow a directed path to a node in $\C := \cup_{\tau=0}^{i} \C_\tau$ is in $\C$. 
Furthermore, 
if a node $v \in \C$ has an incident edge $vu$ with $u \not \in \C$, then $vu$ is directed toward $u$. 

\item Each node $v \in \act_i$ is one of the following three types.

\begin{enumerate}
\item \emph{\textbf{Type A}}: there exists another node $u \in \act_{i-1}$ so that $v$ is reachable from $u$, or 
$v$ is overloaded and is part of $\act_0$. 
\item \emph{\textbf{Type B}}: $v$ is activated via Rule 1 (hence $v$ is critical)\footnote{For simplicity, if a node can be activated by both Rule 1 and Rule 2, we assume it is activated by Rule 1.}, and there exists a directed path from $v$ to $u \in A_{i}$ of Type A. 
\item \emph{\textbf{Type C}}: $v$ is activated via Rule 2, and there exists an adjacent node $u \in \cup_{\tau=0}^{i} \act_\tau$ so that 
$w_{vu} < (2/3+\beta/3)t$ and $u \in \child(v)\cup\father(v)$. 
\end{enumerate} 
\end{enumerate}
\label{pro:third}
\end{proposition}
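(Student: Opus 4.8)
The plan is to verify each of the four items essentially by tracing the procedure \textsc{Explore2} and invoking the loop invariants of its sub-procedures. The first two items are ``global'' statements about the final state, whereas items 3 and 4 are per-round invariants, so I would structure the argument accordingly. For item~1, note that when \textsc{Explore2} terminates, it does so because some $\act_i = \emptyset$; since $\act_i$ is by definition the set of all nodes outside $\act$ reachable from $\act_{i-1}$, and each $\act_{\tau}$ for $\tau < i$ was folded into $\act$, any node reachable from $\act$ must already lie in $\act$ (if it were reachable from $\act_{i-1}$ it would have been placed in $\act_i$; if reachable only from some earlier $\act_{\tau}$, it was captured in $\act_{\tau+1}$). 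For item~2, I would argue that if $v$ is reachable from $u \in \act$ with $\level(u) = j$, then $v$ is reachable from $\act_j$, so $v$ would have been added to $\act_{j+1}$ in round $j+1$ unless $v \in \act$ already with $\level(v) \leq j+1$; either way $\level(v) \leq \level(u)+1$.

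For item~3, I would prove it as an invariant maintained at the end of each round $i$. The \emph{Conflict set construction} sub-procedure has an explicit outer while-loop that runs until no node outside $\C$ has a father in $\C$ and no node in $\C$ has a neutral incident edge; at termination of that loop the first assertion holds by construction (any node with a directed path to $\C$ has, in particular, a child-chain into $\C$ and hence got absorbed), and the second assertion (an edge $vu$ with $v \in \C$, $u \notin \C$ is directed toward $u$) holds because it cannot be neutral — the inner loop would have fake-oriented it and then \textsc{Forced Orientations} would have propagated — and it cannot be directed toward $v$, since then $v$ would be a father of $u$ forcing $u \in \C$. I also need to check that later rounds do not break this: subsequent rounds only add nodes to $\C$ and only orient previously neutral edges (again via fake orientations plus \textsc{Forced Orientations}), and by Lemma~\ref{lem:forcedOrientationLemma} together with the tree/cycle structure no node except the intended one receives a new incoming edge, so no node in $\C$ can end up with an \emph{incoming} edge from outside $\C$; the ``directed toward $u$'' property is therefore preserved.

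Item~4 is a direct case analysis on how a node enters $\act_i$. In round $i$, the first task puts into $\act_i$ exactly the overloaded nodes (when $i=0$) or the nodes reachable from $\act_{i-1}$ — these are the Type~A nodes, and item~2 together with item~1 guarantees they are correctly classified. The \emph{Activation of nodes} sub-procedure then adds nodes $v \in \C\setminus\act$ either via Rule~1, in which case $v$ is critical by the definition of critical and is a Type~B node — and here I must note that $v \in \C_i$ was placed in the conflict set because it has a directed path to some Type~A node of $\act_i$ (the \emph{Conflict set construction} only adds nodes with a child-chain reaching $\act_i$) — or via Rule~2, in which case the adjacent activated node $u$ with $w_{vu} < (2/3+\beta/3)t$ and $u \in \child(v)\cup\father(v)$ certifies $v$ as Type~C; since $u$ was activated in this or an earlier round, $u \in \cup_{\tau=0}^{i}\act_\tau$. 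The footnote convention resolves the overlap between the two rules. The main obstacle I anticipate is item~3: one has to be careful that the interleaving of fake orientations, the calls to \textsc{Forced Orientations}, and the possibility that two separate parts of $\C_i$ get joined by a newly directed path do not create an edge leaving $\C$ that points \emph{into} $\C$, and that nothing done in later rounds retroactively violates the invariant — this is where Lemma~\ref{lem:forcedOrientationLemma} and the acyclic-or-single-cycle structure of each system do the real work, and the bookkeeping of which nodes get swept into $\C$ when two directed fragments merge needs to be spelled out carefully.
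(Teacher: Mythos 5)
Your verification is correct and matches the paper's intent: the paper states this proposition without an explicit proof, as a summary of properties that follow directly from tracing \textsc{Explore2}, and your round-by-round invariant argument (termination of the Type-A step for items 1--2, the termination condition of the \emph{Conflict set construction} while-loop plus the fact that edges are never re-oriented for item 3, and the case analysis on the three ways a node enters $\act_i$ for item 4) is exactly the intended reasoning. One cosmetic slip: the path from $v$ into $\C$ in item 3 is a \emph{father}-chain (each node's father is the next node on the path), not a ``child-chain,'' but this does not affect the argument.
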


After the procedure \textsc{Explore2}, we apply the \textsc{Push} operation (if possible), defined as follows. 

\begin{definition} \textsc{Push} operation: push a pebble from $u^*$ to $v^*$ if the following conditions hold
(if there are multiple candidates, pick any). 
\label{def:complicatedPush}
\begin{enumerate}

\item The pebble is at $u^*$ and it can be assigned to $v^*$. 
\item $\level(v^*)=\level(u^*)+1$. 
\item $dl(v^*)+pl(v^*)+rl(v^*) \leq (5/3-2/3\cdot\beta)t$. 
\item $\child(v^*)=\emptyset$, or $dl(v^*)+pl(v^*) + w_{v^*u} \leq (5/3-2/3\cdot\beta)t$ for all $u\in\father(v)$. 
\end{enumerate}

\end{definition}

Definition~\ref{def:complicatedPush}(3) is meant to make sure that $v^*$ does not become overloaded after receiving 
a new pebble (whose weight can be as heavy as $\beta t$). Definition~\ref{def:complicatedPush}(4) 
says either $v^*$ is a leaf, or adding a pebble with weight 
as heavy as $\beta t$ does not cause $v^*$ to become critical. 

\begin{quote} \textbf{Algorithm~2}: Apply \textsc{Explore2}. If it ends with $\act_0=\emptyset$, return a solution with makespan at most $(5/3+\beta/3)t$. 
Otherwise, apply \textsc{Push}. If \textsc{push} is impossible, declare that $\OPT \geq t+1$. Un-orient all edges 
in $G_{\rock}$ and repeat this process. 

\end{quote}

\begin{lemma} When there is at least one overloaded node and the \textsc{Push} operation is no longer possible, $\OPT \geq t+1$. 
\label{lem:generalContradiction}
\end{lemma}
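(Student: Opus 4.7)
The plan is to derive a contradiction from the hypothesis $\OPT\leq t$. By Proposition~\ref{pro:third}.1, every pebble currently located in $\act$ can only be reassigned within $\act$, so $\sum_{v\in\act}pl(v)$ is invariant under any feasible reassignment. Fixing an arbitrary orientation $\psi\in\allorient$ and writing $rl_\psi(v)$ for the rock weight arriving at $v$ under $\psi$, the goal becomes to show
\[
\sum_{v\in\act}\bigl(dl(v)+pl(v)+rl_\psi(v)\bigr) \;>\; |\act|\cdot t.
\]
Since under $\OPT\leq t$ the nodes in $\act$ can jointly absorb at most $|\act|\cdot t$, this will yield the contradiction by pigeonhole.

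The bulk of the work is a node-by-node lower bound on $dl(v)+pl(v)$ driven by the type of $v$ as classified in Proposition~\ref{pro:third}.4. A node of $\act_0$ is overloaded and already contributes more than $(5/3+\beta/3)t$. For a Type-A node $v\in\act_i$ with $i\geq 1$, the infeasibility of \textsc{Push} forces condition~(3) or~(4) of Definition~\ref{def:complicatedPush} to fail at $v$, yielding either $dl(v)+pl(v)+rl(v)>(5/3-2\beta/3)t$ or the existence of a father $u'\in\father(v)$ with $dl(v)+pl(v)+w_{vu'}>(5/3-2\beta/3)t$. A Type-B node is critical by definition, so some father $u\in\father(v)$ satisfies $dl(v)+pl(v)+w_{vu}>(5/3+\beta/3)t$, giving in particular $dl(v)+pl(v)>(2/3+\beta/3)t$. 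A Type-C node has an activated neighbor joined by a light edge of weight $<(2/3+\beta/3)t$ and must be analyzed jointly with that partner, combining the two nodes' excess loads across the light edge.

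The main obstacle is bridging these local bounds, which reference the current (fake) orientation, to an arbitrary $\psi$. The plan is to do the accounting component by component within the conflict set $\C$, mirroring the system-by-system argument of Claim~\ref{cla:firstMain}. Proposition~\ref{pro:third}.3 guarantees that every edge from $\C$ to its complement is currently oriented outward, so under any $\psi$ such edges can only \emph{add} rock load to $\act$, never remove it. Inside each tree or cycle component of $\C$, a counting argument analogous to that in the proof of Claim~\ref{cla:firstMain} shows that at least $|\act\cap\text{comp}|-1$ rocks (resp.\ $|\act\cap\text{comp}|$ for a cycle) must hit $\act\cap\text{comp}$ under any $\psi$. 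Combining this count with the per-type lower bounds on $dl+pl$, and splitting into cases according to whether the component contains an overloaded node, a critical node, or only Type-C nodes, should deliver a strict excess of $t$ per activated node in each component. The delicate case is the Type-C dominated one: here one must use the light-edge bound $(2/3+\beta/3)t$ together with the saturation of the activated partner to close the gap to $t$ per node, and this is precisely where the constant $5/3+\beta/3$ and the assumption $\beta\geq 4/7$ will be essential.
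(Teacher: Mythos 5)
Your setup coincides with the paper's: fix $\psi\in\allorient$, use Proposition~\ref{pro:third}(\ref{item:Aclosed}) to confine the pebbles of $\act$ to $\act$, extract per-node lower bounds on $dl(v)+pl(v)$ from the failure of conditions (3)/(4) of \textsc{Push}, from Rule 1, and from the forced orientations, and then account component by component in $G_\rock[\C]$. The gap is the step you describe as ``combining this count with the per-type lower bounds'': counting \emph{how many} rocks land on $\act\cap K$ is not enough once rock weights vary in $(\beta t,t]$, because each usable per-node bound subtracts the weight of one \emph{specific} incident edge (the paper's $w_{vf(v)},w_{vd(v)},w_{vn(v)}$ in Claims~\ref{cla:plLeaves}--\ref{cla:plType3}), whereas $\psi$ may deliver a different, lighter rock to that node. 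Concretely, an internal node $v$ of a conflict-set component with one father and one child, both incident edges of weight close to $t$, activated as a Type~A node, only guarantees $dl(v)+pl(v)>(5/3-2\beta/3)t-w_{vn(v)}\approx(2/3-2\beta/3)t$; if $\psi$ hands it a rock of weight just above $\beta t$, its total is about $(2/3+\beta/3)t<t$. Worse, nodes with several children or several fathers (the paper's Type~1 nodes and multi-child roots) admit no nontrivial own-load bound at all, so ``a strict excess of $t$ per activated node'' is false node by node; the deficits must be paid by surplus located at \emph{other} nodes of the component.

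Closing this requires machinery your sketch does not contain: (i) the structural facts that in each component $K$ of $G_\rock[\C]$ every root and every leaf is activated, together with the count of roots and leaves in terms of the multiplicities $|\father(v)|,|\child(v)|$ (Claims~\ref{cla:RootsLeavesTree} and~\ref{cla:RootsLeavesCycle}) --- the leaves, each carrying a surplus of roughly $(2/3+\beta/3)t$, are the main source of compensation; and (ii) a matching between the edge weight appearing in a node's load-inequality and the rock actually assigned by $\psi$ --- the paper's covered/uncovered/doubly-covered bookkeeping and the budget-redistribution argument proving \eqref{eqn:StrongerContra} --- so that the subtracted weight cancels exactly when $\psi$ assigns that very edge, and otherwise the error $t-w_{vu}$ is charged to budget transferred onto the edge, with the Rule-2 light edges (weight $<(2/3+\beta/3)t$, both endpoints necessarily activated) providing the extra $(1/3-\beta/3)t$ you allude to in the Type-C case. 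Your three-way case split (overloaded / critical / only Type-C components) is too coarse to perform this routing, so as written the plan does not yield inequality \eqref{eqn:generalContra}: you have the correct skeleton and a correct identification of where the difficulty lies, but not a proof of the decisive accounting step.
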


\begin{lemma} For each node $v \in V$, let $\level(v)$ and $\level'(v)$ denote the levels before and after a 
\textsc{Push} operation, respectively. Then $\level'(v) \geq \level(v)$.
\label{lem:generalPush}
\end{lemma}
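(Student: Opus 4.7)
The plan is to mimic the inductive contradiction template of Lemma~\ref{lem:simplePush}. Suppose to the contrary that there exist nodes with $\level'(v) < \level(v)$, and among such violators pick $v$ minimizing $\level'(v)$. By this choice, every $u$ with $\level'(u) < \level'(v)$ satisfies $\level(u) \leq \level'(u)$, which will play the role of the induction hypothesis.

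First I would rule out $\level'(v)=0$. Such a $v$ must be overloaded at the end of the initial call of \textsc{Forced Orientations} in the new \textsc{Explore2}. Between the two runs, pebble load changes only at $u^*$ (strictly decreasing) and $v^*$ (strictly increasing), and Definition~\ref{def:complicatedPush}(3)--(4) prevent $v^*$ from becoming critical or overloaded by the push itself; in particular, $v^*$ cannot be the source of any new directed edge in the sense of Lemma~\ref{lem:forcedOrientationLemma}. Any remaining newly directed path leading to $v$ depends only on loads at nodes other than $\{u^*,v^*\}$, which are unchanged, so the same path could be established in the old run, making $v$ already overloaded then and giving $\level(v)=0\leq \level'(v)$, a contradiction.

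For the inductive step $\level'(v)=i\geq 1$, Proposition~\ref{pro:third}.4 classifies $v$ into Types A, B, or C. In Type A, there exists $u$ with $\level'(u)=i-1$ so that $v$ is reachable from $u$ via a pebble $p$ after the push. I would track $p$'s previous location: either $p$ was already at $u$ (so $v$ was reachable from $u$ before the push), or $p$ sat at some $u'$ from which it was pushed into $u$, with $\level(u') = \level(u) - 1$ by Definition~\ref{def:complicatedPush}(2). Combined with parts 1 and 2 of Proposition~\ref{pro:third} and the minimality of $v$, either case yields $\level(v) \leq i$, contradicting $\level(v) > i$. In Type B, $v$ is critical in the new round via some father $u$; since $dl(v)$ and $w_{vu}$ are invariant and Definition~\ref{def:complicatedPush}(4) forbids $v=v^*$ from becoming critical, $v$ was critical under the same father in the old round as well, and the Type-B directed path to a Type-A node, combined with the inductive hypothesis, forces $\level(v) \leq i$. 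In Type C, an adjacent $u \in \act$ with $w_{vu} < (2/3 + \beta/3)t$ and $u \in \child'(v)\cup\father'(v)$ activated $v$ by Rule 2; the minimality of $v$ gives $\level(u) \leq \level'(u) < i$, so Rule 2 (or the reachability closure) would have activated $v$ by round at most $\level(u)+1 \leq i$ in the old run.

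The main obstacle will be the bookkeeping around orientation changes between the two \textsc{Explore2} runs. Unlike the pebble load, which changes only at $\{u^*,v^*\}$, all edge orientations in $G_{\rock}$ are recomputed from scratch, so the father/child relations, and hence the identity of critical nodes and the targets of Rule 2, can shift. The key lever to tame this is Lemma~\ref{lem:forcedOrientationLemma}: every newly forced orientation in the new run must originate from a source node whose load has strictly increased, and the only such candidate source is $v^*$, whose load is capped by Definition~\ref{def:complicatedPush}(3)--(4). Consequently, the edges whose orientation differs between the two runs lie in a small region around $v^*$, and I expect a careful case analysis to align each Type A/B/C witness in the new run with a corresponding witness (via either the same adjacency, or a prior pebble location, or an earlier Rule-2 firing) in the old run, completing the contradiction analogously to the role played by the min-level tie-break of Definition~\ref{def:simplePush}(4) in Lemma~\ref{lem:simplePush}.
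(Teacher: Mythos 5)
There is a genuine gap, and it sits exactly where you place your hopes for "a careful case analysis." Your pivotal structural claim --- that by Definition~\ref{def:complicatedPush}(3)--(4) the node $v^*$ "cannot be the source of any new directed edge" and hence the orientation differences are confined to a small region --- is false. Condition (3) only caps $dl(v^*)+pl(v^*)+rl(v^*)$ by $(5/3-2/3\cdot\beta)t$ \emph{with the old run's final rock load}; in the new run's initial \textsc{Forced Orientations} the rock load starts at zero, and $dl(v^*)+pl(v^*)+w+w_{v^*u}$ can easily exceed $(5/3+\beta/3)t$ for an incident rock edge $v^*u$ of weight greater than $\beta t$. The paper's own analysis introduces the edge set $\overline{E}$ and the node sets $V_0,V_1$ precisely because the moved pebble \emph{does} trigger new forced orientations that propagate away from $v^*$. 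Symmetrically, you never address that the load at $u^*$ strictly \emph{decreases}, so forced orientations present in the old run can disappear in the new one; your statement that every changed orientation "must originate from a source node whose load has strictly increased" overlooks this entire direction of change. Since the conflict sets, the father/child relations, and hence the very applicability of Rules 1 and 2 are all functions of these orientations, your Type B and Type C cases are unsupported: to conclude that $v$ would have been activated by round $i$ in the old run you must show that $v$ belongs to the old run's conflict set by that round and has the appropriate father/child there, and nothing in your sketch delivers this. Note also that Definition~\ref{def:complicatedPush} has no analogue of the minimal-level tie-break of Definition~\ref{def:simplePush}(4), so the closing move of the two-valued proof you are emulating is unavailable.

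For comparison, the paper does not run a minimal-counterexample induction at all. It factors the push into two monotone perturbations: first a \emph{clone} of the pebble is added at $v^*$ (only load increases occur, and Lemma~\ref{lem:putClone} shows \textsc{Explore2} produces identical $\act_i$ and $\C_i$ in every round), then the original pebble is deleted from $u^*$ (only load decreases occur, and Lemma~\ref{lem:removeOriginal} shows the cumulative activated and conflict sets can only shrink). Each half is analyzed by letting one instance \emph{mimic} the other's fake orientations, which is legitimate only because of the separate non-determinism result (Lemma~\ref{lem:nondeterminism}). That decomposition is the idea your proposal is missing: it is what tames the simultaneous increase-and-decrease that a one-shot induction would have to confront.
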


The preceding two lemmas are proven in sections~\ref{sec:LemGeneralContradiction} and~\ref{sec:LemGeneralPush}, respectively. We again use the potential function

$$ \Phi = \sum_{v \in \act} (|V|-\level(v)) \cdot (\mbox{number of pebbles at $v$)}$$ 

to argue the polynomial running time of Algorithm~2. Trivially, $0 \leq \Phi \leq |V|\cdot|\pebble|$. Furthermore, by Lemma~\ref{lem:generalPush} and the fact that a pebble is pushed to a node with higher level, the potential $\Phi$ strictly decreases after each \textsc{Push} operation. This implies that Algorithm~2 finishes in polynomial time. 

We can therefore conclude:

\begin{theorem} Let $\beta \in [4/7, 1)$. With arbitrary dedicated loads on the machines, if jobs of weight greater than $\beta W$
can be assigned to only two machines, and jobs of weight at most $\beta W$ can be assigned to any number of machines, 
we can find a $5/3+\beta/3$ approximate solution in polynomial time. 
\label{thm:secondTheorem}
\end{theorem}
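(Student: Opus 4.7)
The plan is to stitch together the structural pieces already in place. First, by Proposition~\ref{prop:preprocess}, I may assume $G_\rock$ is a disjoint union of isolated nodes, trees, and cycles of length greater than $2$, with $dl(v)\leq t$ for every node $v$. Then, following the binary-search framework of Section~\ref{sec:pre}, I search over $t\in[W,\sum_{j\in\job}w_j]$ for the smallest value at which Algorithm~2 returns an assignment. For this smallest successful $t$ we have $\OPT\geq t$ (otherwise Algorithm~2 would already have succeeded for $t-1$), and the returned assignment has makespan at most $(5/3+\beta/3)t$; the approximation ratio is therefore at most $(5/3+\beta/3)t/\OPT\leq 5/3+\beta/3$.

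The core of the proof is thus the correctness and polynomial running time of Algorithm~2 for a fixed $t$. For correctness, I split into two cases. If \textsc{Explore2} terminates with $\act_0=\emptyset$, then no node is overloaded, and by the observation following \textsc{Forced Orientations} every node $v$ still incident to a neutral edge $vu$ satisfies $dl(v)+pl(v)+rl(v)+w_{vu}\leq (5/3+\beta/3)t$; orienting the remaining neutral edges in each tree or cycle so that each node absorbs at most one additional incoming rock yields a feasible assignment of makespan at most $(5/3+\beta/3)t$. Otherwise, we iterate \textsc{Push} after \textsc{Explore2}; by Lemma~\ref{lem:generalContradiction}, the inability to push while some node is overloaded certifies $\OPT\geq t+1$, so Algorithm~2's conclusions are always correct.

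For polynomial running time, I use the potential $\Phi=\sum_{v\in\act}(|V|-\level(v))\cdot(\text{number of pebbles at }v)$, which satisfies $0\leq \Phi\leq |V|\cdot|\pebble|$. Each \textsc{Push} moves a pebble from $u^*$ with level $\ell$ to $v^*$ with level $\ell+1$, and by Lemma~\ref{lem:generalPush} no node's level decreases across a \textsc{Push}, so $\Phi$ strictly decreases. Since \textsc{Explore2}, \textsc{Forced Orientations}, and each \textsc{Push} run in polynomial time, and the binary search contributes only a $\log(\sum_j w_j)$ factor, the overall running time is polynomial.

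The main obstacle hidden behind the theorem is Lemma~\ref{lem:generalContradiction}: when no \textsc{Push} is possible but some node is overloaded, one must show that the pebble and dedicated loads of $\act$, together with the rocks forced into $\act$ by any orientation in $\allorient$, strictly exceed $t\cdot|\act|$---a contradiction with $\OPT\leq t$ via a pigeonhole argument on the pebbles trapped at $\act$ (by Proposition~\ref{pro:third}.\ref{item:Aclosed}). I expect this to demand a careful case analysis over the three activation types (A, B, C) of Proposition~\ref{pro:third}, leveraging $\beta\geq 4/7$ so that critical (Type~B) nodes contribute enough excess load to offset the lighter bookkeeping available at Rule~2 (Type~C) nodes, while the fake orientations introduced in \textsc{Explore2} ensure the load accounting remains globally consistent across systems.
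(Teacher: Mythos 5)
Your proposal is correct and follows essentially the same route as the paper: binary search for the smallest successful $t$, Algorithm~2 with \textsc{Explore2}/\textsc{Push}, correctness of the failure declaration via Lemma~\ref{lem:generalContradiction}, and termination via Lemma~\ref{lem:generalPush} together with the potential $\Phi$. The paper's proof of the theorem is exactly this assembly, with the heavy lifting deferred to the two cited lemmas just as you describe.
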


\subsection{Proof of Lemma~\ref{lem:generalContradiction}}\label{sec:LemGeneralContradiction}
Our goal is to show that in any feasible solution, the activated nodes $\act$ must handle a total load of more than $|\act| t$,
which implies that $\OPT \geq t+1$. For the proof, we focus on a single component $K$ of $G_\rock[\C]$, the subgraph of $G_\rock$ induced by the conflict set $\C$, 
and a fixed orientation $\psi \in \allorient$. Let $\psi(v)$ denote the total weight of the rocks assigned to 
any $v\in \act$ by $\psi$ (note that $0\leq \psi(v) \leq t$), and let $\act(K)$ denote the set of activated nodes in $K$. We will show that

\begin{equation}\label{eqn:generalContra}
\sum_{v \in \act(K)} pl(v) + dl(v) + \psi(v)> |\act(K)|t
\end{equation}

if $\act(K)\neq \emptyset$. The lemma then follows by summing over all components of $G_\rock[\C]$, and noting that the pebbles on the nodes in $\act$ can only be assigned to the nodes in $\act$ (Proposition \ref{pro:third}(\ref{item:Aclosed})).

If $K$ consists only of a single activated node $v$, then (\ref{eqn:generalContra}) clearly holds, as $pl(v) + dl(v) > (5/3-2/3\cdot\beta)t \geq t$ (since $v$ is a Type A node and \textsc{push} is no longer possible). In the following, we will assume that $\father(v)\cup\child(v)\neq\emptyset$ for all $v\in\act(K)$.

\begin{definition} 
For every non-leaf $v\in\act(K)$, fix some node $d(v)\in\child(v)$, such that $w_{v d(v)} = \max_{u \in \child(v)}w_{vu}$. 
\end{definition}
\begin{definition} 
For every non-root $v\in\act(K)$, fix some node $f(v)\in\father(v)$, such that $w_{v f(v)} = \max_{u \in \father(v)}w_{vu}$.
\end{definition}
\begin{definition} 
For every node $v\in\act(K)$ that is neither a root nor a leaf, fix some node $n(v)\in\child(v)\cup\father(v)$, such that $w_{v n(v)} = \max_{u \in\child(v)\cup\father(v)}w_{vu}$.
\end{definition}
\begin{definition}
For every node $v\in\act(K)$ that was activated using Rule 2 in the final execution of \textsc{Explore2}, fix some node $a(v)\in\act(K)\cap(\child(v)\cup\father(v))$ 
with $w_{v a(v)}<(2/3+\beta/3)t$, such that $a(v)$ has been activated before $v$.

\end{definition}

We classify the nodes $v \in \act(K)$ that are neither a root nor a leaf, into the following three types. 
\begin{description}
\item[Type 1:] $|\child(v)|>1$.
\item[Type 2:] $|\child(v)|=1$ and $v$ was activated via Rule 2 (i.e., as a Type C node). 
\item[Type 3:] $|\child(v)|=1$ and $v$ was not activated via Rule 2 (i.e. as a Type A or Type B node). 

\end{description}

In the following, we summarize the inequalities that we use for the different types of nodes, in order to prove~\eqref{eqn:generalContra}. We refer to them as the \emph{load-inequalities}.

\begin{numberedClaim} For every leaf $v \in \act(K)$, $pl(v)+dl(v) > (5/3+\beta/3)t - w_{v f(v)}$.
\label{cla:plLeaves}
\end{numberedClaim}

\begin{proof}

If $v \in \act_i$ is activated as a Type A node, then it is either overloaded or is reachable from a node $u \in \act_{i-1}$. 
In both cases, since \textsc{push} is no longer possible, $pl(v)+dl(v)+rl(v) > (5/3-2/3\cdot\beta)t$. The claim follows as $rl(v)=0$ and $w_{v f(v)} > \beta t$. 
If $v$ is not activated as a Type A node, then $v$ first becomes part of $\C$ and then becomes activated via Rule 1 or Rule 2. 
In this case, at the moment $v$ becomes part of $\C$, it must have a father $u \in \C$. The edge $vu$ becomes oriented towards $u$ 
only when \textsc{Forced Orientations} is called and $dl(v)+pl(v)+rl(v)+w_{vu} > (5/3+\beta/3)t$. The claim follows again as $rl(v)=0$ and $w_{v f(v)} \geq w_{vu}$.
\qed\end{proof}

\begin{numberedClaim} For every root $v\in \act(K)$ with $|\child(v)|=1$, $pl(v)+dl(v) > (5/3-2/3\cdot\beta)t - w_{v d(v)}$.
\label{cla:plRootsA}
\end{numberedClaim}

\begin{proof}
As $v \in \act_i$ has no father in $\C$, it must either be overloaded or reachable from an activated node $u \in \act_{i-1}$. 
In both cases, $pl(v)+dl(v)+rl(v) > (5/3-2/3\cdot\beta)t$, since the \textsc{Push} operation is no longer possible. The claim follows as $|\child(v)|=1$ implies $w_{v d(v)} \geq rl(v)$.
\qed\end{proof}

\begin{numberedClaim} For every root $v\in \act(K)$ with $|\child(v)|>1$, $pl(v)+dl(v) \geq 0$.
\label{cla:plRootsB}
\end{numberedClaim}
\begin{proof}
Trivially true.
\qed\end{proof}
\begin{numberedClaim} For every Type 1 node $v\in \act(K)$, $pl(v)+dl(v) \geq 0$.
\label{cla:plType1}
\end{numberedClaim}
\begin{proof}
Trivially true.
\qed\end{proof}

\begin{numberedClaim} For every Type 2 node $v\in \act(K)$, $pl(v)+dl(v) > (5/3+\beta/3)t - w_{v f(v)} - w_{v d(v)}$.
\label{cla:plType2}
\end{numberedClaim}
\begin{proof}
As $v$ is activated using Rule 2, it first becomes part of $\C$ without being activated. For this to happen, 
it must have a father $u \in \C$. The edge $vu$ becomes oriented towards $u$ only when \textsc{Forced Orientations} is called and $dl(v)+pl(v)+rl(v)+w_{vu} > (5/3+\beta/3)t$. 
The claim follows as $w_{v d(v)} \geq rl(v)$ (since $|\child(v)|=1$) and $w_{v f(v)} \geq w_{vu}$.
\qed\end{proof}
\begin{numberedClaim} For every Type 3 node $v\in \act(K)$, $pl(v)+dl(v) > (5/3-2/3\cdot\beta)t - w_{v n(v)}$.
\label{cla:plType3}
\end{numberedClaim}

\begin{proof}
If $v$ is overloaded, the claim directly follows from the fact that $w_{v n(v)}\geq rl(v)$. Furthermore, if $v \in A_{i}$ is reachable from an activated node $u\in A_{i-1}$, 
then the claim follows from the definition of $n(v)$ and the fact that either the third or the fourth condition of \textsc{push} must be violated. 
The only other possibility for $v$ to be activated is via Rule 1, which together with the definition of $n(v)$ implies our claim. 
\qed\end{proof}

To prove~\eqref{eqn:generalContra}, we look at each node $v\in \act(K)$ separately and calculate how much it contributes to the balance under some simplifying assumptions. In the end, we will see that the nodes in $\act(K)$ have enough load to compensate for the assumptions we made.

Let $E_{\act(K)}$ denote the edges of $K$ that are incident with the nodes $\act(K)$, i.e. $E_{\act(K)}:=\{vu \in \rock : u \in \act(K), v \in \child(u)\cup\father(u)\}$. We say that an edge $vu\in E_{\act(K)}$ is \emph{covered} if $w_{vu}$ appears on the right-hand side of $u$'s and/or $v$'s load-inequality. For example, if $v$ is a leaf, then $v f(v)$ is covered. Every edge in $E_{\act(K)}$ that is not covered is called \emph{uncovered}. Finally, we say that an edge $vu\in E_{\act(K)}$ is \emph{doubly covered} if $w_{vu}$ appears on the right-hand side of both $u$'s and $v$'s load-inequality.

We distinguish two cases.

\subsubsection{Case 1: $K$ is a tree.}

\begin{numberedClaim} $K$ contains $1+\sum_{v \in K: \father(v)\neq\emptyset}(|\father(v)|-1)$ many roots, and 
$1+\sum_{v \in K: \child(v)\neq\emptyset}(|\child(v)|-1)$ many leaves. Furthermore, every root and leaf in $K$ is activated.
\label{cla:RootsLeavesTree}
\end{numberedClaim}
\begin{proof}
The first part simply follows from the degree sum formula for directed graphs and the fact that $K$ is a tree. For the second part, observe that any node $v\in \C$ that is not activated as Type A node, must have had a father $u \in \C$ already before it got added into $\C$ itself. This proves that every root in $K$ is activated (as a Type A node). 

If a leaf $v\in \C$ is not activated as Type A node, then its incident edge $vu$ with $u \in \C$ is oriented toward $u$ only when 
\textsc{Forced Orientations} is called and $dl(v)+pl(v)+rl(v)+w_{vu} > (5/3+\beta/3)t$. As $v\in \C$ ends up a leaf, $rl(v)=0$, and Rule 1 would have applied to $v$. So every leaf in $K$ is activated. 
\qed\end{proof}

In our calculations, we will assume that every covered edge $vu\in E_{\act(K)}$ has weight $w_{vu}=t$, and that $\psi(v)=t$ for all $v \in\act(K)$. With these assumptions, we will show that 
\begin{equation}\label{eqn:StrongerContra}
\begin{split}
\sum_{v \in \act(K)} pl(v) &+ dl(v) + \psi(v)> |\act(K)|t \\ &-|\{\text{doubly covered }vu \in E_{\act(K)} : w_{vu}<(2/3+\beta/3)t\}|\cdot (1/3-\beta/3)t\\ &+ |\{ \text{uncovered }vu \in E_{\act(K)}\}|\cdot (t-w_{vu})\\ &+t. 
\end{split}
\end{equation}

Let us consider the error caused by these two assumptions when we lower-bound the term $\sum_{v \in \act(K)} pl(v) + dl(v) + \psi(v)$, and in doing so, we will show 
why \eqref{eqn:StrongerContra} implies~\eqref{eqn:generalContra}.

Consider an edge $vu \in E_{\act(K)}$ that $\psi$ assigns to a node in $\act(K)$, say $v$. Consider three possibilities. 

\begin{itemize} 
\item If $vu$ is covered, then $w_{vu}$ appears on the LHS of~\eqref{eqn:generalContra} 
as a negative term after we plug in the load-inequalities, and the two terms $\psi(v)$ and $w_{vu}$ cancel each other. Hence, in this case, we make no error by assuming both terms to be equal to $t$. 
\item If $vu$ is doubly covered and $w_{vu}<(2/3+\beta/3)t$, our assumptions underestimate the load $\sum_{v \in \act(K)} pl(v) + dl(v) + \psi(v)$ by more than $(1/3-\beta/3)t$. 

\item If $vu$ is uncovered, then we overestimate $\psi(v)$ by at most $t-w_{vu}$. 

\end{itemize} 

Finally, we note that $\psi$ must assign an edge from $E_{\act(K)}$ to every node in $\act(K)$ except for possibly one. For this special node $v^*$ that does not receive an edge 
from $E_{\act(K)}$ under $\psi$, we overestimate $\psi(v^*)$ by at most $t$.
In conclusion, when we remove our assumptions, $\sum_{v \in \act(K)} pl(v) + dl(v) + \psi(v)$ increases by more than $(1/3-\beta/3)t$ per doubly covered edge $vu \in E_{\act(K)}$ with $w_{vu}<(2/3+\beta/3)t$, and decreases by at most $t-w_{vu}$ per uncovered edge $vu \in E_{\act(K)}$, plus possibly another $t$ for the special node $v^*$. Hence, if we prove inequality~\eqref{eqn:StrongerContra} under the aforementioned assumptions, \eqref{eqn:generalContra} must hold after we remove the assumptions, and Lemma~\ref{lem:generalContradiction} 
would follow. 

We now turn to proving~\eqref{eqn:StrongerContra} when every covered edge $vu\in E_{\act(K)}$ has weight $w_{vu}=t$, and $\psi(v)=t$ for all $v \in\act(K)$. To this end, we consider the value $pl(v) + dl(v) + \psi(v)$ as a \emph{budget} of node $v$. Furthermore, we also assign budgets to edges $vu\in E_{\act(K)}$ that are doubly covered and have weight $w_{vu}<(2/3+\beta/3)t$. Each of them gets a budget of $(1/3-\beta/3)t$. Other remaining edges of $E_{\act(K)}$ have budget 0. 

By redistributing budgets between nodes and edges, we will ensure that eventually 

\begin{enumerate}
\item[(i)] every node in $\act(K)$ has a budget of at least $t$,

\item[(ii)] there exists a leaf in $\act(K)$ with budget strictly greater than $t+(2/3+\beta/3)t$,

\item[(iii)] there exists a root in $\act(K)$ with budget at least $t+(2/3-2/3\cdot\beta)t$, 

\item [(iv)] every uncovered edge $vu \in E_{\act(K)}$ has a budget of at least $t-w_{vu}$, and

\item[(v)] no edge in $E_{\act(K)}$ has negative budget. 

\end{enumerate}

This would complete the proof.

We start with the leaf nodes. If $v \in \act(K)$ is a leaf, then (using Claim~\ref{cla:plLeaves}) it has a budget of more than $(5/3+\beta/3)t - w_{v f(v)} + \psi(v) = (5/3+\beta/3)t$. Using Claim~\ref{cla:RootsLeavesTree}, we can therefore add $(|\child(u)|-1)\cdot (2/3+\beta/3)t$ to the budget of every non-leaf $u\in \act(K)$, such that (i) and (ii) are still satisfied for all leaves.

Next we consider the roots. If $v \in \act(K)$ is a root and $|\child(v)|=1$, then (using Claim~\ref{cla:plRootsA}) it has a budget of more than $(5/3-2/3\cdot\beta)t$. If $v \in \act(K)$ is a root and $|\child(v)|>1$, then (using Claim~\ref{cla:plRootsB} and the load added in the previous step) it has a budget of at least $t+ (|\child(v)|-1)\cdot (2/3+\beta/3)t$. In the latter case, we transfer $(2/3-2/3\cdot\beta)t$ to the budget of every edge in $E_{\act(K)}$ that is incident with $v$. The budget of $v$ thereby remains at least $t+ (|\child(v)|-1)\cdot (2/3+\beta/3)t - |\child(v)|\cdot (2/3-2/3\cdot\beta)t = (1/3-\beta/3)t + |\child(v)|\cdot \beta t\geq (5/3-2/3\cdot\beta)t$, where the last inequality follows from $|\child(v)|\geq 2$ and $\beta \geq 4/7$. Using Claim~\ref{cla:RootsLeavesTree}, we can thus add $(|\father(u)|-1)\cdot (2/3-2/3\cdot\beta)t$ to the budget of every non-root $u\in \act(K)$, such that (i) and (iii) are satisfied for all roots.

Before we move on to Type 1, 2, and 3 nodes, we take one step back and visit the leaves again, as their budget has increased again through the latest redistribution of load. Namely, every leaf $v \in \act(K)$ got an additional load of $(|\father(v)|-1)\cdot (2/3-2/3\cdot\beta)t$, which we now use to add $(2/3-2/3\cdot\beta)t$ to the budget of every edge in $E_{\act(K)}$ that is incident with $v$, except to $v f(v)$ (which is surely covered). After this, (ii) and (iii) are satisfied, (i) holds for every root and every leaf, and every uncovered edge $vu \in E_{\act(K)}$ that is incident with a root or a leaf has a budget of at least $(2/3-2/3\cdot\beta)t$. 

Let us now consider the nodes of Type 1. Such a node $v$ (using Claim~\ref{cla:plType1} and the load added in previous steps) has a budget of at least $t + (|\child(v)|-1)\cdot (2/3+\beta/3)t + (|\father(v)|-1)\cdot (2/3-2/3\cdot\beta)t$. We transfer $(2/3-2/3\cdot\beta)t$ to the budget of every edge in $E_{\act(K)}$ that is incident with $v$. Since there are $|\child(v)|+|\father(v)|$ such edges, the budget at $v$ remains at least $t + (|\child(v)|-1)\cdot (2/3+\beta/3)t - (|\child(v)|+1)\cdot (2/3-2/3\cdot\beta)t=(|\child(v)|+1)\beta t-(1/3+2/3\cdot\beta)t\geq t$, as $|\child(v)|\geq 2$ and $\beta \geq 4/7$.

Next we consider the nodes of Type 2. Such a node $v$ (using Claim~\ref{cla:plType2} and the load added in previous steps) has a budget of more than $(2/3+\beta/3)t+ (|\father(v)|-1)\cdot (2/3-2/3\cdot\beta)t$. We transfer $(2/3-2/3\cdot\beta)t$ to the budget of every edge in $E_{\act(K)}$ that is incident with $v$, except to $v f(v)$ and $v d(v)$ (which are surely covered). Since there are $|\father(v)|-1$ such edges, the resulting budget at $v$ is still more than $(2/3+\beta/3)t$. We now reduce the budget of the edge $v a(v)$ by $(1/3-\beta/3)t$ and add this load to $v$'s budget, which is then more than $t$.
We will show later that this last step (reducing the budget of $va(v)$) does not cause a violation of (v). 

Finally, we consider the nodes of Type 3. Such a node $v$ (using Claim~\ref{cla:plType3} and the load added in previous steps) has a budget of more than $(5/3-2/3\cdot\beta)t+ (|\father(v)|-1)\cdot (2/3-2/3\cdot\beta)t$. We transfer $(2/3-2/3\cdot\beta)t$ to the budget of every edge in $E_{\act(K)}$ that is incident with $v$, except to $v n(v)$ (which is surely covered). Since there are $|\father(v)|$ such edges, the resulting budget at $v$ is still more than $t$.

After the above redistributions of load, (i), (ii), and (iii) are satisfied. Furthermore, suppose that some edge $vu\in E_{\act(K)}$ is uncovered and has weight $w_{vu}\geq (2/3+\beta/3)t$. Then at least once, we have added $(2/3-2/3\cdot\beta)t$ to the budget of this edge, and we never reduced it. Therefore it has a budget of at least $(2/3-2/3\cdot\beta)t\geq (1/3-\beta/3)t \geq t-w_{vu}$, and (iv) holds for this edge. If, on the other hand, an uncovered edge $vu\in E_{\act(K)}$ has weight $w_{vu}< (2/3+\beta/3)t$, then both $u$ and $v$ are in $\act(K)$ (due to activation rule 2), and $(2/3-2/3\cdot\beta)t$ was added twice to the budget of $vu$. Furthermore, if this budget got reduced at some point, then at most once ($u = a(v)$ and $v = a(u)$ cannot happen simultaneously). The final budget of $vu$ is thus at least $2\cdot (2/3-2/3\cdot\beta)t-(1/3-\beta/3)t= t-\beta t>t-w_{vu}$. 
Hence, for such an edge the assertion (iv) also holds. 

Finally, for (v), observe that the only point where we reduce the budget of a covered edge $vu \in E_{\act(K)}$ and add it to $v$'s budget, is when $v$ is of Type 2, $w_{vu}<(2/3+\beta/3)t$, and $u = a(v)$. Furthermore, both $u$ and $v$ have to be in $\act(K)$ (due to activation rule 2). In this case, the budget of $vu$ is reduced exactly once, by a value of $(1/3-\beta/3)t$. If $vu$ is doubly covered, then it had an initial budget of $(1/3-\beta/3)t$, and its budget therefore remains non-negative. If, on the other hand, $vu$ is covered but not doubly covered, then at some point its budget was increased by $(2/3-2/3\cdot\beta)t$. Hence, the final budget is at least $(2/3-2/3\cdot\beta)t-(1/3-\beta/3)t=(1/3-\beta/3)t\geq0$. This concludes the proof.

\subsubsection{Case 2: K is a cycle.}

\begin{numberedClaim} $K$ contains $\sum_{v \in K: \father(v)\neq\emptyset}(|\father(v)|-1)$ many roots, and 
$\sum_{v \in K: \child(v)\neq\emptyset}(|\child(v)|-1)$ many leaves. Furthermore, every root and leaf in $K$ is activated.
\label{cla:RootsLeavesCycle}
\end{numberedClaim}
\begin{proof}
The first part simply follows from the degree sum formula for directed graphs and the fact that $K$ is a cycle. The second part is analogous to Claim~\ref{cla:RootsLeavesTree}.
\qed\end{proof}

We will again assume that every covered edge $vu\in E_{\act(K)}$ has weight $w_{vu}=t$, and that $\psi(v)=t$ for all $v \in\act(K)$. With these assumptions, we will show that 
\begin{equation}\label{eqn:StrongerContraCycle}
\begin{split}
\sum_{v \in \act(K)} pl(v) &+ dl(v) + \psi(v)> |\act(K)|t \\ &-|\{\text{doubly covered }vu \in E_{\act(K)} : w_{vu}<(2/3+\beta/3)t\}|\cdot (1/3-\beta/3)t\\ &+ |\{ \text{uncovered }vu \in E_{\act(K)}\}|\cdot (t-w_{vu}). 
\end{split}
\end{equation}

By the same arguments as in Case 1, the error caused by the above two assumptions when we lower-bound the term $\sum_{v \in \act(K)} pl(v) + dl(v) + \psi(v)$ is: 
\begin{itemize} 
\item we underestimate the term by more than $(1/3-\beta/3)t$ per doubly covered edge $vu \in E_{\act(K)}$ with $w_{vu}<(2/3+\beta/3)t$,
\item we overestimate the term by at most $t-w_{vu}$ per uncovered edge $vu \in E_{\act(K)}$. 
\end{itemize} 
Note that, since $K$ is a cycle, $\psi$ must assign an edge from $E_{\act(K)}$ to every node in $\act(K)$, and thus there is no special node $v^*$ as in Case 1. Hence, if we prove inequality~\eqref{eqn:StrongerContraCycle} under the aforementioned assumptions, \eqref{eqn:generalContra} must hold after we remove the assumptions, and Lemma~\ref{lem:generalContradiction} would follow.

We now prove~\eqref{eqn:StrongerContraCycle} when every covered edge $vu\in E_{\act(K)}$ has weight $w_{vu}=t$, and $\psi(v)=t$ for all $v \in\act(K)$. Again, we consider the value $pl(v) + dl(v) + \psi(v)$ as a \emph{budget} of node $v$. Furthermore, we also assign budgets to edges $vu\in E_{\act(K)}$ that are doubly covered and have weight $w_{vu}<(2/3+\beta/3)t$. Each of them gets a budget of $(1/3-\beta/3)t$. Other remaining edges of $E_{\act(K)}$ have budget 0. 

By redistributing budgets between nodes and edges, we will ensure that eventually 

\begin{enumerate}
\item[(i)] every node in $\act(K)$ has a budget of at least $t$,

\item[(ii)] at least one node in $\act(K)$ has a budget strictly greater than $t$,

\item [(iii)] every uncovered edge $vu \in E_{\act(K)}$ has a budget of at least $t-w_{vu}$, and

\item[(iv)] no edge in $E_{\act(K)}$ has negative budget. 

\end{enumerate}

This would complete the proof.

We start with the leaf nodes. If $v \in \act(K)$ is a leaf, then (using Claim~\ref{cla:plLeaves}) it has a budget of more than $(5/3+\beta/3)t - w_{v f(v)} + \psi(v) = (5/3+\beta/3)t$. Using Claim~\ref{cla:RootsLeavesCycle}, we can therefore add $(|\child(u)|-1)\cdot (2/3+\beta/3)t$ to the budget of every non-leaf $u\in \act(K)$, such that (i) is still satisfied for all leaves.

Next we consider the roots. If $v \in \act(K)$ is a root and $|\child(v)|=1$, then (using Claim~\ref{cla:plRootsA}) it has a budget of more than $(5/3-2/3\cdot\beta)t$. If $v \in \act(K)$ is a root and $|\child(v)|>1$, then (using Claim~\ref{cla:plRootsB} and the load added in the previous step) it has a budget of at least $t+ (|\child(v)|-1)\cdot (2/3+\beta/3)t$. In the latter case, we transfer $(2/3-2/3\cdot\beta)t$ to the budget of every edge in $E_{\act(K)}$ that is incident with $v$. The budget of $v$ thereby remains at least $t+ (|\child(v)|-1)\cdot (2/3+\beta/3)t - |\child(v)|\cdot (2/3-2/3\cdot\beta)t = (1/3-\beta/3)t + |\child(v)|\cdot \beta t\geq (5/3-2/3\cdot\beta)t$, where the last inequality follows from $|\child(v)|\geq 2$ and $\beta \geq 4/7$. Using Claim~\ref{cla:RootsLeavesCycle}, we can thus add $(|\father(u)|-1)\cdot (2/3-2/3\cdot\beta)t$ to the budget of every non-root $u\in \act(K)$, such that (i) is satisfied for all roots.

Before we move on to Type 1, 2, and 3 nodes, we take one step back and visit the leaves again, as their budget has increased again through the latest redistribution of load. Namely, every leaf $v \in \act(K)$ got an additional load of $(|\father(v)|-1)\cdot (2/3-2/3\cdot\beta)t$, which we now use to add $(2/3-2/3\cdot\beta)t$ to the budget of every edge in $E_{\act(K)}$ that is incident with $v$, except to $v f(v)$ (which is surely covered). After this, (i) holds for every root and every leaf, and every uncovered edge $vu \in E_{\act(K)}$ that is incident with a root or a leaf has a budget of at least $(2/3-2/3\cdot\beta)t$. 

As $K$ is a cycle, there cannot be a node of Type 1, since every $v \in \act(K)$ with $|\child(v)|>1$ is a root. 

Let us now consider the nodes of Type 2. Such a node $v$ (using Claim~\ref{cla:plType2} and the load added in previous steps) has a budget of more than $(2/3+\beta/3)t+ (|\father(v)|-1)\cdot (2/3-2/3\cdot\beta)t$. We transfer $(2/3-2/3\cdot\beta)t$ to the budget of every edge in $E_{\act(K)}$ that is incident with $v$, except to $v f(v)$ and $v d(v)$ (which are surely covered). Since there are $|\father(v)|-1$ such edges, the resulting budget at $v$ is still more than $(2/3+\beta/3)t$. We now reduce the budget of the edge $v a(v)$ by $(1/3-\beta/3)t$ and add this load to $v$'s budget, which is then more than $t$.
We will show later that this last step (reducing the budget of $va(v)$) does not cause a violation of (iv). 

Finally, we consider the nodes of Type 3. Such a node $v$ (using Claim~\ref{cla:plType3} and the load added in previous steps) has a budget of more than $(5/3-2/3\cdot\beta)t+ (|\father(v)|-1)\cdot (2/3-2/3\cdot\beta)t$. We transfer $(2/3-2/3\cdot\beta)t$ to the budget of every edge in $E_{\act(K)}$ that is incident with $v$, except to $v n(v)$ (which is surely covered). Since there are $|\father(v)|$ such edges, the resulting budget at $v$ is still more than $t$.

After the above redistributions of load, (i) is satisfied. Furthermore, (ii) holds as at least one node must be of Type 2, Type 3, or a leaf, and for all these cases the load-inequality is a strict inequality. Finally, the proof of (iii) and (iv) is exactly analogous to the proof of (iv) and (v) in Case 1. 

\subsection{Proof of Lemma~\ref{lem:generalPush}}\label{sec:LemGeneralPush}

In the following, let $E(V')$ denote the set of edges both of whose endpoints are in $V'$ and 
$\delta(V')$ the set of edges exactly one of whose endpoints is in $V'$, for each $V' \subseteq V$. 

We prove the lemma by the following two steps. \\

\noindent \textbf{Step 1}: We create a clone of the pebble that is pushed from $u^*$ to $v^*$ and put this cloned pebble at $v^*$ (by cloning, 
we mean the new pebble has the same weight and the same set of machines it can be assigned to) and keep the old one at $u^*$. 
We apply 
\textsc{Explore2} to this new instance and argue that the outcome is ``essentially the same'' 
as if the cloned pebble were not there. More precisely, we show 

\begin{lemma} Suppose that \textsc{Explore2} is applied to the original instance (\emph{before} \textsc{Push}) and the 
new instance with the cloned pebble at $v^*$. Then at the end of each round $i$, $\act_i = \act^{\dagger}_i$ and $\C_i = \C^{\dagger}_i$, where 
$\act_i$, $\act^{\dagger}_i$ are the activated sets in the original and the new instances respectively, 
and $\C_i$ and $\C^{\dagger}_i$ are the conflict sets in the original and the new instances respectively. 
\label{lem:putClone}
\end{lemma}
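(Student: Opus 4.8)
The claim is that adding a cloned pebble at $v^*$ does not change the run of \textsc{Explore2}. The plan is to prove by induction on the round index $i$ that $\act_i=\act_i^\dagger$, $\C_i=\C_i^\dagger$, and moreover that the orientation of $G_\rock$ after round $i$ is identical in both instances. The base case $i=0$: the overloaded nodes are determined by \textsc{Forced Orientations}, so I first need to check that \textsc{Forced Orientations} behaves identically on the two instances. The only difference between the instances is $pl(v^*)$, which is larger by the pebble weight in the cloned instance. The key observation is the \textsc{Push} preconditions (Definition~\ref{def:complicatedPush}(3)--(4)): because the push from $u^*$ to $v^*$ was \emph{possible}, we had $dl(v^*)+pl(v^*)+rl(v^*)\le (5/3-2/3\cdot\beta)t$ and, if $\child(v^*)\ne\emptyset$, $dl(v^*)+pl(v^*)+w_{v^*u}\le (5/3-2/3\cdot\beta)t$ for all fathers $u$, where these quantities are evaluated \emph{in the configuration to which \textsc{Explore2} will be applied} (i.e.\ with the pebble still at $v^*$ after cloning). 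Consequently $v^*$ has so much slack that even with the extra pebble (weight $\le\beta t$) it satisfies $dl(v^*)+pl(v^*)+rl(v^*)+w_{v^*u}\le (5/3-2/3\cdot\beta)t+\beta t=(5/3+\beta/3)t$ for every incident edge $v^*u$ \emph{once $rl(v^*)$ is accounted for}, so $v^*$ never forces an edge orientation and never becomes overloaded in the cloned instance. Hence every trigger condition ``$dl+pl+rl+w_{\cdot}>(5/3+\beta/3)t$'' in \textsc{Forced Orientations} has the same truth value at $v^*$ in both instances, and trivially the same at all other nodes; since ties are broken by a fixed total order, the two executions of \textsc{Forced Orientations} coincide step for step.

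For the inductive step, I would argue that, given identical orientations and identical $\act_{i-1}$, $\C$ so far, all three subtasks of round $i$ behave identically. Type A additions depend only on reachability from $\act_{i-1}$ (the same set) and on overloadedness (which, by the paragraph after Lemma~\ref{lem:forcedOrientationLemma}, never occurs after round $0$, and in round $0$ was already handled). The \emph{Conflict set construction} subloop adds nodes with a father in $\C$ and issues fake orientations followed by \textsc{Forced Orientations} calls; by the same slack argument, $v^*$ never triggers \textsc{Forced Orientations}, so each such call has identical effect in both instances, the set of directed paths created is the same, and hence $\C_i=\C_i^\dagger$. For \emph{Activation of nodes}: Rule~2 depends only on edge weights and current $\act$ (inductively identical), and Rule~1 activates $v$ iff some father $u\in\father(v)$ satisfies $dl(v)+pl(v)+w_{vu}>(5/3+\beta/3)t$; this condition differs between instances only at $v=v^*$, and by the slack bound $dl(v^*)+pl(v^*)+w_{v^*u}\le (5/3-2/3\cdot\beta)t<(5/3+\beta/3)t$ in the cloned instance, so Rule~1 never fires at $v^*$ in either instance (it already fails in the original, since the push condition only strengthens what we need). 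Thus the activation loops coincide and $\act_i=\act_i^\dagger$, closing the induction.

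\textbf{Main obstacle.} The delicate point is making the slack argument airtight with respect to \emph{when} the quantities $rl(v^*)$, $\child(v^*)$, $\father(v^*)$ are measured: Definition~\ref{def:complicatedPush} refers to the state at the end of the \textsc{Explore2} that preceded the push, whereas in the cloned instance we re-run \textsc{Explore2} from scratch with all edges un-oriented. So I cannot directly quote $dl(v^*)+pl(v^*)+rl(v^*)\le(5/3-2/3\cdot\beta)t$ with the \emph{new} run's value of $rl(v^*)$. The right way around this is to prove the invariant inductively together with the orientation-equality: as long as the orientations agree, the value of $rl(v^*)$ in the cloned run equals that in the original run, to which Definition~\ref{def:complicatedPush}(3)--(4) \emph{do} apply; combined with the extra pebble weight $\le\beta t$ this gives $dl(v^*)+pl(v^*)+rl(v^*)+w_{v^*u}\le(5/3+\beta/3)t$, so $v^*$ forces no new orientation, so the orientations continue to agree, sustaining the invariant. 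In other words, orientation-equality and the slack bound must be carried through the induction simultaneously rather than proved separately; once that bootstrapping is set up correctly, the rest is a routine step-by-step comparison of the two deterministic executions (deterministic thanks to the fixed tie-breaking order).
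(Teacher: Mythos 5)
Your plan rests on the claim that, thanks to the \textsc{Push} preconditions, $v^*$ ``never forces an edge orientation and never becomes overloaded in the cloned instance'', so that the two executions of \textsc{Explore2} coincide step for step, identical orientations included. The overload part is fine, but the forcing part is false, and the induction you describe cannot be closed. The trigger condition of \textsc{Forced Orientations} at a neutral edge $v^*u$ is $dl(v^*)+pl(v^*)+rl(v^*)+w_{v^*u}>(5/3+\beta/3)t$, where $rl(v^*)$ is the rock load \emph{at that moment} and $w_{v^*u}$ belongs to an edge that is currently neutral. Definition~\ref{def:complicatedPush}(3) bounds $dl(v^*)+pl(v^*)+rl(v^*)$ only for the \emph{final} rock load of the preceding run, and Definition~\ref{def:complicatedPush}(4) bounds $dl(v^*)+pl(v^*)+w_{v^*u}$ only for fathers $u$ and only when $\child(v^*)\neq\emptyset$; neither controls $dl(v^*)+pl(v^*)+w+rl^{\mathrm{current}}(v^*)+w_{v^*u}$ once the clone's weight $w$ is added. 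Concretely, take $\beta=0.7$, $t=1$, $dl(v^*)+pl(v^*)=1.1$, a single incident rock $v^*u$ of weight $0.75$ that ends up directed toward $u$ in the original run (so $rl(v^*)=0$, $\child(v^*)=\emptyset$, and both push conditions hold), and a pushed pebble of weight $0.7$. In the cloned instance the initialization sees $1.8+0.75>1.9$ and forces $v^*u$ toward $u$, while the original sees $1.1+0.75\le 1.9$ and leaves it neutral. So the orientations genuinely diverge, and ``orientation-equality'' is not an invariant you can carry through the induction; your proposed bootstrap breaks exactly at the step ``$rl(v^*)$ agrees $\Rightarrow$ slack bound''.

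This divergence is precisely why the paper's proof is long: it introduces the set $\overline{E}$ of extra edges that the clone forces and the region $V_0\cup V_1$ these orientations propagate to, shows that the extra orientations agree with what the original instance will itself produce by the end of round $k$ (the round in which $v^*$ joins the conflict set, Claim~\ref{cla:explore2First}(1)), shows that $V_0\cup V_1$ stays disjoint from the conflict sets through round $k-1$ (Claim~\ref{cla:explore2First}(2)), so that $\act_\tau$ and $\C_\tau$ still coincide even though the orientations do not, and then uses the order-independence of fake orientations (Lemma~\ref{lem:nondeterminism}) to synchronize the two instances within round $k$ by a case analysis. Some version of this ``the new instance runs ahead, but only in a harmless region, and the original catches up'' argument is unavoidable; the step-for-step identity you propose is simply not available.
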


\noindent \textbf{Step 2}: We then remove the original pebble at $u^*$ but keep the clone at $v^*$ (the same as 
the original instance after \textsc{Push}). 
Reapplying 
\textsc{Explore2}, we then show that in each round, the set of activated nodes and the conflict set 
cannot enlarge. To be precise, we show\footnote{Note that here we still 
refer to the instance with the cloned pebble at $v^*$ as the \emph{new} instance.}

\begin{lemma} Suppose that \textsc{Explore2} is applied to the 
new instance with the cloned pebble put at $v^*$ and 
the original instance (\emph{after} \textsc{Push}). Then at the end of each round $i$, 

\begin{enumerate}
\item 
$\bigcup_{\tau=0}^{i} \act'_\tau \subseteq \bigcup_{\tau=0}^{i} 
\act^{\dagger}_\tau$;
\item $\bigcup_{\tau=0}^{i} \C'_\tau \subseteq \bigcup_{\tau=0}^{i} \C^{\dagger}_\tau$; 
\item An edge not in $E(\bigcup_{\tau=0}^{i} \C^{\dagger}_\tau)$, if oriented in the original instance (after \textsc{Push}), 
must have the same orientation as in the new instance. 
\end{enumerate}

Here $\act^{\dagger}_i$, $\act'_i$ are the activated sets in the new and the original instance (after \textsc{Push}), respectively, and $\C^{\dagger}_i$ and $\C'_i$ are 
the conflict sets in the new and the original instances (after \textsc{Push}), respectively. 
\label{lem:removeOriginal}
\end{lemma}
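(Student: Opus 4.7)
The plan is to prove the three assertions by joint induction on the round $i$, by coupling the executions of \textsc{Explore2} on the new and original (after \textsc{Push}) instances. The driving observation is that these two instances differ only by a single pebble: the new instance still has the pushed pebble at $u^*$, while the original after \textsc{Push} does not; all other pebble placements, edge weights, dedicated loads, and the fixed tie-breaking rule are identical. Consequently, $pl^\dagger(u^*) \geq pl'(u^*)$ with equality everywhere else, and the reachability relation from any node in the new instance contains the corresponding relation in the original (the extra pebble at $u^*$ can only broaden reachability). Every triggering condition used by \textsc{Forced Orientations}, by conflict-set construction, and by the activation rules is monotone in these pebble loads and in the accumulated rock load $rl$, so whenever the original execution fires an event, the new execution has already fired (or fires at the same deterministic step) the same event.

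For the base case $i=0$, I would handle the initial call to \textsc{Forced Orientations} by a sub-induction on its two nested while-loops. Whenever the original selects a neutral edge $vu$ to direct because $dl(v)+pl(v)+rl(v)+w_{vu} > (5/3+\beta/3)t$, the same inequality holds in the new instance (by the pebble-load comparison plus the sub-induction on $rl$). Thus every edge oriented in the original is oriented identically in the new, and every overloaded node in the original is overloaded in the new, giving $\act'_0 \subseteq \act^\dagger_0$. The conflict-set construction then propagates the containment: a node becomes eligible to be added to $\C'_0$ only by acquiring a father in $\C'_0 \subseteq \C^\dagger_0$, and a fake orientation in the original is matched by the same (or a prior) fake orientation in the new; Rules 1 and 2 depend only on loads and edge weights, whose monotonicity transfers activation from the original to the new. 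Assertion 3 at $i=0$ follows from the edge-by-edge coupling above.

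For the inductive step, assume the three assertions at round $i$. In round $i{+}1$, a Type A node $v \in \act'_{i+1}$ is reachable via a pebble from some $u \in \act'_i \subseteq \act^\dagger_{\leq i}$; the witnessing pebble is also present in the new instance, so by Proposition~\ref{pro:third}(\ref{item:Aclosed}) applied to the new execution $v \in \act^\dagger$, hence $v \in \act^\dagger_{\leq i+1}$. Type B and Type C activations, as well as additions to $\C'_{i+1}$ via acquiring a father in $\C'$ or via the fake-orientation loop, transfer to the new instance by the monotonicity argument of the base case, and the subsequent cascades of \textsc{Forced Orientations} agree edge-for-edge outside $E(\C^\dagger_{\leq i+1})$ for the same reason. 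For assertion 3 in round $i{+}1$, observe that any edge oriented by a fake orientation in the original acquires both endpoints in $\C'_{\leq i+1}$, hence lies in $E(\C^\dagger_{\leq i+1})$ by induction and need not be compared; any edge oriented by a cascading call to \textsc{Forced Orientations} outside $E(\C^\dagger_{\leq i+1})$ satisfies a load condition monotone in $pl$ and $rl$, so it is oriented identically in the new instance.

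The main obstacle is keeping the fake-orientation loop inside the conflict-set construction synchronized between the two executions. In the new instance there may be fake-orientation opportunities arising from nodes in $\C^\dagger \setminus \C'$ that have no counterpart in the original, and once a new fake orientation is fixed, the ensuing call to \textsc{Forced Orientations} may orient edges in a different order from those in the original. To show that this asymmetry does not violate assertion 3, I would invoke Lemma~\ref{lem:nondeterminism} (the outcome of \textsc{Explore2} is independent of the order of fake orientations) to pick a canonical coupling in which every fake orientation performed in the original is performed at the same step in the new, supplemented by Lemma~\ref{lem:forcedOrientationLemma} to control the structure of the induced cascades and confirm that only edges landing in $E(\C^\dagger_{\leq i+1})$ can end up disagreeing.
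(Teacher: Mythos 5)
Your overall architecture (induction over rounds, a coupling in which the new instance mimics the original's fake orientations, and an appeal to Lemma~\ref{lem:nondeterminism} to make that mimicking legitimate) matches the paper's strategy for rounds $i\geq 1$. The genuine gap is in your base case. You claim that during the initial call to \textsc{Forced Orientations} every triggering condition is monotone in the pebble loads, so every edge oriented in the original instance is oriented identically in the new one and every overloaded node of the original is overloaded in the new, giving $\act'_0\subseteq\act^\dagger_0$ directly. This is false, and the paper explicitly warns against it: at the beginning of round $0$'s \emph{Conflict set construction}, the set of overloaded nodes can differ between the two instances and the conflict set of the new instance need not yet contain that of the original. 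The reason is that rock load is \emph{not} monotone under adding a pebble: the extra pebble at $u^*$ can trigger a cascade of forced orientations emanating from $u^*$ that preempts and reverses a cascade present in the original execution, so an edge $xy$ can be directed toward $y$ in the original but toward $x$ in the new; then $y$ receives that rock only in the original and may be overloaded only there. Your sub-induction on $rl$ therefore breaks down at the first such reversal.

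The paper repairs this with three ingredients your proposal does not supply: Lemma~\ref{lem:finallyTreesBigger} (let the new instance perform fake orientations inside $E(\C'_0)$ until $\C'_0\subseteq\C^\dagger_0$ holds, via a leaf-peeling argument on each component of $G_\rock[\C'_0]$), Lemma~\ref{lem:finallyOutsideTheSame} (a disagreeing oriented edge outside $E(\C^\dagger_0)$ would force the entire system to be a cycle all of whose edges disagree, which is then ruled out), and the closing observation that a node overloaded only in the original instance is activated in the new instance by Rule~1 (because it has a child $u\in\C'_0\subseteq\C^\dagger_0$ with $pl(v)+dl(v)+w_{vu}>(5/3+\beta/3)t$) rather than by being overloaded --- so $\act'_0\subseteq\act^\dagger_0$ holds for a different reason than the one you give. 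Without these steps the induction never gets started. A secondary weakness is that in the inductive step you assert that ``only edges landing in $E(\C^\dagger_{\leq i+1})$ can end up disagreeing'' without proof; this is precisely the content of Claims~\ref{cla:atMostOne} and~\ref{cla:structureTU0} inside Lemma~\ref{lem:imitationGame}, and it is where the remaining work lies.
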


Lemma~\ref{lem:putClone} and Lemma~\ref{lem:removeOriginal}(1) together imply Lemma~\ref{lem:generalPush} 
and we will prove the two lemmas in 
Sections~\ref{sec:putClone} and~\ref{sec:removeOriginal} respectively. 

The following lemma is convenient for proving Lemmas~\ref{lem:putClone} and~\ref{lem:removeOriginal} and we will prove it first. 
It states that 
the ``non-determinism'' in the order of fake orientations does not matter, 
allowing us to let the two instances ``mimic'' the behavior of each other when we compare the conflict sets in the main proofs.

\begin{lemma} In the sub-procedure \emph{Conflict set construction}, independent of the order of the edges being directed away from
the new 
conflict set $\C_i$, the final outcome is the same in the following sense. 

\begin{enumerate}
\item The sets of nodes in $\C_i$ is the same. 
\item Every edge not in $E(\C_i)$ has the same orientation. 
\end{enumerate}

\label{lem:nondeterminism}
\end{lemma}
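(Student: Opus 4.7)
The plan is to prove Lemma~\ref{lem:nondeterminism} by a confluence (``diamond'') argument on the atomic steps of the \emph{Conflict set construction} sub-procedure. First I would describe the state at any moment as a pair $(\C,\Psi)$, where $\Psi$ records the current orientations of the edges in $G_{\rock}$. Both components grow monotonically (no node is ever removed from $\C$, and by the proof of Lemma~\ref{lem:forcedOrientationLemma} \textsc{Forced Orientations} never overwrites an already-oriented edge), so the sub-procedure is a monotone process on a finite state space and terminates. The inner while-loop is deterministic---it adds exactly the set of nodes outside $\C$ whose fathers lie in $\C$---so the only source of non-determinism is the choice, in the outer ``if'', of which neutral edge $vu$ with $v\in\C$ to fake-orient.

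The heart of the proof is a single-step diamond: if at some common state $S$ two executions $E_1$ and $E_2$ fake-orient edges $e_1=v_1u_1$ and $e_2=v_2u_2$ respectively, then after at most one further outer-if step in each branch the two states can be reconciled, modulo orientations of edges inside $E(\C_i)$. After $E_1$ fake-orients $e_1$, calls \textsc{Forced Orientations}, and runs the inner while, I would analyse what has happened to $e_2$: (i) if $e_2$ is still neutral with $v_2\in\C$, then $E_1$ can pick $e_2$ next and reach the state that $E_2$ obtained directly; (ii) if $e_2$ has been directed toward $u_2$ by \textsc{Forced Orientations}, the patch is vacuous; (iii) if $e_2$ has been directed toward $v_2$, then because $v_2\in\C$ the inner while immediately pulls $u_2$ into $\C$, so $e_2$ lies in $E(\C_i)$ and falls outside the claimed invariant; (iv) if $u_2$ had already been added to $\C$ by the time $E_1$ acted, $e_2$ again lies in $E(\C)$ trivially. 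The symmetric argument for $E_2$ is analogous in each case.

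The main obstacle will be case (iii): showing that in the symmetric execution $E_2$ the node $u_2$ also ends up in $\C_i$. For this I would invoke Lemma~\ref{lem:forcedOrientationLemma}, which guarantees that the orientation $u_2\to v_2$ produced by \textsc{Forced Orientations} inside $E_1$ is witnessed by a directed path of previously neutral edges from some overloaded source through $u_2$ to $v_2$. Since $(\C,\Psi)$ grows monotonically and tie-breaking inside \textsc{Forced Orientations} is fixed, whatever sequence of outer-if steps $E_2$ performs will eventually create at least the same load configuration along that path, triggering either the identical propagation (ending at $v_2$, and hence pulling $u_2$ into $\C$ via the same mechanism) or an earlier fake orientation already covering $u_2$'s side of the path; either way $u_2\in\C_i$ at the end of $E_2$. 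Once this single-step diamond is established, a standard confluence argument for terminating monotone systems closes the proof: any two orderings can be interleaved via repeated single-step patches, and so the final $\C_i$ and the orientations of all edges outside $E(\C_i)$ are invariants of the order, which are exactly the two assertions of the lemma.
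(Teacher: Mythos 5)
Your high-level strategy (a local confluence/diamond argument for a terminating monotone rewriting system) is genuinely different from the paper's, which never reasons step-by-step: the paper instead characterizes the final outcome directly via a structural lemma (Lemma~\ref{lem:instrumental}) about trees of neutral edges with ``Type $\alpha$'' nodes (bound to join $\C_i$) and ``Type $\beta$'' nodes (possibly not), proved by rooting the residual forest and propagating forced orientations from its leaves inward, and then decomposes each system into such trees processed in a topological order. A correct diamond argument would arguably be cleaner, but as written yours has a genuine gap at its crucial step.

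The gap is in your case (iii). When $E_1$'s call of \textsc{Forced Orientations} directs $e_2=v_2u_2$ toward $v_2$, the witnessing chain of forced orientations is a directed path $v_1\to u_1\to\dots\to x\to u_2\to v_2$ of previously neutral edges. In $E_2$ that chain \emph{cannot} be reproduced: its last edge $u_2v_2$ is already oriented the opposite way (toward $u_2$, by $E_2$'s fake orientation), so ``the identical propagation ending at $v_2$'' is impossible, and the claim that $E_2$ ``eventually creates at least the same load configuration along that path'' is false --- the rock loads of the interior nodes differ because the edges along the path may end up oriented in the \emph{reverse} direction in $E_2$. What actually has to be shown is that in $E_2$ the fake orientation of $e_2$ raises $rl(u_2)$ enough to force $u_2x$ toward $x$, then $x$'s next edge toward $u_1$, and so on back to $v_1\in\C$, after which backtracking the (reversed) directed path pulls $u_2,x,\dots$ into $\C$. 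Whether this reversed propagation actually fires is a quantitative question, and the fact that makes it work is the one isolated in the paper's Claim~\ref{cla:notAll}: if a node $v$ ever directs one of its neutral tree-edges away, then $dl(v)+pl(v)+rl^i(v)+\sum_{u:\,vu\in T}w_{vu}$ exceeds the threshold, so that in \emph{any} execution, once $v$ receives all but one of these edges it is forced to direct the remaining one away, regardless of which one remains. Your proposal never identifies this symmetric load bound, and without it the diamond does not close. A secondary, fixable issue is that you invoke confluence ``modulo orientations inside $E(\C_i)$'' while $\C_i$ is still growing: this is a state-dependent equivalence, and since the rock load of a node in $\C_i$ depends on which of its $E(\C_i)$-edges point toward it, you would still need an argument (again of the Claim~\ref{cla:notAll} type) that the discrepancy never leaks into the orientation of an edge outside the final $E(\C_i)$.
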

\subsubsection{Proof of Lemma~\ref{lem:nondeterminism}}

\label{sec:nondeterminism}

We plan to break each system into a set of subsystems and use the following lemma recursively to prove the lemma. 

\begin{lemma} Let $T$ be a tree of neutral edges 
in the beginning of the sub-procedure \emph{Conflict set construction}
whose nodes are all in $V \backslash \bigcup_{\tau=0}^{i-1}\C_\tau$ and consist of only the following two types: 

\begin{enumerate}

\item Type $\alpha$: a node $v$ that (1) is already in $\C_i$ or has a directed path to a node in $\C_i$ in the beginning of the 
sub-procedure, or (2) at the end of all possible executions of the sub-procedure, it always has a directed path to some node 
in $\C_i \backslash T$. 

\item Type $\beta$: a node $v$ that (1) is not in $\C_i$ and does not have a directed path to a node in $\C_i$ in the beginning 
of the sub-procedure, and (2) at the end of all possible executions of the sub-procedure, it never has a directed path to some node in 
$\C_i$ via edges not in $T$. Furthermore, (3) all its incident neutral edges in the beginning of the sub-procedure are either in $T$, or never become directed towards $v$ in any execution. 

\end{enumerate}

Then the two properties of 
Lemma~\ref{lem:nondeterminism} hold. Namely, at the end of any execution, the final set 
$\C_i \cap T$ is the same and every edge in $T \backslash E(\C_i)$ has the same orientation. 

\label{lem:instrumental}
\end{lemma}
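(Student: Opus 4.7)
The plan is to prove Lemma~\ref{lem:instrumental} by induction on $|V(T)|$. In the base case $|V(T)|=1$, the single node $v$ has no $T$-edges, so the orientation claim is vacuous. By the definition of Type~$\alpha$ (conditions (1) or (2)), a Type~$\alpha$ singleton is placed in $\C_i$ at the end of any execution (either already present, or added via a directed path to $\C_i$ that exists at the start or materialises by termination). By condition~(2) of Type~$\beta$, a Type~$\beta$ singleton never acquires any directed path to $\C_i$, so it stays out of $\C_i$. Thus $\C_i\cap T$ is execution-independent.

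For the inductive step, I pick any leaf $\ell$ of $T$ with unique $T$-edge $e=\ell u$ and do a case analysis on the types of $\ell$ and $u$. Membership of $\ell$ in $\C_i$ is fixed by its type by essentially the same argument as the base case, crucially using condition~(3) of Type~$\beta$ to restrict new incoming arrows at $\ell$ to $T$-edges (hence to $e$). The orientation of $e$ is then determined: if $\ell$ is Type~$\alpha$ and $u$ is Type~$\beta$, $e$ must eventually be fake-oriented from $\ell$ toward $u$ in any execution, since $\ell$ enters $\C$ while $u$ does not, and condition~(3) on $u$ prevents an FO cascade from orienting $e$ the other way first; the symmetric case is analogous; when both endpoints are Type~$\alpha$, $e\in E(\C_i)$ and the lemma makes no claim; when both are Type~$\beta$, neither fake-orientation nor local overload can trigger an orientation of $e$ directly, so $e$ is only affected by FO cascades propagating in from elsewhere in $T$.

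Having fixed $\ell$'s membership and $e$'s orientation, I reduce to the subtree $T':=T\setminus\{\ell\}$ and invoke the induction hypothesis. The essential bookkeeping is to verify that each node of $T'$ still qualifies as Type~$\alpha$ or Type~$\beta$ with respect to $T'$. For nodes other than $u$ this is immediate; for $u$ itself, one checks that when $u$ is Type~$\alpha$, its directed path to $\C_i\setminus T$ extends to a directed path to $\C_i\setminus T'$ (possibly passing through $\ell$ via the now-fixed $e$), and when $u$ is Type~$\beta$, the orientation of $e$ does not create any new directed path from $u$ to $\C_i$ using edges outside $T'$. The induction hypothesis then delivers the claim on $T'$, and combining it with the fixed status of $\ell$ and $e$ yields the conclusion for $T$.

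The main obstacle is the ``both endpoints Type~$\beta$'' subcase: here $e$ can in principle be oriented by an FO cascade propagating along the unique tree-path from some Type~$\alpha$ node to $e$, and one must rule out execution-dependence. I would argue this via two observations: (i) the set of fake orientations that ever occur on $T$-edges coincides across executions, because it is exactly the set of $T$-edges incident to a Type~$\alpha$ endpoint (which we have already shown to be execution-independent), and (ii) \textsc{Forced Orientations} is deterministic given its trigger, by the fixed tie-breaking rule. Combined with condition~(3) of Type~$\beta$, which confines all relevant cascades to $T$, these observations pin down the orientation of $e$ and close the induction.
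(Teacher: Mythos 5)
Your induction on $|V(T)|$ does not go through, and the obstacles are not just technical. The first problem is that the type conditions are not preserved when you pass from $T$ to $T'=T\setminus\{\ell\}$. Condition (3) of Type~$\beta$ requires that every incident neutral edge of $u$ that is \emph{not} in the tree under consideration never becomes directed toward $u$. After removing $\ell$, the edge $e=\ell u$ is no longer a tree edge, yet $e$ may well become directed toward $u$ in some executions (e.g.\ when $\ell$ enters $\C_i$ and fake-orients $e$, or when a cascade pushes $e$ toward $u$). In that situation $u$ satisfies neither the Type~$\beta$ conditions for $T'$ (condition (3) fails) nor necessarily the Type~$\alpha$ conditions ($u$ need not acquire a directed path to $\C_i\setminus T'$), so the induction hypothesis simply cannot be invoked on $T'$. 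A second problem is your claim that membership of $\ell$ in $\C_i$ is ``fixed by its type'': a Type~$\beta$ node \emph{may or may not} end up in $\C_i$ --- it joins exactly when it acquires a directed path through $T$-edges to a node of $\C_i$ --- and showing that this is execution-independent is the content of the lemma, not a consequence of the definition. Relatedly, your observation (i) is false: the fake-oriented $T$-edges are not ``exactly those incident to a Type~$\alpha$ endpoint,'' since Type~$\beta$ nodes that join $\C_i$ also fake-orient their remaining neutral edges, and an edge incident to a Type~$\alpha$ node may already have been directed by a \textsc{Forced Orientations} cascade before any fake orientation reaches it, so no fake orientation is ever applied to it.

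The deeper missing ingredient is the load argument. The paper's proof hinges on the observation (its Claim on Type~$\beta$ nodes with outgoing $T$-edges) that if a Type~$\beta$ node $v$ ever directs a $T$-edge away from itself, then some $T$-edge became incoming at $v$ first \emph{and} $dl(v)+pl(v)+rl^i(v)+\sum_{u:vu\in T}w_{vu}>(5/3+\beta/3)t$; consequently $v$ cannot absorb all of its incident $T$-edges without being forced to emit one. This is what pins down the outcome in the regions of $T$ consisting only of Type~$\beta$ nodes --- precisely your ``both endpoints Type~$\beta$'' case, which you correctly identify as the main obstacle but then resolve with the two flawed observations above. The paper avoids your local/inductive structure entirely: it fixes one reference execution, splits into the case where all of $T$ joins $\C_i$ and the case where a forest $\overline{F}$ survives, and then shows (via the load claim and a monotonicity claim that no edge of $\overline{F}$ ever reverses and no node of $\overline{F}$ ever joins $\C_i$ in any other execution) that every execution reproduces the reference outcome. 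To salvage your approach you would at minimum need to strengthen the induction hypothesis so that it tolerates a non-tree edge whose eventual status at $u$ is execution-independent, and you would need to import the load-based cascade argument; at that point you would essentially be reproving the paper's claims.
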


Intuitively, Type $\alpha$ nodes in $T$ are those bound to be part of $\C_i$ in any execution, while Type $\beta$ nodes 
may or may not become part of $\C_i$. If a Type $\beta$ node does become part of $\C_i$, 
then it must have a directed path to some Type $\alpha$ node in $T$ via the edges in $T$ after the execution. 
Notice also that by definition, a Type $\beta$ node cannot be overloaded (otherwise, it is part of $\act_0 \subseteq \C_0$).

\begin{proof} Let us first observe the outcome of an arbitrary execution of this sub-procedure. There can be 
two possibilities. 

\begin{itemize}
\item \textbf{Case 1}. The entire tree $T$ ends up being part of $\C_i$. 

\item \textbf{Case 2}. A set of sub-trees $T_1$, $T_2$, $\cdots$ become part of $\C_i$. The remaining nodes 
$T\backslash \bigcup_j T_j=\overline{F}$ form a forest. Each node $v \in \overline{F}$, if it has a non-$\overline{F}$ neighbor in $T$, 
then this neighbor is in some tree $T_j \subseteq \C_i$ and their shared 
edge is directed toward $v$. 

\end{itemize}

The following claim is easy to verify and useful for our proof. 

\begin{numberedClaim} Let $v \in T$ be a Type $\beta$ node, and suppose that $v$ has an incident edge in $T$ that becomes outgoing during the execution of the sub-procedure.
Then one of its incident edges in $T$ must become incoming first, and furthermore $ dl(v) + pl(v)+ rl^i(v) + \sum_{u: vu \in T} w_{vu} > (5/3 + \beta/3)t$, where $rl^i(v)$ is the rock load 
of $v$ in the beginning of the sub-procedure. 
\label{cla:notAll}
\end{numberedClaim}

We now consider the two cases separately. \\

\textbf{Case 1}: Suppose that in a different execution, the outcome is Case 2, i.e., there remains a forest $\overline{F} \subseteq T$ 
not being part of $\C_i$. 

Choose a tree $\overline{T}$ in $\overline{F}$ and then choose any node in $\overline{T}$ as the root $\overline{r}$. 
Define the level of a node in $\overline{T}$ as its distance 
to $\overline{r}$. Consider the set of nodes $v$ with the largest level $l$: they must be leaves of $\overline{T}$. 
By Proposition~\ref{pro:third}(3), in the new execution, 
all non-$\overline{F}$ neighbors of $v$ in $T$ direct their incident edges connecting $v$ towards $v$. As a result, 
by Claim~\ref{cla:notAll} and the fact that $v$ becomes part of $\C_i$ in the original execution, $v$ of level $l$ must direct its incident edge in $\overline{T}$ toward its neighbor of level $l-1$ in $\overline{T}$. 
Nodes of level $l-1$ then have incoming edges 
from their neighbors of level $l$ and from their non-$\overline{F}$ neighbors in $T$. So again they direct the edges in $\overline{T}$ 
towards the nodes of level $l-2$ in $\overline{T}$. Repeating this argument, we conclude that $\overline{r}$ receives all its incident edges in $T$ in the new execution, 
a contradiction to Claim~\ref{cla:notAll}. This proves Case 1. \\

\textbf{Case 2}: Let us divide the incident edges in $T$ of a node $v \in \overline{F}$ into three categories according to the outcome of the original
execution: incoming $E_{\inc}(v)$, outgoing $E_{\out}(v)$, and neutral $E_{\neu}(v)$. Notice that by Proposition~\ref{pro:third}(3), all edges connecting $v$ 
to its non-$\overline{F}$ neighbors in $T$ are in $E_{\inc}(v)$. Moreover, 
the following facts should be clear: at the end of 
any other execution, (1) an edge $e \in E_{\out}(v)$ must be directed 
away from $v$ if all edges in $E_{\inc}(v)$ are directed towards $v$, and (2) an edge in $E_{\inc}(v) \cup E_{\neu}(v)$ can be directed 
away from $v$ only if beforehand some edge in $E_{\out}(v) \cup E_{\neu}(v)$ is directed towards $v$, or $v$ ends up being part of $\C_i$. 

\begin{numberedClaim} Let $\overline{F} \subseteq T$ be the forest not becoming part of $\C_i$ in the original execution. In any other execution of the sub-procedure,

\begin{enumerate}
\item given $v \in \overline{F}$, it never happens that an edge $e \in E_{\out}(v) \cup E_{\neu}(v)$
is directed towards $v$ or an edge in $E_{\inc}(v)$ is directed away from $v$; 

\item none of the nodes in $\overline{F}$ ever becomes part of $\C_i$. 

\end{enumerate}
\label{cla:noReverse}
\end{numberedClaim}

\begin{proof} Suppose that (2) is false and $v \in \overline{F}$ is the first node becoming part of $\C_i$. Then some edge $e =v_0u \in E_{\inc}(v_0)$, 
where $v_0$ and $v$ are connected in $\overline{F}$ and $u\in T$ is a non-$\overline{F}$ neighbor of $v_0$, is directed towards $u$ beforehand. 
So (1) must be false first. Let $e'=v'u'$ be the first edge violating (1). (At this point, no node in $\overline{F}$ is part of $\C_i$ yet). 
If $e' \in E_{\out}(v') \cup E_{\neu}(v')$ is directed toward $v'$, then node $u'$ directs edge $e'$ towards $v'$ because it first has another edge $e'' \in E_{\out}(u') \cup E_{\neu}(u')$ 
coming toward itself. Then $e''$ should be the edge chosen, a contradiction. If $e' \in E_{\inc}(v')$ is directed away from $v'$, then some edge 
$e'' \in E_{\out}(v') \cup E_{\neu}(v')$ is directed toward $v'$ first, again implying that $e''$ should be chosen instead, another contradiction. Thus (1) and (2) hold. 
\qed\end{proof}

\begin{numberedClaim} Suppose that $T_j \subseteq \C_i$ in the original execution. Then in any other execution, 

\begin{enumerate} 

\item $T_j \subseteq \C_i$;

\item Every edge $e=vu$ with $v \in T_j$ and $u \in \overline{F}$ is directed toward $u$. 
\end{enumerate}
\label{cla:theSameAsCaseOne}
\end{numberedClaim}

\begin{proof}
For (1), we argue that $T_j$ itself satisfies the condition of Lemma~\ref{lem:instrumental} and is exactly Case~1. 
For this, we need to show that a Type $\beta$ node $v$ of $T$ in $T_j$ is also a Type $\beta$ node in $T_j$, i.e., 
$v$ never has a directed path to some node in $\C_i$ via edges not in $T_j$. As $v$ is a Type $\beta$ node in $T$, it suffices 
to show that it cannot have a directed path to some Type $\alpha$ node in $T \backslash T_j$ via edges in $T$. 
Suppose there is such a path $P$. Then $P$ must go through some node $u \in \overline{F}$, implying that $u$ becomes part of $\C_i$ in this execution, 
a contradiction to Claim~\ref{cla:noReverse}(2). 
This proves (1). (2) follows from Claim~\ref{cla:noReverse}(2) and Proposition~\ref{pro:third}(3). 
\qed\end{proof} 

What remains to be done is to show that 
all edges in $\overline{F}$ have the same orientation in any other execution. 
Let $L_0 \subseteq \overline{F}$ be the set of nodes $v$ satisfying $|E_{\inc}(v) \cap \overline{F}|=0$ and 
$L_{i>0} \subseteq \overline{F}$ be the set of nodes which can be reached from a node in $L_0$ by a directed path in $\overline{F}$ of maximum length exactly $i$ 
after the original execution. 
In any other execution, 
by Claim~\ref{cla:theSameAsCaseOne}(2), given $v \in L_0$, all edges in $E_{\inc}(v)$ are directed towards 
$v$, so all edges in $E_{\out}(v) \cap \overline{F}$ are directed away from $v$. Now an inductive argument on $i$, combined with Claim~\ref{cla:noReverse}(1), 
completes the proof 
of Case 2.

\qed\end{proof}

\begin{proof} (of Lemma~\ref{lem:nondeterminism}) We now explain how to make use of Lemma~\ref{lem:instrumental} to prove Lemma~\ref{lem:nondeterminism}. For this, we decompose each system into a set of subsystems that satisfy the conditions required in Lemma~\ref{lem:instrumental}.

First consider a system that is not a cycle. 
In the beginning of the sub-procedure \emph{Conflict set construction}, 
let $F$ be the forest consisting of the nodes in $V\backslash \bigcup_{\tau=0}^{i-1}\C_\tau$ and the edges that are neutral. 
We can assume that all nodes having a directed path to $\C_i$ are (already) in $\C_i$ as well. 

Create a graph $H$ whose node set are the connected components (trees) of $F$. If a non-$\C_i$ node in such a tree has a directed 
edge (we refer to the beginning of the sub-procedure) to some other non-$\C_i$ node in another 
tree, draw an arc from the node representing the former tree to the node representing the latter tree in $H$. 
(Intuitively, an arc in $H$ indicates the possibility that a node in the former tree becomes part of $\C_i$ 
because of a directed edge to a node in $\C_i$ in the latter tree). 
As the entire system is not a cycle, some node in $H$ must have out-degree 0. It is easy to verify that the particular tree corresponding to this node 
satisfies the conditions in Lemma~\ref{lem:instrumental}, so the lemma can be applied to it. 

We now find the next tree satisfying the conditions of Lemma~\ref{lem:instrumental} 
by redefining the graph $H$ as follows. 
Observe that the ``processed'' tree (the one we applied Lemma~\ref{lem:instrumental} to) has exactly two types of non-$\C_i$ nodes in the beginning of the sub-procedure: those that always become part of $\C_i$ (i.e., in every possible execution of the sub-procedure) and those that never become part of $\C_i$. Nodes in other trees that, in the beginning, have a directed edge to the former type of nodes are bound to become part of $\C_i$ (i.e., they satisfy the conditions of a Type $\alpha$ node in their tree). Nodes in other trees with a directed edge to the latter type of nodes are not to become part of $\C_i$ because of them. So in 
$H$, we can just remove the corresponding arcs and the node representing the already processed tree. 
In the updated $H$, the node with out-degree 0 is the next tree, on which Lemma~\ref{lem:instrumental} can be applied. 
Repeating this procedure, we are done with the first case (when the system is not a cycle). 

Finally, consider the case that the entire system is a cycle. For the special case that the entire cycle consists of neutral edges, it is easy to verify 
that Lemma~\ref{lem:nondeterminism} holds. So suppose that the set of neutral edges form a forest (precisely, a set of disjoint paths). 
We can proceed as before---build $H$ and find a vertex in $H$ with out-degree 0 and recurse---except for 
the special case that $H$ is a directed cycle $V_1$, $V_2$,$\dots$ in the beginning. 
Observe that 
the last node $v \in V_1$ has a directed edge to the first node $u \in V_{2}$ and neither $v$ nor $u$ is in $\C_i$. Similarly, the last node of $V_{2}$ is also not in $\C_i$ and 
neither is the first node of $V_3$ and so on. In this case, it is easy to see that 
Lemma~\ref{lem:nondeterminism} holds for the entire system. 
\qed\end{proof}

\subsubsection{Proof of Lemma~\ref{lem:putClone}} 
\label{sec:putClone}

When \textsc{Explore2} is applied on the original instance before \textsc{Push}, suppose that $v^*$ joins the conflict set 
in round $k$, i.e., $v^* \in \C_k$. We first make the following claim. 

\begin{numberedClaim} Apply \textsc{Explore2} to the new instance. In round $k$, immediately after the sub-procedure 
\emph{Conflict set construction}, the following holds.

\begin{enumerate}
\item $\act_\tau = \act^{\dagger}_{\tau}$, for $0 \leq \tau \leq k$, 
\item $\C_\tau = \C^{\dagger}_{\tau}$, for $0 \leq \tau \leq k$, 
\item Edges not in $E( \bigcup_{\tau=0}^{k}\C_\tau)$ have the same orientations 
in both instances. 
\end{enumerate}

\label{cla:roundK}
\end{numberedClaim} 

We will prove the claim shortly after. 
In the following, we will show that $\act_{k} = \act^{\dagger}_{k}$ at the end of round $k$. Combining this with Claim~\ref{cla:roundK}(2)(3) and Lemma~\ref{lem:nondeterminism}, an inductive argument
proves that Lemma~\ref{lem:putClone} is true also from round $k$ onwards. 

Recall that by the definition of \textsc{Push}, at the end of \textsc{Explore2} in the original instance, either (1) $\child(v^*)=\emptyset$, 
or (2) $dl(v^*) + pl(v^*) + w_{v^*u} \leq (5/3-2/3\cdot\beta)t$ for all $u \in \father(v^*)$. We consider these two cases separately. 

\textbf{Case 1}: Suppose that $\child(v^*)=\emptyset$ in the original instance at the end of \textsc{Explore2}. We will show that 
at the end of round $k$, $\act_k = \act^{\dagger}_k$ and in particular $v^* \in \act_k = \act^{\dagger}_k$. 
By Claim~\ref{cla:roundK}(1), we just have to argue that a node is activated by Rule 1 or Rule 2 in the original instance if and only if it is activated by one of these two rules in the new instance, in round $k$. 

For $v^*$, recall that it is part of $\C_k$. It becomes so by either (1) being a Type A node in $\act_k$, or (2) having an outgoing edge $v^*u$ 
and $u \in \bigcup_{\tau=0}^{k}\C_{\tau}$. For the former case, Claim~\ref{cla:roundK}(1) shows that $v^* \in \act^{\dagger}_k$. For the latter case, 
as $\child(v^*)=\emptyset$ at the end of \textsc{Explore2} in the original instance, in round $k$, $dl(v^*)+ pl(v^*)+ w_{v^*u} > (5/3+\beta/3)t$, 
and hence Rule 1 applies to $v^*$. In the new instance, the preceding inequality still holds since the pebble load of $v^*$ is increased by the cloned pebble. As 
$u \in \bigcup_{\tau=0}^{k}\C^{\dagger}_{\tau}$ (Claim~\ref{cla:roundK}(2)), Rule 1 again applies to $v^*$ (note that $u$ is still a father of $v^*$, since otherwise $v^*$ would be overloaded and part of both $\act^{\dagger}_0$ and $\act_0$). 

For other nodes $v \neq v^*$, as $pl(v)+dl(v)$ are the same in both instances, if $v$ is activated by Rule 1 in the original instance, then it is so too in the new instance, and vice versa. 
We have established that the set of nodes activated by Rule 1 is the same in both instances. Now by Claim~\ref{cla:roundK}(2), 
the set of nodes activated by Rule 2 is again the same in both instances. 
Therefore, $\act_k = \act^{\dagger}_k$ at the end of round $k$.

\textbf{Case 2}: Suppose that $dl(v^*) + pl(v^*) + w_{v^*u} \leq (5/3-2/3\cdot\beta)t$ for all $u \in \father(v^*)$ 
in the original instance. Then $v^*$ cannot be a Type B node in the original instance, i.e., it is not activated by Rule 1 (but it is possible that 
$v^*$ is activated by Rule 2 or as a Type A node). 
We now argue that 
in the new instance, in round $k$, $v^*$ cannot be activated by Rule 1 either. 

By the definition of \textsc{Push} (specifically Definition~\ref{def:complicatedPush}(3)(4)), in the original 
instance, each father and child $u \in \bigcup_{\tau=0}^{k}\C_{\tau}$ of $v^*$ satisfies $dl(v^*) + pl(v^*) + w_{v^*u} \leq (5/3-2/3\beta)t$ (notice that when we compare original and new instance, a father can become a child and vice versa). Therefore, even with the cloned pebble (of weight at most $\beta t$) in the new instance, Rule 1 still cannot be applied to $v^*$ in round $k$. 

For other nodes $v \neq v^*$, it is easy to see that $v$ is activated by Rule 1 in the original instance if and only if in the new instance in round $k$. 
We have established that the set of nodes activated by Rule 1 is the same in both instances. Now by Claim~\ref{cla:roundK}(2), 
the set of nodes activated by Rule 2 is again the same in both instances. Therefore, $\act_k = \act^{\dagger}_k$ at the end of round $k$. \\

\emph{Proof of Claim~\ref{cla:roundK}}: 
Consider the moment at the end of round $k-1$ when \textsc{Explore2} is applied on the original instance before \textsc{Push}. 
In the special case of $k=0$, we refer to the moment immediately after \textsc{Forced Orientations} 
is called in the initialization of \textsc{Explore2}. 

In this moment, let us put the cloned pebble at $v^*$ 
and invoke \textsc{Forced Orientations}. This causes a (possibly empty) set of neutral edges $\overline{E}$ to become directed. 
Let $V_0$ be the set of nodes which are the heads or tails of the now directed edges in $\overline{E}$. 
Let $V_1$ be the set of nodes that can be arrived at from nodes in $V_0$ following the other directed edges $E^*$ 
(i.e., those that are already oriented at the end of round $k-1$ before the cloned pebble is put at $v^*$). 
Observe that $v^* \in V_0$ can 
reach any node in $V_0 \cup V_1$ by following the directed edges in $\overline{E} \cup E^*$. 
Let $E_{\inc}(v)$, $E_{\out}(v)$, and $E_{\neu}(v)$ denote the set of incident incoming, outgoing, neutral 
edges of each node $v \in V$ after we put the cloned pebble and called \textsc{Forced Orientations}. It should be clear that 
(1) $\overline{E} \subseteq \bigcup_{v \in V_0} E_{\out}(v)$, (2) $ \bigcup_{v \in V_0\cup V_1} 
E_{\out}(v) \cap \delta(V_0 \cup V_1) = \emptyset$, and (3) none of the nodes in $V_0$ is overloaded at the end of round $k-1$ (and hence also not in subsequent rounds).

\begin{numberedClaim} When \textsc{Explore2} is applied on the original instance before \textsc{Push}, 

\begin{enumerate}

\item If an edge $e$ is in $\overline{E} \cap E_{\out}(v)$ for some $v \in V_0$, then 
at the end of round $k$, edge $e$ is also an outgoing edge of $v$ (independent of the order of fake orientations);

\item At the end of round $k-1$, none of the nodes in $V_0 \cup V_1$ is part of the conflict set built so far, i.e. 
$(V_0 \cup V_1) \cap \bigcup_{\tau=0}^{k-1} \C_{\tau} = \emptyset$.

%\item None of the nodes in $V_0\backslash \{v^*\}$ becomes part of $\C_k$ before $v^*$ in round $k$. Precisely, 
%in the beginning of the sub-procedure \emph{Conflict set construction}, and immediately after each fake orientation (and the associated call %of
%\textsc{forced orientations}), it never happens that a node in $V_0\backslash \{v^*\}$ has a directed path (of possibly length 0) to a node in %$\C_k$ while 
%$v^*$ does not. 
\end{enumerate}
\label{cla:explore2First}
\end{numberedClaim}

\begin{proof} Consider the edge $v^*u \in \overline{E} \cap E_{\out}(v^*)$. As $v^*$ is part of $\C_k$, at the end of round $k$, $v^*u$ cannot 
be neutral. As it is directed toward $u$ after the added cloned pebble, 

\begin{equation} 
w_{v^*u} + pl(v^*) + dl(v^*) + rl^{k-1}(v^*) + w > (5/3+ \beta/3)t, 
\label{equ:tooMuch}
\end{equation}

\noindent where $w$ is the weight of the cloned pebble and $rl^{k-1}(v^*) $ is the weight of the rocks assigned to $v^*$ 
at the end of round $k-1$. Suppose for a contradiction that edge $v^*u$ is directed toward $v^*$ at the end of round $k$. 
Recall that by Definition~\ref{def:complicatedPush}(3), for the pebble to be 
pushed from $u^*$ to $v^*$ in the original instance, $pl(v^*) + dl(v^*) + rl(v^*) \leq (5/3-2/3\cdot \beta)t$, where $rl(v^*)$ 
is the weight of the rocks assigned to $v^*$ at the end of \textsc{Explore2}. Then 

$$ dl(v^*) + pl(v^*)+ (rl^{k-1}(v^*)+ w_{v^*u})+w \leq dl(v^*) + pl(v^*)+ rl(v^*)+w \leq (5/3+\beta/3)t,$$ 

\noindent a contradiction to inequality~(\ref{equ:tooMuch}). So we establish that $v^*u$ is directed toward $u$ at the end of round $k$. 
Consider $u$ and its incident edge $uu' \in \overline{E} \cap E_{\out}(u)$. 
The fact that $v^*u$ causes $uu'$ to be directed toward $u'$ 
implies that at the end of round $k$, 
$uu'$ cannot be directed toward $u$ or stay neutral. Repeating this argument, we prove (1). 

If a node in $V_0\cup V_1$ is part of $\bigcup_{\tau=0}^{k-1}\C_\tau$, then either $v^*$ is part of $\bigcup_{\tau=0}^{k-1}\C_\tau$, 
a contradiction to the assumption that $v^*$ joins the conflict set in round $k$, or some node in $V_0 \backslash \{ v^* \}$ has an incident edge in $\overline{E}$ directed away from it at the end of round $k-1$ (see Proposition~\ref{pro:third}(3)), a contradiction to the definition of $\overline{E}$. This proves (2). 

%If some node $v \in V_0 \backslash \{ v^* \}$ becomes part of $\C_k$ before $v^*$, then we can direct an edge in $E_{\inc}(u) \cap %\overline{E}$ for some $u \in V_0 \backslash \{ v^* \}$ away from $u$ in the next fake orientation, a contradiction to (1). (3)~follows.

\qed\end{proof}

Claim~\ref{cla:explore2First}(2) has the important implication that, in the original instance, the set of nodes $V_0 \cup V_1$ is ``isolated'' from the rest of the graph 
up to the end of round $k-1$ in \textsc{Explore2}: they do not have a directed path to nodes in $\bigcup_{\tau=0}^{k-1}\C_{\tau}$ and they are not reachable from nodes 
in $\bigcup_{\tau=0}^{k-2}\act_{\tau}$. 

\begin{numberedClaim} Suppose that $k \geq 1$. 
When \textsc{Explore2} is applied on the new instance, at the end of round $k-1$, 
\begin{enumerate}
\item Every edge $e \in E_{\out}(v)$ (respectively $E_{\inc}(v)$, $E_{\neu}(v)$) for any $v \in V_0 \cup V_1$ is an outgoing (respectively incoming, neutral) edge of $v$ in the new instance; 
\item $\act_{\tau} = \act^{\dagger}_{\tau}$ for $0 \leq \tau \leq k-1$;
\item $\C_{\tau} = \C^{\dagger}_{\tau}$, for $0 \leq \tau \leq k-1$; 
\item Every edge not in $E(\bigcup_{\tau=0}^{k-1} \C_{\tau}) \cup \overline{E}$ has the same orientation in both instances. 
\end{enumerate}

\label{cla:roundKminusone}
\end{numberedClaim}

\begin{proof} By Claim~\ref{cla:explore2First}(2), none of the nodes in $V_0 \cup V_1$ is overloaded in the original instance, as $k\geq 1$.  
By Lemma~\ref{lem:nondeterminism}, we may assume that both instances decide their fake orientations based on the same fixed total order. 
Let us define the following events for both instances: 
\begin{itemize}
\item $\beta$: An edge in $E(V_0 \cup V_1)$ becomes directed.
\item $\alpha_1$: An edge not in $E(V_0 \cup V_1)$ becomes directed.
\item $\alpha_2$: The sub-procedure \emph{Activation of nodes} is executed.
\item $\alpha_3$: A new round starts and a set of (Type A-) nodes is activated.
\item $\alpha_4$: The internal while-loop of \emph{Conflict set construction} is executed and a set of nodes is added into the conflict set.
\end{itemize}
Using an inductive argument, the following fact can easily be verified: %For the induction step in the case that instance A does alpha1 and is in internal while-loop of Forced Orientations, consider the first alpha1 of the current iteration of the outer while-loop in instance A. Go to corresponding alpha1 in instance B. From that moment on, instance B is also in internal while-loop (it was not a fake orientation), and it cannot prematurely exit it because the next alpha1 in A is still possible in B. Hence, B is also in internal while-loop and all nodes marked in A are also marked in B. So the edge orientation of A is also possible in B, and we can apply the same argument with interchanged roles of A and B.
\begin{quote}
As long as no edge in $E_{\neu}(v) \cup E_{\inc}(v)$ becomes outgoing for any $v \in V_0 \cup V_1$, the sequences of $\alpha$-events are the same in both instances (but possibly intermitted by different sequences of $\beta$-events) up to the end of round $k-1$. Furthermore, right after two corresponding $\alpha$-events in the original and new instance, the conflict set and activated nodes, and the direction of all edges not in $E(V_0 \cup V_1)$ are the same in both instances.
\end{quote}
To prove (1), consider the first moment in the new instance when an edge $e \in E_{\neu}(v) \cup E_{\inc}(v)$ becomes outgoing for any $v \in V_0 \cup V_1$ before the end of round $k-1$.
For this to happen, as $v$ is not overloaded in the original instance, 
there exists another edge $e' = vu \in E_{\neu}(v) \cup E_{\out}(v)$ becoming incoming for $v$ first. 
By the above fact, it must be the case that $u \in V_0 \cup V_1$. 
Then $e' \in E_{\neu}(u) \cup E_{\inc}(u)$, and $e'$ becomes outgoing for $u$ before $e$ becomes outgoing for $v$, a contradiction. 

We next show that every edge $e \in E_{\out}(v)$ is an outgoing edge for $v \in V_0 \cup V_1$ at the end of round $k-1$ in the new instance. Assume that $v^*$'s system is not a cycle. Then $E_{\inc}(v^*) \subseteq \delta(V_0 \cup V_1)$, and all these edges are incoming at the end of round $k-1$, implying that all edges in $E_{\out}(v^*)$ must be outgoing. Now an inductive 
argument on the rest of the nodes $v \in V_0 \cup V_1$ (based on their distance to $v^*$) establishes that 
$e \in E_{\out}(v)$/$E_{\inc}(v)$/$E_{\neu}(v)$ 
is an outgoing/incoming/neutral edge of $v$ at the end of round $k-1$ in the new instance. The cycle-case follows by a similar argument. %If in the cycle case $E_{\inc}(v^*) \cap E(V_0 \cup V_1)$ is not empty, then the whole cycle is directed after putting the clone in original instance. For this to happen, there must exist a starting point, and all nodes on the cycle must pass on the orientation. Then it is clear that in the new instance, no edge can remain neutral. 
This completes the proof of (1). 

Finally, combining (1) with the above fact, the rest of the claim follows. 
\qed\end{proof}

\begin{numberedClaim} Suppose that $k=0$. When \textsc{Explore2} is applied on the new instance, at the end of the initialization (after \textsc{Forced Orientations}),  

\begin{enumerate}
\item Every edge $e \in E_{\out}(v)$ (respectively $E_{\inc}(v)$, $E_{\neu}(v)$) for any $v \in V_0 \cup V_1$ is an outgoing (respectively incoming, neutral) edge of $v$ in the new instance; 
\item Every edge not in $E(V_0 \cup V_1)$ has the same orientation in both instances;
\item The set of overloaded nodes are the same in both instances. 
\end{enumerate}

\label{cla:round0Minusone}
\end{numberedClaim}

\begin{proof} In the new instance, we claim that no edge in $E_{\neu}(v) \cup E_{\inc}(v)$ becomes outgoing for any $v \in V$ during the initialization. Suppose not and 
$e \in E_{\neu}(v) \cup E_{\inc}(v)$ is the first such edge. If this happens because another edge $e' =vu 
\in E_{\out}(v) \cup E_{\neu}(u)$ is directed toward $v$ first, then $e'$ should have been chosen. 
So $v$ must be overloaded in the original instance and by Lemma~\ref{lem:forcedOrientationLemma}, $e$ is the only edge in 
$E_{\inc}(v)$ and $dl(v) + pl(v) + w_{e=vu_0} > (5/3+ \beta/3)t$. (Notice that  $v \neq v^*$). 

Consider the moment in the initialization of the original instance, when $e = vu_0$ is directed toward $v$.
First suppose that in this moment, $u_0$ has no incoming edges yet. Then we know that $dl(u_0) + pl(u_0) + w_{vu_0} > (5/3+ \beta/3)t$ and the pair $(u_0,v)$ precedes $(v,u_0)$ in the total order of edges. This is still true in the new instance, contradicting our assumption that $vu_0$ is chosen to be directed toward $u_0$.
So $u_0$ already has some incoming edges $E_{u_0}$ in the original instance. In the new instance, when $vu_0$ is directed toward $u_0$, it cannot be that 
all edges of $E_{u_0}$ are already directed toward $u_0$. %Otherwise, when the last of these edges becomes incoming for u_0, u_0 is marked and the edge u_0v should be directed before a new iteration of the outer while loop starts (which must be the case for vu_0 to become directed toward u_0).
So at least one such $u_1u_0 \in E_{u_0}$ is still neutral (it cannot be outgoing 
because of the choice of $e=vu_0$). 
Repeating this argument, in the new instance, 
we find a path of neutral edges $vu_0u_1\dots$ immediately before $e=vu_0$ is directed toward $u_0$, 
and this path ends at a node $u_z$ where $dl(u_z) + pl(u_z) + w_{u_zu_{z-1}} > (5/3+\beta/3)t$, and the pair 
$(u_z,u_{z-1})$ precedes the pair $(v,u_0)$. This contradicts the assumption that $e=vu_0$ is chosen to be directed toward $u_0$. 

So we established that no edge in $E_{\neu}(v) \cup E_{\inc}(v)$ becomes outgoing for any $v \in V$. To complete the proof of (1) and (2), 
suppose that $uv \in E_{\inc}(v)$ for some $v \in V$ remains neutral after the initialization of the new instance. Then there must be another edge $wu \in E_{\inc}(u)$ which also remains neutral. Repeating this argument, we conclude that the entire system is a cycle, whose edges are all neutral after the initialization of the new instance. As $uv \in E_{\inc}(v)$, there must be some edge $xy$ in this cycle, %namely the first one that becomes oriented in the initialization of the original instance and the call of Forced Orientations after putting the cloned pebble at v^*.
so that $dl(x) + pl(x) + w_{xy} > (5/3 + \beta/3)t$ after the cloned pebble is put on $v^*$. This edge cannot remain neutral after the initialization of the new instance, a contradiction.

Finally, (3) follows from (1) and (2), and the fact that no node in $V_0$ is overloaded in both instances. 

\qed\end{proof}

To complete the proof of Claim~\ref{cla:roundK}, we now show that in round $k$, after the sub-procedure \emph{Conflict set construction}, the outcome of the
two instances are exactly the same, except for the orientation of the edges in $E(\bigcup_{\tau=0}^{k} \C_{\tau})$. 
Notice that by Claim~\ref{cla:roundKminusone}(1)(4) and Claim~\ref{cla:round0Minusone}(1)(2), at the end of round $k-1$,
the orientations of all edges not in $E(\bigcup_{\tau=0}^{k-1} \C_{\tau})$ are the same in both instances, with the only exception that $\overline{E}$ are oriented in the new instance but neutral
in the original instance. Furthermore, by Claim~\ref{cla:roundKminusone}(2) and Claim~\ref{cla:round0Minusone}(3), 
the same set of nodes are added into $\act^{\dagger}_k$, 
$\act_k$, $\C^{\dagger}_k$, $\C_k$ in the beginning of round $k$ (as Type $A$ nodes). 

Let $V'_1 \subseteq V_1$ be the set of nodes that can be reached by a directed path from $v^*$ in the original instance at the end of round $k-1$ (such a path does not use edges in $\overline{E}$).
Let us first suppose the system containing $v^*$ is a tree.
In the following, when we say the ``sub-tree'' of an edge $e \in E_{\neu}(v)$ for some $v\in V_0 \cup V_1$, we mean
the sub-tree outside of $V_0 \cup V_1$ connected to $V_0 \cup V_1$ by the edge $e$ (note that $e \in \delta(V_0 \cup V_1)$).
We now make use of Lemma~\ref{lem:nondeterminism} to let the two instances mirror each other's behavior.
Consider how $v^*$ becomes part of $\C_k$ in the original instance.

\textbf{Case 1}: in the beginning of round $k$, $v^*$ or some node in $V'_1$ becomes a Type $A$ node.
Then $v^*$ becomes part of the conflict set in both instances in the beginning of the sub-procedure \emph{Conflict set construction}, before any further edges become directed.
In this case, in the original instance, let $v^*$
direct all edges in $\overline{E}$ away from $v^*$ (by running ahead a few iterations and picking the respective edges incident with $v^*$ as fake orientations). After that, in both instances, direct all remaining neutral edges incident with $v^*$ away from $v^*$. Now the two instances are the same\footnote{When we say that two instances are \emph{the same} at a certain time point, we mean that the conflict set and activated nodes, and the orientation of all edges not in $E(\bigcup_{\tau=0}^{k-1} \C_{\tau})$ are the same.}, and we can let them continue identically until the end of the sub-procedure (also note that all edges incident with $v^*$ are already oriented in both instances).

\textbf{Case 2}: in the sub-procedure \emph{Conflict set construction}, due to fake orientations in the sub-trees of edges in $\cup_{v \in V'_1}E_{\neu}(v)$,
some nodes in $V'_1$ (hence $v^*$) become part of $\C_k$. In this case, in both instances, apply these fake orientations first.
Then $v^*$ becomes part of the conflict set in both instances. Let the original instance direct the edges in $\overline{E}$ away from $v^*$, and then, in both instances, direct all remaining neutral edges incident with $v^*$ away from $v^*$. Now the
two instances are the same, and we can let them continue identically until the end of the sub-procedure.

\textbf{Case 3}: the above two cases do not apply. Consider the execution of the sub-procedure \emph{Conflict set construction} in the original
instance in round $k$. $E_{\neu}(v^*)$ can be partitioned into $E_{\neu\rightarrow \inc}(v^*)$ and
$E_{\neu\rightarrow \out}(v^*)$, the former (latter) being those edges in $E_{\neu}(v^*)$ becoming incoming (outgoing) inside the sub-procedure.

Observe that (1) $E_{\neu\rightarrow \inc}(v^*)\neq \emptyset$, otherwise $v^*$ cannot become part of $\C_k$ 
in the original instance (see Claim~\ref{cla:explore2First}(1)), 
and (2) in round $k$, as long as no edge $E_{\neu \rightarrow \out}(v^*)$ is directed toward $v^*$, then even with the cloned pebble at $v^*$,
 a proper subset $E' \subset E_{\neu\rightarrow \inc}(v^*)$ directed toward $v^*$ cannot cause another edge
in $E_{\neu\rightarrow \inc}(v^*) \backslash E'$ to be directed away from $v^*$ (by Definition~\ref{def:complicatedPush}.3
and the fact that $rl(v^*)=\sum_{e \in E_{\neu \rightarrow \inc}(v^*) \cup E_{\inc}(v^*)}
w_{e}$ in the original instance after round $k$).

Let the original instance start round $k$ with the fake orientations in the sub-trees of edges in $E_{\neu \rightarrow \inc}(v^*)$ until all edges in $E_{\neu \rightarrow \inc}(v^*)$ are directed toward $v^*$, and let the new instance mimic. Now the edges in $\overline{E}$ are directed away from $v^*$ also in the original instance (since any rock edge is heavier than the cloned pebble). Hence, all edges not in $E(\bigcup_{\tau=0}^{k-1} \C_{\tau})$ have the same orientations in both instances,
except that possibly some edges in $E_{\neu\rightarrow \out}(v)$ and in their sub-trees are already oriented in the new instance while not in the original
instance (this is because in the new instance, the pebble load at $v^*$ is higher). 
Let $E_{\neu\rightarrow \out}' \subseteq E_{\neu\rightarrow \out}(v^*)$ be those edges in $E_{\neu\rightarrow \out}(v^*)$ that are already oriented in the original instance at this point.
Now let the original instance apply all possible fake orientations in the sub-trees of edges
in $\bigcup_{v \in V_0\cup V_1 \backslash \{v^*\}} E_{\neu}(v) \cup E_{\neu\rightarrow \out}'$ and let the new
instance mimic. After this step, $v^*$ must be part of the conflict set $\C_k$ and $\C^{\dagger}_k$ in both instances. %because the father by which $v^*$ originally entered $\C_k$ must already be a father of $v^*$ now and already part of $\C_k$ (since nothing more will happen to let him join $\C_k$ in the future).
Finally, in both instances, direct all remaining neutral edges in $E_{\neu\rightarrow \out}(v^*)$ away from $v^*$. Now the
two instances are the same, and we can let them continue identically until the end of the sub-procedure. This finishes the proof of the tree case.

Next suppose that the system containing $v^*$ is a cycle. In the original instance, $v^*$ joins $\C_k$ in two possible ways.
Either $v^*$ or some node in $V'_1$ is a Type $A$ node (then this is the same as Case 1 above), %note that $E_{\inc}(v^*) \cap \overline{E} = \emptyset$ by Definition~\ref{def:complicatedPush}.4
or during the sub-procedure \emph{Conflict set construction} an edge $e_\alpha$ is directed toward $v^*$, causing the other incident edge $e_\beta$ to be directed away from $v^*$.
In this case, by Definition~\ref{def:complicatedPush}.4, $\overline{E} = \emptyset$. Hence, the two instances are the same already in the beginning of the sub-procedure, and we can let them perform identically by choosing the same fake orientations. This finishes the cycle case and the entire proof of Claim~\ref{cla:roundK}. 

\subsubsection{Proof of Lemma~\ref{lem:removeOriginal}} 
\label{sec:removeOriginal}

Our idea is to make use of Lemma~\ref{lem:nondeterminism}: we will apply \textsc{Explore2} simultaneously 
to both instances and let the new instance mimic the behavior of the original instance. In the following, we implicitly assume that 
nodes having a directed path to $\bigcup_{\tau=0}^{i}\C^{\dagger}_\tau$ (respectively 
$\bigcup_{\tau=0}^{i}\C'_\tau$) are part of it in the new (original) instance. 
Furthermore, at any time point considered, we refer to the current content of the sets $C^{\dagger}_i$ and $\C'_i$. 
The lemma below explains how the mimicking is done. 

\begin{lemma} In round $i \geq 0$, suppose that both instances are in the sub-procedure 
\emph{Conflict set construction} and Lemma~\ref{lem:removeOriginal}(2)(3) hold. Let the original instance apply an arbitrary fake orientation and invoke \textsc{Forced Orientations}. Then the new instance can apply a number of 
fake orientations so that Lemma~\ref{lem:removeOriginal}(2)(3) still hold. 

\label{lem:imitationGame}
\end{lemma}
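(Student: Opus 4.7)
The plan is to have the new instance \emph{mirror} the original's fake orientation, exploiting the freedom granted by Lemma~\ref{lem:nondeterminism} to reorder fake orientations arbitrarily. Suppose the original picks a neutral edge $vu$ with $v \in \C'$ as its fake orientation and then invokes \textsc{Forced Orientations}. By invariant (2), $v \in \C^{\dagger}$, so fake-orienting a currently-neutral edge incident with $v$ is also a legal move in the new instance.

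I would split into three cases based on the state of $vu$ in the new instance. In case~(A), $vu$ is already oriented toward $u$ in the new, and no action for $vu$ is required. In case~(B), $vu$ is neutral in the new, and the new instance performs the same fake orientation $vu \to u$ and invokes its own \textsc{Forced Orientations}. In case~(C), $vu$ is oriented toward $v$ in the new, which means $u$ must have pushed $vu$ away at some earlier moment, so $u$'s total load already exceeded $(5/3 + \beta/3)t$ in the new at that time. Since the two instances differ only in the extra pebble at $u^*$, this forces either $u = u^*$ (so $u\in\act\subseteq\C^{\dagger}$ by Lemma~\ref{lem:putClone}) or $u$ to lie on a directed chain of forced orientations emanating from $u^*$ in the new; an inductive argument paralleling Claim~\ref{cla:explore2First} then places $u \in \C^{\dagger}$, so $vu \in E(\C^{\dagger})$ and invariant (3) imposes no constraint on it.

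Next I would track the chain of edges forced in the original by \textsc{Forced Orientations}. Each newly directed edge $xy \to y$ satisfies the triggering condition $dl(x)+pl(x)+rl(x)+w_{xy} > (5/3+\beta/3)t$ in the original at the moment it is oriented. Applying invariant (3) inductively throughout the propagation, the rock load at $x$ in the new is at least that in the original, while the pebble and dedicated loads agree (with a possible strict surplus at $u^*$). Consequently, the same condition holds in the new, so the new instance's own \textsc{Forced Orientations}---triggered either via the mirrored fake orientation in the present step or earlier through propagation from $u^*$---produces the matching orientation $xy\to y$, except possibly when $xy \in E(\C^{\dagger})$, where (3) is vacuous. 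Finally, to re-establish (2), any node $w$ newly added to $\C'$ has a directed path into $\C'$ in the original; by invariants (2) and (3) the analogous (possibly ``detoured'') path exists in the new, and any detour through $E(\C^{\dagger})$ still puts $w$ inside $\C^{\dagger}$.

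The main obstacle is formalizing case~(C): showing that whenever the two instances disagree on an edge's orientation the edge must lie within $E(\C^{\dagger})$. I plan to handle this via induction on the sequence of forced orientations performed within round $i$, tracking the ``extra'' orientations of the new instance as descendants of a directed chain rooted at $u^*\in\C^{\dagger}$ and showing, through the conflict-set construction, that all such chains are eventually absorbed into $\C^{\dagger}$. Once this is secured, re-establishing (2) and (3) reduces to the routine unfolding described above.
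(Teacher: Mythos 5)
Your overall strategy---mirror the fake orientation, use load domination to propagate the forced chain, and absorb any disagreements into $E(\bigcup_\tau\C^{\dagger}_\tau)$ where invariant (3) is vacuous---is the same as the paper's. But your case~(C), which you yourself identify as the crux, rests on a mechanism that does not work. You argue that a reversed edge $vu$ must trace back to the extra pebble at $u^*$, ``since the two instances differ only in the extra pebble at $u^*$.'' That is true of the pebble configurations but not of the execution states: mid-round, the new instance has a (possibly strictly larger) conflict set $\C^{\dagger}$ and has already performed a different collection of fake orientations out of it, so a reversed edge can descend from a fake orientation at any node of $\C^{\dagger}$, not just from a chain rooted at $u^*$. (Moreover, $u^*\in\act^{\dagger}$ does not put $u^*$ into $\C^{\dagger}$ until round $\level(u^*)$, so even the base of your chain is unavailable in earlier rounds.) Separately, even granting $u\in\bigcup_\tau\C^{\dagger}_\tau$, that does not give $vu\in E(\bigcup_\tau\C^{\dagger}_\tau)$: by definition $E(\cdot)$ requires \emph{both} endpoints, and conflict-set membership propagates along directed paths \emph{into} $\C$, i.e.\ from $vu$ directed toward $v$ you may conclude $v\in\C^{\dagger}\Rightarrow u\in\C^{\dagger}$, not the converse you need.

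The paper closes exactly this gap with a different, purely structural argument. It orders the edges of the forced tree $T_{u_0}$ by distance to $u_0$ and shows (Claim~\ref{cla:atMostOne}) that any node of $T_{u_0}$ outside $\C^{\dagger}$ can accept at most one of its two $T_{u_0}$-edges in the new instance, because its dedicated, pebble, \emph{and} rock loads (the last via invariant (3)) dominate those in the original. From this, Claim~\ref{cla:structureTU0} derives that a reversed edge forces a backward chain of reversals all the way to $e_0=v_0u_0$, which contradicts $v\notin\C^{\dagger}$; hence both endpoints of any reversed edge lie in $\C^{\dagger}$. You would need this $\prec$-chain contradiction (or an equivalent) in place of the ``descends from $u^*$'' bookkeeping. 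Finally, your sketch omits the last step of the mimicking: after the propagation, some edges of $T_{u_0}$ may remain neutral in the new instance, and one must show the resulting neutral subtrees are rooted at nodes of $\C^{\dagger}_i$ so that directing them away from $u_0$ is a \emph{legal} fake orientation; without that, invariant (3) is not restored.
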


\begin{proof}

In the original instance, suppose that the chosen fake orientation is to direct the edge $e_0= v_0u_0$ toward $u_0$. In the subsequent call of \textsc{Forced Orientations}, a tree $T_{u_0}$ of neutral edges are further directed away from $u_0$. Given two incident 
edges $e$, $e'$ of a node $v \in T_{u_0}$, we write $e \prec e'$ if $e$ is closer to $u_0$ than $e'$. Similarly, given two adjacent nodes $v, u \in T_{u_0}$, we write $v \prec u$ if $v$ is closer to $u_0$ than $u$. We make 
an important observation. 

\begin{numberedClaim} Suppose that $v \in T_{u_0}$ and $v \not \in \bigcup_{\tau=0}^i \C^{\dagger}_\tau$. Furthermore, suppose that $e, e' \in T_{u_0}$
are incident on $v$ and $e \prec e'$. Then $v$ can take at most one of them in the new instance, i.e.,
if either of them is directed toward $v$, then (after \textsc{Forced Orientations}) the other must be directed away from $v$.

In the special case of $v = u_0 \not \in \bigcup_{\tau=0}^i \C^{\dagger}_\tau$, assuming that $e$ is an incident edge of $u_0$
in $T_{u_0}$, $u_0$ can take at most one of $e_0=v_0u_0$ and $e$.

\label{cla:atMostOne}
\end{numberedClaim}

\begin{proof} The dedicated load $dl(v)$ and the pebble load $pl(v)$ are at least as heavy in the new instance as in the original. 
An edge not in $E(\bigcup_{\tau=0}^{i}\C^{\dagger}_{\tau})$, if oriented in the original instance, 
must be oriented in the same way in the new instance. So the rock load $rl(v)$ is also at least as heavy in the new 
instance as in the original. Thus, if in the original instance, $e$ being directed toward $v$ causes $e'$ to be directed 
away from $v$, then $v$ can take at most one of them in the original, and hence in the new instance. 

The second part of the claim follows from the same reasoning. 

\qed\end{proof}

Our goal is to apply a number of fake orientations in the new instance, so that the edges
$(\{e_0\}\cup T_{u_0}) \backslash E(\bigcup_{\tau=0}^{i} \C^{\dagger}_\tau)$ are directed the same way as in the original 
instance. 

First, if $e_0$ is still neutral in the new instance, direct it toward $u_0$ and invoke \textsc{Forced Orientations}. Notice that if $e_0$ is already directed toward $v_0$ in the new instance, then both $v_0, u_0 \in \bigcup_{\tau=0}^i \C^{\dagger}_\tau$, and hence $e_0 \in E(\bigcup_{\tau=0}^{i} \C^{\dagger}_\tau)$.

We make another observation. 

\begin{numberedClaim} In the new instance, after a call of \textsc{Forced Orientations}, assume that $e =vu \in T_{u_0}$, and $v \prec u$.

\begin{enumerate}

\item If $e=vu$ is directed toward $v$, then both $u, v \in \bigcup_{\tau=0}^i \C^{\dagger}_\tau$. 

\item If $e=vu$ is directed toward $u \not \in \bigcup_{\tau=0}^{i} \C^{\dagger}_{\tau}$, then the entire sub-tree of $T_{u_0}$ rooted at $u$ is directed away from $u_0$ and none of its nodes is in 
$\bigcup_{\tau=0}^{i} \C^{\dagger}_{\tau}$.

\end{enumerate}

\label{cla:structureTU0}
\end{numberedClaim}

\begin{proof} For (1), suppose that $e$ is directed toward $v$. 
If $v \in \bigcup_{\tau=0}^{i} \C^{\dagger}_{\tau}$, then so is $u$ and the claim holds. So assume that 
$v \not \in \bigcup_{\tau=0}^{i} \C^{\dagger}_{\tau}$. Consider the incident edge $e' \in T_{u_0}$ of $v$ 
with $e' \prec e$. By Claim~\ref{cla:atMostOne}, $e'$ must also be directed toward $u_0$. Repeating this argument, 
we find a sequence of edges directed toward $u_0$ and they either end up at a node in 
$\bigcup_{\tau=0}^{i} \C^{\dagger}_{\tau}$ (then implying that $v$ is part of 
$\bigcup_{\tau=0}^{i} \C^{\dagger}_{\tau}$, a contradiction), or at $u_0$ and $u_0 \not \in 
\bigcup_{\tau=0}^{i} \C^{\dagger}_\tau$. Then, by Claim~\ref{cla:atMostOne}, the edge $v_0u_0$ must be directed toward $v_0$ in the new 
instance, again implying that $v$ is part of 
$\bigcup_{\tau=0}^{i} \C^{\dagger}_{\tau}$, a contradiction. 

(2) is the consequence of Claim~\ref{cla:atMostOne} and our assumption that 
all nodes having a directed path to $\bigcup_{\tau=0}^{i} \C^{\dagger}_{\tau}$ are part of it. 
\qed\end{proof}

In the new instance, the set of neutral edges in $T_{u_0}$ form 
a set of node-disjoint trees $T_1, T_2, \dots$, where each tree $T_j$ has a root node $r_j$ that is closest 
to $u_0$ in $T_{u_0}$ ($r_j$ could be $u_0$ itself). 
Observe that no node in $T_j$ can be part of $\bigcup_{\tau=0}^{i-1}\C^{\dagger}_{\tau}$, since otherwise its incident edges would not be neutral in round $i$. 
It follows from Claim~\ref{cla:structureTU0} (resp. the last part of Claim~\ref{cla:atMostOne} if $r_j = u_0$) that $r_j \in \C^{\dagger}_i$. 
Hence, we can let the new instance direct 
the neutral edges in $T_j$ incident on $r_j$ 
away from it. 
If some edge in $T_j$
remains neutral after this, by Claim~\ref{cla:atMostOne}, there must exist a node $v \in T_j \cap \C^{\dagger}_i$ with neutral incident edges in $T_j$. Then again let $v$ direct all remaining neutral edges in $T_j$ 
away from it and continue this process until all edges in $T_j$ are directed away from $u_0$. 

By the above mimicking, we guarantee that all edges in $(\{e_0\}\cup T_{u_0}) \backslash E(\bigcup_{\tau=0}^{i} \C^{\dagger}_\tau)$ 
are directed the same way in both instances. This implies that Lemma~\ref{lem:removeOriginal}(3) holds after the mimicking. Next we argue that if a node $v$ is added into $\C'_i$ in the original instance, then it is either already in $\bigcup_{\tau=0}^{i}\C^{\dagger}_{\tau}$, or is added into 
$\C^{\dagger}_i$ as well after the mimicking. For $v$ to be added into $\C'_i$ in the original instance, it must have a directed path $P$ to some node $\hat{v} \in \C'_i$ 
after $v_0u_0$ is oriented toward $u_0$, where $\hat{v}$ is part of $\C'_i$ already before the fake orientation. Note that $\hat{v}$ is also in $\bigcup_{\tau=0}^{i}\C^{\dagger}_{\tau}$ before the mimicking. 
Let $\overline{v}$ be the first node on $P$ (starting from $v$) that is part of $\bigcup_{\tau=0}^{i}\C^{\dagger}_{\tau}$ after the mimicking. If $\overline{v}=v$, we are done. Otherwise, since Lemma~\ref{lem:removeOriginal}(3) holds after the mimicking, $v\not \in \bigcup_{\tau=0}^{i}\C^{\dagger}_{\tau}$ has a directed path to $\overline{v}\in \bigcup_{\tau=0}^{i}\C^{\dagger}_{\tau}$ in the new instance, a contradiction.  

So we have established Lemma~\ref{lem:removeOriginal} (2) and (3) after the mimicking. 
\qed\end{proof}

We use the above lemma to prove Lemma~\ref{lem:removeOriginal} for the case of $i \geq 1$. 

\begin{lemma} Suppose that Lemma~\ref{lem:removeOriginal} holds at the end of round $i-1$ for $i\geq 1$. 
Then it holds still at the end of round $i$. 
\label{lem:roundMoreThanZero}
\end{lemma}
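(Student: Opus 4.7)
The plan is to decompose round $i$ into its three sub-phases---(a) the addition of Type~A nodes, (b) the sub-procedure \emph{Conflict set construction}, and (c) \emph{Activation of nodes}---and to check that properties (1), (2), (3) of Lemma~\ref{lem:removeOriginal} survive each sub-phase in turn, starting from the inductive hypothesis at the end of round $i-1$.

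In sub-phase (a), a node $v$ added to $\act'_i$ is reachable from some $u \in \act'_{i-1}$ via a pebble at $u$; this pebble is also present at $u$ in the new instance, since the only pebble that the original lacks (relative to the new instance) is the extra pushed copy at $u^*$. By the inductive hypothesis $u \in \bigcup_{\tau \leq i-1} \act^{\dagger}_\tau$, so applying Proposition~\ref{pro:third}.1 and~\ref{pro:third}.2 in the new instance gives $v \in \act^{\dagger}$ with $\level^{\dagger}(v) \leq \level^{\dagger}(u)+1 \leq i$, hence $v \in \bigcup_{\tau \leq i} \act^{\dagger}_\tau$. Since $\C'_i$ is initialized to $\act'_i$ and no edges are oriented here, (1), (2), and (3) all continue to hold after (a). For sub-phase (b) I invoke Lemma~\ref{lem:imitationGame} iteratively: for each fake orientation chosen in the original I let the new instance respond by the mimicking sequence prescribed by that lemma, preserving (2) and (3) throughout.

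The substantive work is sub-phase (c), which I handle by an inner induction on the order in which the original instance activates nodes. Suppose $v$ is activated via Rule~1 through a father $u \in \C'$ with $dl(v)+pl'(v)+w_{vu} > (5/3+\beta/3)t$. Since $pl^{\dagger}(x) \geq pl'(x)$ for every $x$ (with strict inequality only possibly at $u^*$, which carries the extra pebble in the new instance), the same inequality holds in the new instance. By (2) both $v$ and $u$ lie in $\bigcup_{\tau \leq i} \C^{\dagger}_\tau$, so $vu$ has been oriented by the end of (b) in the new instance. It cannot point toward $v$, for that would make $v$ overloaded, contradicting Lemma~\ref{lem:forcedOrientationLemma} together with the invariant that no node outside $\act^{\dagger}_0$ ever becomes overloaded in \textsc{Explore2}. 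Hence $u$ is a father of $v$ in the new instance and Rule~1 fires for $v$ by round $i$, placing $v \in \bigcup_{\tau \leq i} \act^{\dagger}_\tau$. If instead $v$ is activated by Rule~2 through a neighbor $u \in \act' \cap (\father(v)\cup\child(v))$ with $w_{vu} < (2/3+\beta/3)t$, then the inner induction gives $u \in \bigcup_{\tau \leq i} \act^{\dagger}_\tau$, and (2) puts $v,u \in \bigcup_{\tau \leq i} \C^{\dagger}_\tau$; the edge $vu$ is therefore oriented in the new instance, making $u$ a father or child of $v$ there, and Rule~2 triggers for $v$ in the new instance as well. Sub-phase (c) creates no new edge orientations, so (3) survives.

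The hard part will be precisely the Rule~1 step: because $vu \in E(\bigcup_{\tau \leq i} \C^{\dagger}_\tau)$ is exempted from property~(3), the two instances are free to orient $vu$ differently, and I must still recover $u$ as a father of $v$ in the new instance. The resolution relies crucially on the overload invariant to pin the direction of $vu$ in the new instance. Once this piece is in place, the containment under the cascading Rule~2 activations follows cleanly from the inner induction, completing the inductive step for round $i$.
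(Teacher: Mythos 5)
Your proposal is correct and takes essentially the same route as the paper, whose proof of this lemma is just a two-line appeal to Lemma~\ref{lem:imitationGame} plus two ``easy to verify/see'' claims that your sub-phases (a) and (c) spell out in detail (your Rule-1/Rule-2 analysis mirrors what the paper writes explicitly only for the round-$0$ case). One cosmetic fix: ``it cannot point toward $v$'' should read ``either $vu$ points toward $u$, or $v$ is overloaded and hence already in $\act^{\dagger}_0$, making the containment for $v$ trivial'' --- the second case is not a contradiction but an immediate win, exactly as the paper phrases it in its round-$0$ argument.
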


\begin{proof} In round $i$, it is easy to verify that the Lemma~\ref{lem:removeOriginal} is true in the beginning of the sub-procedure
\emph{Conflict set construction}. Now let the original instance apply all the fake orientations and let the new instance mimic, using 
Lemma~\ref{lem:imitationGame}. Next let the new instance finish off its fake orientations arbitrarily. It is easy to see that 
Lemma~\ref{lem:removeOriginal} holds at the end of round $i$. 

\qed\end{proof}

We now handle the more difficult case of round $0$. Unlike the later rounds, Lemma~\ref{lem:removeOriginal} does not hold 
in the beginning of the sub-procedure \emph{Conflict set construction}: the set of overloaded nodes can be different 
in the two instances and the conflict set in the new instance may not be a superset of the conflict set in the original instance. 

In the following, we postpone the fake orientations of the original instance and just let the new instance perform some fake 
orientations until Lemma~\ref{lem:removeOriginal}(2)(3) hold. 

\begin{lemma} Consider the beginning of the sub-procedure 
\emph{Conflict set construction} in round $0$. In the new instance, as long as an edge 
$e=vu \in E(\C'_0)$ remains neutral and $v$ is part of $\C^{\dagger}_0$, direct $e$ toward $u$. Then finally, 
$\C'_0 \subseteq C^{\dagger}_0$. %We still assume that nodes with a directed path to $\C'_0 (C^{\dagger}_0)$ are also part of it.
\label{lem:finallyTreesBigger}
\end{lemma}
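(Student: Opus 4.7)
The plan is to argue by a minimum-counterexample: assume the procedure terminates with some $w \in \C'_0 \setminus \C^{\dagger}_0$, and choose $w$ to be the \emph{first} such node added to $\C'_0$ when we replay the original-after-\textsc{Push} instance's \textsc{Conflict set construction} in its original order. Termination of the procedure itself is immediate, since each iteration consumes a distinct previously neutral edge of the finite set $E(\C'_0)$, so I focus on deriving a contradiction for this putative $w$ by case analysis on how it entered $\C'_0$.

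If $w \in \act'_0$, I would split into two sub-cases. For $w = u^*$, the strictly larger pebble load at $u^*$ in the new instance (which has both the original pebble and the clone) immediately implies $u^* \in \act^{\dagger}_0 \subseteq \C^{\dagger}_0$, contradicting $w \notin \C^{\dagger}_0$. For $w \neq u^*$, the pair $(dl(w), pl(w))$ is identical between the two instances, so I would use Lemma~\ref{lem:forcedOrientationLemma} to trace a path of edges along which $w$ became overloaded during the original's initial \textsc{Forced Orientations}, and compare that execution to the new instance's \textsc{Forced Orientations}. Either the same (or stronger) propagation occurs in the new instance, placing $w$ directly in $\act^{\dagger}_0$, or the extra pebble at $u^*$ redirects part of the propagation; in the latter case, I would argue that the procedure's fake orientations along $E(\C'_0)$, combined with the subsequent \textsc{Forced Orientations} and Rule~(I) of \textsc{Conflict set construction}, cascade to pull $w$ into $\C^{\dagger}_0$.

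If $w$ entered $\C'_0$ in the original via a directed edge $wu$ pointing toward an earlier $\C'_0$-member $u$, then by the minimality of $w$ we have $u \in \C^{\dagger}_0$. I would case-split on the state of $wu$ in the new instance. If $wu$ is directed toward $u$, then $w$ has father $u \in \C^{\dagger}_0$ and is pulled in by Rule~(I). If $wu$ is neutral, then $wu \in E(\C'_0)$ with $u \in \C^{\dagger}_0$, which violates the procedure's termination condition. If $wu$ is directed toward $w$ (reversing the original's orientation), then the rock load $w_{wu}$ now sits at $w$; combined with $w$'s unchanged $dl + pl$, this reversal must be traceable to another discrepancy in the two \textsc{Forced Orientations} executions, and I would follow that discrepancy (again in the style of Lemma~\ref{lem:forcedOrientationLemma}) until it either meets a node in $\act^{\dagger}_0$ or triggers a cascade inside the procedure that puts $w$ into $\C^{\dagger}_0$.

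The principal obstacle is the comparison of \textsc{Forced Orientations} between the two instances: the procedure is order-sensitive, and the extra pebble at $u^*$ in the new instance may cause edges to be directed in patterns genuinely different from the original, including individual edge reversals. Making the argument rigorous requires pinning down precisely how the ``delta'' induced by the extra pebble at $u^*$ propagates through $G_\rock$, and showing that every such delta is absorbed into $\C^{\dagger}_0$---either as a contribution to $\act^{\dagger}_0$, or as an edge of $E(\C'_0)$ which the procedure fakes toward the correct endpoint, subsequently pulling the relevant nodes in through Rule~(I). This is essentially a structural analysis parallel to that of Lemma~\ref{lem:forcedOrientationLemma}, but it must interface with the inductive construction of $\C^{\dagger}_0$ inside \textsc{Conflict set construction}.
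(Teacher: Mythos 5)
There is a genuine gap: your proposal correctly identifies the hard part (the order-sensitivity of \textsc{Forced Orientations} and the fact that the extra pebble at $u^*$ can genuinely reroute orientations), but it does not supply the idea that resolves it -- the ``delta-tracing'' you describe in your second and third paragraphs is precisely the argument that is hard to make rigorous, and you explicitly leave it as an obstacle rather than overcoming it. Even your ``immediate'' sub-cases are not immediate: a node overloaded in the original instance need not be overloaded in the new one, because overloadedness depends on $rl(\cdot)$, which is an outcome of the order-sensitive procedure you are trying to compare. (Indeed, the paper itself only shows $\act'_0 \subseteq \act^{\dagger}_0$ \emph{after} $\C'_0 \subseteq \C^{\dagger}_0$ is established, and even then an overloaded node of the original may enter $\act^{\dagger}_0$ via Rule~1 rather than by being overloaded.)

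The missing idea is to replace the execution-by-execution comparison with a \emph{static}, order-independent load inequality. For every node $v$ in a component $H$ of $G_\rock[\C'_0]$, the original instance certifies that $dl(v)+pl(v)+\sum_{u\colon vu\in H} w_{vu} > (5/3+\beta/3)t$: every such $v$ either is overloaded or directs some $H$-edge away from itself, and by Proposition~\ref{pro:third}(3) all its non-$H$ incident edges point outward, so $rl(v)$ is bounded by the weights of its other $H$-edges. Since $dl+pl$ only increases from the original to the new instance, this inequality persists. Now suppose some maximal subtree $\overline{T}\subseteq H$ remains outside $\C^{\dagger}_0$ when the procedure terminates; the termination condition forces every $H$-edge on the boundary of $\overline{T}$ to be directed \emph{into} $\overline{T}$, and a leaf-to-root induction using the inequality shows that each node of $\overline{T}$ must push its remaining edge toward its parent, until the root of $\overline{T}$ receives all of its incident $H$-edges and becomes overloaded -- hence lands in $\C^{\dagger}_0$, a contradiction. (The cycle case reduces to this after locating one overloaded node of $H$ in the new instance.) Your minimal-counterexample framing cannot substitute for this, because the first uncaptured node need not admit any local certificate; the contradiction is inherently global over the uncaptured subtree.
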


\begin{proof}\footnote{The proof here is very similar to the proof of Lemma~\ref{lem:instrumental}, Case 1. So we 
only sketch the ideas.} Consider a connected 
component $H$ in the induced subgraph $G_\rock[\C'_0]$, and let us first assume $H$ is a tree.
It is easy to see that because in the original instance 
every node $v \in H$ can follow a directed path to some overloaded node in $H$, 
$v$ cannot receive all incident edges in $H$ without becoming overloaded. 
Suppose the lemma does not hold and consider a maximal tree $\overline{T} \subseteq H$ remaining outside 
of $\C^{\dagger}_0$. 
In the new instance, all edges of $H$ connecting $\overline{T}$ to the rest of the nodes in $H\backslash \overline{T}$ are directed toward $\overline{T}$. 
By induction, we can show that there is a node $v \in \overline{T}$ which receives all its incident edges in $H$, implying that 
$v \in \C^{\dagger}_0$, a contradiction. 

If $H$ is a cycle, observe that at least one node in $H$ must be overloaded in the new instance and hence part of $\C^{\dagger}_0$. Now we can proceed as before. %$\overline{T}$ is a strict subset of $H$ in this case
%To see that at least one node is overloaded, we can look at an overloaded node in the original instance and trace back possibly incoming edges. This way we find a path v_1 e_1 v_2 e_2 .... e_k v_k where v_1 together with e_1 would be overloaded and the same for v_k together with e_k. At least one of these nodes must be overloaded in the new instance.

\qed\end{proof}

\begin{lemma} In round 0, suppose that both instances are in the sub-procedure 
\emph{Conflict set construction} and $\C'_0 \subseteq \C^{\dagger}_0$.
In the new instance, as long as there is an edge 
$e=vu \not \in E(\C'_0)$ so that (1) it is directed toward $u$ in the original instance, 
(2) it is currently neutral in the new instance, and 
(3) $v \in \C^{\dagger}_0$ and $u \not \in \C^{\dagger}_0$, let $e$ be directed toward $u$ in the new instance. 
Then finally, an edge $e \not \in E(\C^{\dagger}_0)$, if directed in the original instance, is directed the same way 
in the new instance. 
\label{lem:finallyOutsideTheSame}
\end{lemma}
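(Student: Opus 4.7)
The plan is to show that after iterating the mimicking rule of Lemma~\ref{lem:finallyOutsideTheSame}, every edge $e \notin E(\C^{\dagger}_0)$ that is directed in the original instance (in round~0) ends up directed identically in the new instance. Such an edge $e=vu$ must have at least one endpoint outside $\C^{\dagger}_0$, and I would partition the target edges into two groups: (a) \emph{boundary edges}, with $v \in \C^{\dagger}_0$ and $u \notin \C^{\dagger}_0$, directed toward $u$ in the original (all orientations in the original emanate from tails in $\C'_0 \subseteq \C^{\dagger}_0$); and (b) \emph{external edges}, with both endpoints outside $\C^{\dagger}_0$. Group (a) edges are handled directly by the mimicking, since they satisfy conditions (1)--(3) of the lemma verbatim, so they become directed toward $u$ in the new instance as soon as the mimicking considers them.

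For group (b), I would invoke Lemma~\ref{lem:forcedOrientationLemma} to trace each such edge back to a \textsc{Forced Orientations} chain in the original. Every chain originates at a source $u_0$ that is either overloaded (so $u_0 \in \act'_0 \subseteq \C'_0 \subseteq \C^{\dagger}_0$ by Lemma~\ref{lem:finallyTreesBigger}) or is the tail of a prior fake orientation (so $u_0 \in \C'_0 \subseteq \C^{\dagger}_0$). Walking along the chain from $u_0$, I would identify the first crossing edge $u_{j-1} u_j$ with $u_{j-1} \in \C^{\dagger}_0$ but $u_j \notin \C^{\dagger}_0$; this edge falls in group (a), so the mimicking orients it toward $u_j$ in the new instance. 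The subsequent \textsc{Forced Orientations} call automatically invoked inside \emph{Conflict set construction} then propagates the chain further. Because the dedicated and pebble loads at every node other than $u^*$ are identical in the two instances (and $u^*$'s load is only larger in the new instance), the propagation in the new instance matches the original along the entire chain and orients every external edge the same way.

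The main obstacle will be ruling out that an \emph{extra} chain in the new instance --- most naturally one triggered by the higher pebble load at $u^*$ --- orients some edge $e \notin E(\C^{\dagger}_0)$ in a direction disagreeing with the original. To address this, I would argue that the source of any such extra chain must itself already lie in $\C^{\dagger}_0$: if the source is overloaded it is in $\act^{\dagger}_0$, and otherwise (as in the case of $u^*$ when its load alone forces an incident edge) it is captured by the conflict-set closure rule inside \emph{Conflict set construction}, which adds every node with a directed path into $\C^{\dagger}_0$. Consequently, every edge oriented by such an extra chain either stays inside $E(\C^{\dagger}_0)$ (and is thus irrelevant to the lemma's conclusion) or was already oriented the same way in the original via the group (a)/(b) analysis above. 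Finally, I would appeal to Lemma~\ref{lem:nondeterminism} to assure that the order in which the mimicking rule is applied does not affect the final set of directions on edges outside $E(\C^{\dagger}_0)$, which completes the argument.
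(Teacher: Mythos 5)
There is a genuine gap, and it sits at the foundation of your case analysis. At the point where this lemma is applied, the original instance has performed \emph{no} fake orientations (these are explicitly postponed), so every directed edge in the original comes from the initial call of \textsc{Forced Orientations}; the source $u_0$ of such a chain is merely a node with $dl(u_0)+pl(u_0)+w_{u_0u_1} > (5/3+\beta/3)t$, and such a source is in general neither overloaded nor in $\C'_0$ (a node joins $\C'_0$ only if it is overloaded or has a directed path \emph{to} an overloaded node, which a chain that never produces an overloaded node does not provide). Hence your parenthetical claim that ``all orientations in the original emanate from tails in $\C'_0 \subseteq \C^{\dagger}_0$'' is false, and with it the group~(b) argument collapses: there are directed edges of the original whose chain lies entirely outside $\C^{\dagger}_0$ and has no ``first crossing edge'' of type~(a), and there can also be edges directed from a node outside $\C^{\dagger}_0$ toward a node of $\C^{\dagger}_0\setminus\C'_0$, which fit neither of your two groups. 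Your resolution of the ``extra chain'' issue is flawed for the same reason: the conflict-set closure adds nodes having a directed path \emph{into} the conflict set, not nodes whose own load forces an edge \emph{away} from them, so the source of an extra chain in the new instance (e.g.\ $u^*$ with its additional pebble) need not belong to $\C^{\dagger}_0$, and you cannot conclude that the edges it orients are inside $E(\C^{\dagger}_0)$ or already agree with the original.

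The missing idea is precisely the exclusion of \emph{reversed or withheld} orientations outside $\C^{\dagger}_0$, which ``loads are only larger in the new instance, so propagation matches'' does not by itself give. The paper proves this by contradiction: if some edge $v_0u \notin E(\C^{\dagger}_0)$ is directed toward $u$ in the original but neutral or reversed in the new instance, then $v_0 \notin \C^{\dagger}_0$ (otherwise the mimicking rule or the closure would apply) and $v_0$ is not overloaded, so since its loads in the new instance are at least those in the original, some edge that is incoming at $v_0$ in the original must be neutral or reversed in the new instance as well; iterating this trace-back either terminates at a forced source (impossible, since its load data only increase) or wraps around, forcing the whole system to be a cycle disjoint from $\C^{\dagger}_0$ all of whose edges are directed, say, clockwise in the original and neutral or counter-clockwise in the new instance. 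The cycle must contain a source edge $xy$ with $dl(x)+pl(x)+w_{xy} > (5/3+\beta/3)t$, and since $x$ is not overloaded in the new instance this edge must be directed toward $y$ there too, a contradiction. If you want to salvage your constructive approach, you would have to add exactly this kind of argument for chains (and extra chains) that never touch $\C^{\dagger}_0$; as written, the proposal does not establish the lemma.
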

%Note that $\C^{\dagger}_0$ can further increase, but $\C'_0 \subseteq \C^{\dagger}_0$ is maintained.

\begin{proof} Let $E_{\inc}(v)$ and $E_{\out}(v)$ denote the current set of incoming and outgoing edges of a node $v \not \in 
\C'_0$ in the original instance. In the new instance, after the fake orientations required in the lemma,
if every edge in $E_{\inc}(v)$ is directed toward $v$, then every edge in $E_{\out}(v)$ must be directed away from $v$, otherwise $v$ is overloaded. 

We now prove the lemma by contradiction. Suppose that edge $e_0 = v_0u \not \in E(\C^{\dagger}_0)$ is directed toward 
$u$ in the original instance while it is neutral or directed toward $v_0$ in the new instance after the fake orientations required 
in the lemma. In both cases $v_0 \not \in \C^{\dagger}_0$ (hence $v_0 \not \in \C'_0$) and $v_0$ is not overloaded. %Suppose $v_0 \in \C^{\dagger}_0.$ If $e_0$ is neutral, it should have been directed by the fake orientation required in the lemma. If $e_0$ is directed toward $v_0$, then also $u \in \C^{\dagger}_0$, contradicting $e_0 \not \in E(\C^{\dagger}_0)$
So there is an edge $e_1= v_1v_0 \in E_{\inc}(v_0)$ that is neutral or directed away from $v_0$ in the new instance after the fake orientations. 
As before, $v_1 \not \in \C^{\dagger}_0$. 
Repeating this argument, we conclude that the entire system is a cycle with no node in $\C^{\dagger}_0$ (hence also not in $\C'_0$), whose edges are all directed, say clockwise, in the original instance. Furthermore, in the new instance, each edge in the cycle is either neutral or directed counter-clockwise. Clearly, for at least one edge $xy$ in the cycle, %namely the first one that becomes oriented
it holds that $dl(x) + pl(x) + w_{xy} > (5/3 + \beta/3)t$. Since $x$ is not overloaded, this edge must have the same orientation (namely toward $y$) in both instances, a contradiction.
\qed\end{proof}

By Lemmas~\ref{lem:finallyTreesBigger} and~\ref{lem:finallyOutsideTheSame}, Lemma~\ref{lem:removeOriginal}(2)(3) hold, 
and we can apply Lemma~\ref{lem:imitationGame} to finish off all the remaining fake orientations in both instances while maintaining 
Lemma~\ref{lem:removeOriginal}(2)(3). 

The last thing to prove is that $\act'_0 \subseteq \act^{\dagger}_0$ after the activation rules are applied to both instances. 
If a node $v$ is overloaded in the original instance, by Lemma~\ref{lem:forcedOrientationLemma}, either its own pebble and dedicated load is already more than $(5/3 + \beta/3)t$, 
or it has a child $u \in \C'_0$ so that $pl(v)+dl(v)+ w_{vu} > (5/3 + \beta/3)t$. Thus, in the new instance, 
$v$ is either overloaded, or (as $u \in \C'_0 \subseteq \C^{\dagger}_0$) becomes a child of 
$u$ and is activated by Rule~1.
Furthermore, if a node $v$ is activated by Rule 1 in the original instance, then it has a father $u \in \C'_0$ satisfying $dl(v)+pl(v) + w_{vu} > (5/3+\beta/3)t$. As $u$ is also part of $\C^{\dagger}_0$ in the new instance, either $v$ is overloaded, or it is again activated by Rule 1. So we are sure Type A and Type B nodes of the original instance in $\act'_0$ must be part of $\act^{\dagger}_0$. Finally, as Lemma~\ref{lem:removeOriginal}(2) holds, nodes of $\act'_0$ activated by Rule 2 must also be 
part of $\act^{\dagger}_0$. This completes the proof of round 0 and 
the entire proof of Lemma~\ref{lem:removeOriginal}. 

\newpage

\bibliographystyle{splncs03}
\bibliography{bib_another}{}

\newpage
\appendix

\section{Improved Ratio for the 2-Valued Case} 

Suppose that $W\geq 2w$. 

As before, we first assume that $t < 2W$, and discuss the case $t\geq 2W$ at the end of the section. We modify our previous algorithm as follows:

\begin{definition} A node $v$ is 

\begin{itemize}
\item \emph{uncritical}, if $dl(v) + pl(v) \leq \thres -W - w$; 
\item \emph{critical}, if $dl(v) + pl(v) > \thres -W$;
\item \emph{hypercritical}, if $dl(v) + pl(v) > \thres$.
\end{itemize}
\end{definition}

\begin{quote} \textbf{Modified Algorithm~1}: As long as there is a bad system, apply \textsc{Explore1} and \textsc{Push} operation repeatedly. When there is no bad system left, return a solution with makespan at most $\thres$.
If at some point, \textsc{push} is no longer possible, declare that $\OPT \geq t+1$.
\end{quote}

The proof of Lemma~\ref{lem:simplePush} remains the same, and to establish Lemma~\ref{lem:firstContradiction} we just need to re-do the proof of Claim~\ref{cla:firstMain}.

\noindent \emph{New Proof of Claim~\ref{cla:firstMain}}: By the same reasoning as before, 
\begin{itemize}
\item none of the nodes in $\act(S)$ is uncritical;
\item if $S$ is a tree and $\act(S) \neq \emptyset$, at least one node $v\in \act(S)$ is critical; furthermore, if $|\act(S)| = 1$, this node $v$ satisfies $dl(v) + pl(v) > \thres-w$;
\item if $S$ is an isolated node $v \in \act$, then $dl(v) + pl(v) > \thres-w$. 
\end{itemize}

We now re-do the case analysis. 
\begin{enumerate}

\item Suppose that $S$ is a good system and $\act(S) \neq \emptyset$. Then either $S$ is a tree and $\act(S)$ contains exactly one critical (but not hypercritical) node, or $S$ is an isolated node, or $S$ is a cycle and has no critical node. In the first case, if $|\act(S)| \geq 2$, 
the LHS of (\ref{equ:2valAlmostTooMuch}) is at least
\begin{eqnarray*}
(\thres - W+1) + (|\act(S)|-1)(\thres -W-w+1) + (|\act(S)|-1)W = \\
|\act(S)|t - W + |\act(S)|(\lfloor \frac{W}{2} \rfloor +1) - (|\act(S)|-1) w > \\
|\act(S)|t + \frac{(|\act(S)|-2)W}{2} - (|\act(S)|-1) w \geq |\act(S)|t-w, 
\end{eqnarray*}
\noindent where the first inequality holds because $\lfloor \frac{W}{2} \rfloor +1 > \frac{W}{2}$ and the 
last inequality holds because $|\act(S)| \geq 2$ and $W \geq 2w$. If, on the other hand, $|\act(S)| = 1$, then the LHS of (\ref{equ:2valAlmostTooMuch}) is strictly more than 
\begin{eqnarray*}
\thres-w \geq t = |\act(S)|t, 
\end{eqnarray*}
and the same also holds for the case when $S$ is an isolated node.
Finally, in the third case, the LHS of (\ref{equ:2valAlmostTooMuch}) is at least 
\begin{eqnarray*}
|\act(S)|(\thres -W-w+1) + |\act(S)|W > |\act(S)|t.
\end{eqnarray*}

\item Suppose that $\act(S)$ contains at least two critical nodes, or that $S$ is a cycle and $\act(S)$ has 
at least one critical node. In both cases, $S$ is a bad system. Furthermore, the LHS of (\ref{equ:2valTooMuch}) 
can be lower-bounded by the same calculation as in the previous case with an extra term of $w$. 

\item Suppose that $\act(S)$ contains a hypercritical node. Then the system 
$S$ is bad, and the LHS of (\ref{equ:2valTooMuch}) is at least 

\begin{eqnarray*}
(\thres + 1) + (|\act(S)|-1)(\thres -W-w+1) + (|\act(S)|-1)W = \\ 
|\act(S)|(\thres+1) - (|\act(S)|-1)w > |\act(S)|t,
\end{eqnarray*}

\noindent where the last inequality holds because $W \geq 2w$. 
\qed
\end{enumerate}

\textbf{Approximation Ratio}: When $t \geq 2W$, we can again use the Gairing et al's algorithm~\cite{gairing04}, which either correctly reports that 
$\OPT \geq t+1$, or returns an assignment with makespan at most $t+W-1$. 

Suppose that $t$ is the smallest number for which an assignment is returned (then $\OPT \geq t$). Then the approximation ratio is 

$$ \frac{\thres}{\OPT}, \mbox{if $t < 2W$;\hspace*{0.3in}} \frac{t+W-1}{\OPT}, \mbox {if $t \geq 2W$}.$$ 

The former is bounded by $1 + \frac{\lfloor \frac{W}{2} \rfloor}{W}$, since $\OPT \geq W$; the latter 
is bounded by $1 + \frac{W-1}{2W} \leq 1 + \frac{\lfloor \frac{W}{2} \rfloor}{W} $, since 
$\OPT \geq t \geq 2W$. We can thus conclude: 

\begin{theorem} Suppose that $W \geq 2w$. With arbitrary dedicated loads on the machines, jobs of weight $W$ 
that can be assigned to two machines, and jobs of weight $w$ that can be assigned to any number of machines, 
we can find a $1 + \frac{\lfloor \frac{W}{2} \rfloor}{W}$ approximate solution in polynomial time.
\label{thm:firstTheoremAgain}
\end{theorem}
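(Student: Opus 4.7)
The plan is to mirror the structure of the proof of Theorem~\ref{thm:firstTheorem}, but with a tighter target makespan of $\thres = t + \lfloor W/2 \rfloor$ instead of $1.5t$, exploiting the extra slack provided by the assumption $W \geq 2w$. First I would adapt the threshold definitions (uncritical / critical / hypercritical) by replacing $1.5t$ with $\thres$, but keep \textsc{Explore1} and the \textsc{Push} operation verbatim. Since Lemma~\ref{lem:simplePush} only uses the structural part of the definitions (levels, reachability, choice of pushed pebble) and never appeals to the specific numeric value $1.5t$, it carries through unchanged; hence the potential function argument still yields polynomial termination of the modified algorithm.

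The main technical step is to redo Claim~\ref{cla:firstMain} under the new definitions. The proof should follow the same three-case split (good system with $\act(S)\neq\emptyset$ either as a tree with one critical but not hypercritical node / an isolated node / a cycle with no critical node; system with at least two critical nodes, or a cycle with a critical node; and system with a hypercritical node), with the lower bound for each system computed with the new threshold. The crux will be verifying the tree case when $|\act(S)| \geq 2$: after summing $(\thres - W + 1)$ for the critical node, $(\thres - W - w + 1)$ for the remaining activated nodes, and $W$ for each incoming rock, one must check that the total strictly exceeds $|\act(S)|t - w$ (resp.\ $|\act(S)|t$ in the bad-system case and hypercritical case). I expect this is exactly where the hypothesis $W \geq 2w$ enters: it forces $\lfloor W/2 \rfloor + 1 > W/2 \geq w$, which produces a surplus of $\approx (|\act(S)|-2) \cdot W/2 - (|\act(S)|-1)w$ that is non-negative when $W \geq 2w$. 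The remaining cases (singleton tree, isolated node, cycle without critical, and the hypercritical case) should drop out by straightforward substitution once the threshold is replaced.

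For the complementary range $t \geq 2W$, I would not run the modified algorithm at all but instead invoke the algorithm of Gairing et al.~\cite{gairing04}, which either certifies $\OPT \geq t+1$ or returns a schedule of makespan $t + W - 1$. I then conduct binary search for the smallest $t$ for which any of the two algorithms (modified Algorithm 1 when $t < 2W$, Gairing et al.\ otherwise) returns a schedule; for this $t$ we have $\OPT \geq t \geq W$. The ratio is $\thres/\OPT \leq 1 + \lfloor W/2 \rfloor / W$ in the first regime since $\OPT \geq W$, and $(t + W - 1)/\OPT \leq 1 + (W-1)/(2W) \leq 1 + \lfloor W/2 \rfloor/W$ in the second regime since $\OPT \geq 2W$; combining gives the claimed approximation ratio and finishes the proof.

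The hardest part will be the arithmetic in the tree case of Claim~\ref{cla:firstMain}: one needs to carefully track the slack $\lfloor W/2 \rfloor + 1$ versus $W/2$ and use $W \geq 2w$ precisely once, so that the inequality becomes tight exactly at the boundary. Everything else—termination, reachability, the level invariant, and the high-$t$ regime—should be essentially unchanged from the $1.5$-approximation analysis.
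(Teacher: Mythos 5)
Your proposal is correct and follows essentially the same route as the paper's own proof: redefine the critical thresholds with $\thres$ in place of $1.5t$, reuse \textsc{Explore1}, \textsc{Push}, and Lemma~\ref{lem:simplePush} verbatim, redo the case analysis of Claim~\ref{cla:firstMain} using $\lfloor W/2\rfloor+1>W/2\geq w$, and fall back on Gairing et al.\ for $t\geq 2W$ with the same ratio computation. The only minor imprecision is that $W\geq 2w$ is needed not just in the tree case with $|\act(S)|\geq 2$ but also in the hypercritical case, and the tree-case surplus need only exceed $-w$ (not be non-negative); neither affects the validity of the argument.
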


\end{document}